%% file: main.tex
\documentclass[acmsmall]{acmart}

\usepackage{mathpartir} 
\usepackage{stmaryrd} 
\usepackage{listings}
\usepackage{enumitem}
\setlist[description]{leftmargin=0.5\parindent}
\usepackage{multicol}
\usepackage{tablefootnote}
\usepackage[toc,page]{appendix}
\usepackage{xcolor}

\setlist[itemize]{parsep=0pt}
\setlist[enumerate]{parsep=0pt}

\AtBeginDocument{%
  }

\setcopyright{cc}
\setcctype{by}
\acmDOI{10.1145/3798207}
\acmYear{2026}
\acmJournal{PACMPL}
\acmVolume{10}
\acmNumber{OOPSLA1}
\acmArticle{99}
\acmMonth{4}
\received{2025-10-09}
\received[accepted]{2026-02-17}

\input{macros}

\begin{document}

\title{Handling Exceptions and Effects with Automatic Resource Analysis}
\author{Ethan Chu}
\authornote{The first two authors contributed equally to the paper.}
\orcid{0009-0005-6041-0313}
\email{ethanchu@cmu.edu}

\author{Yiyang Guo}
\authornotemark[1]
\orcid{0009-0004-3869-4404}
\email{yiyangg@alumni.cmu.edu}

\author{Jan Hoffmann}
\orcid{0000-0001-8326-0788}
\email{jhoffmann@cmu.edu}

\affiliation{%
  \institution{Carnegie Mellon University}
  \city{Pittsburgh}
  \country{USA}
}

\renewcommand{\shortauthors}{Ethan Chu, Yiyang Guo, Jan Hoffmann}

\begin{abstract}
  There exist many techniques for automatically deriving parametric
  resource (or cost) bounds by analyzing the source code of a program.
  These techniques work effectively for a large class of programs and
  language features.
  However, non-local transfer of control as needed for exception or effect
  handlers has remained a challenge.

  This paper presents the first automatic resource bound analysis that
  supports non-local control transfer between exceptions or effects
  and their handlers.
  The analysis is an extension of type-based automatic amortized
  resource analysis (AARA), which automates the potential method of
  amortized analysis.
  It is presented for a simple functional language with lists and
  linear potential functions.
  However, the ideas are directly applicable to richer settings and
  implemented for Standard ML and polynomial potential functions.

  Apart from the new type system for exceptions and effects,
  a main contribution is a novel syntactic type-soundness theorem that
  establishes the correctness of the derived bounds with respect to a
  stack-based abstract machine.
  An experimental evaluation shows that the new analysis is capable of
  analyzing programs that cannot be analyzed by existing methods and
  that the efficiency overhead of supporting exception and effect
  handlers is low.
\end{abstract}

\begin{CCSXML}
  <ccs2012>
  <concept>
  <concept_id>10003752.10010124.10010125.10010126</concept_id>
  <concept_desc>Theory of computation~Control primitives</concept_desc>
  <concept_significance>500</concept_significance>
  </concept>
  <concept>
  <concept_id>10003752.10010124.10010138.10010143</concept_id>
  <concept_desc>Theory of computation~Program analysis</concept_desc>
  <concept_significance>500</concept_significance>
  </concept>
  <concept>
  <concept_id>10003752.10003790.10011740</concept_id>
  <concept_desc>Theory of computation~Type theory</concept_desc>
  <concept_significance>500</concept_significance>
  </concept>
  <concept>
  <concept_id>10003752.10003753.10010622</concept_id>
  <concept_desc>Theory of computation~Abstract machines</concept_desc>
  <concept_significance>500</concept_significance>
  </concept>
  <concept>
  <concept_id>10003752.10010124.10010131.10010134</concept_id>
  <concept_desc>Theory of computation~Operational semantics</concept_desc>
  <concept_significance>500</concept_significance>
  </concept>
  </ccs2012>
\end{CCSXML}

\ccsdesc[500]{Theory of computation~Control primitives}
\ccsdesc[500]{Theory of computation~Program analysis}
\ccsdesc[500]{Theory of computation~Type theory}
\ccsdesc[500]{Theory of computation~Abstract machines}
\ccsdesc[500]{Theory of computation~Operational semantics}

\keywords{Automatic Resource Bound Analysis, Cost Analysis, Type Systems,
  Algebraic Effects, Computational Effects, Effect Handlers, Exceptions,
  Automatic Amortized Resource Analysis, K-Machine}

\maketitle

\section{Introduction}

The goal of \emph{automatic resource bound analysis} is to statically
and automatically infer information on the resource use of a program.
In this context, resources are quantities used during the execution of the
program, such as time, energy, or memory.
Such information is useful for users of a software
library~\cite{IyerAC22}, to schedule jobs in a data center, to ensure
safety and security in resource constraint settings such as embedded
systems~\cite{Wilhelm08,CarbonneauxHRZ13} and smart
contracts~\cite{DasBHP19,AlbertCGRR21}, and to automatically detect
performance bugs~\cite{InferWeb}.
There are many existing methods for inferring resource information, including
type
systems~\cite{Jost03,VasconcelosJFH15,CicekBGGH16,AvanziniD17,WangWC17,HandleyVH19},
deriving and solving recurrence
relations~\cite{Wegbreit75,Grobauer01,Albert07,Kincaid2017popl,CutlerLD20}, and
other static analyses~\cite{GulwaniMC09,AvanziniM13,BrockschmidtEFFG14}.
Most techniques are designed to derive (worst-case) upper
bounds. However, some
analyses focus on (best-case) lower bounds~\cite{AlbertGM13,FrohnNHBG16,NgoDFH16}
and expected cost~\cite{NgoCH17,AvanziniMS20,ChatterjeeGMZ24}.
Automatic resource bound analysis is an undecidable problem, but the
work in the area continues to expand the class of bounds that can be
derived, to improve the efficiency of analyses, and to expand the set of
supported language features.

\paragraph{Non-Local Control}

Some of the most challenging programming language features for
automatic resource analysis involve non-local transfer of control,
such as exception handlers,
continuations~\cite{Felleisen88,Reynolds93}, and effect
handlers~\cite{PlotkinP09,BauerP15}.

\emph{Exceptions} are a popular language feature that can redirect control flow
by dynamically transferring control to an \emph{exception handler}, which often
resides in a previously called function.
It is comparatively straightforward to support errors or exceptions
without handlers~\cite{HoffmannW15} when deriving worst-case bounds.
The reason is that early termination with an error reduces the
resource cost in comparison to the non-error case.
Exception handlers are challenging because resource analyses are
generally designed around regular control-flow patterns such as
structural recursion or simple for loops.
We are only aware of one resource analysis that has limited support
for intraprocedural exceptions handlers~\cite{Albert07}, that is,
handlers that catch exceptions raised in the same function call.
We are not aware of a resource analysis that supports general
exceptions where raised exceptions cannot be statically linked to a
handler in the same function.

\emph{Algebraic effects} and the associated paradigm of programming with effect
handlers have gained popularity in the programming language community in the
past few years~
\cite{tang2024modaleffecttypes,DBLP:journals/pacmpl/RooijK25,Pottier23,VilhenaP21}.
They can be seen as a generalization of exceptions where the user
defines a set of operations (or effects), whose implementation is
context-dependent and provided by an effect handler.
Compared to exception handlers, effect handlers provide additional
difficulties for resource analysis:
they can access a \emph{delimited continuation}~\cite{DBLP:conf/tlca/KiselyovS07,Felleisen88}
to resume the computation at an operations' call site
or to provide a new effect handler for the deferred computation.
Analyzing programs with continuations is hard because it is difficult
to predict the dynamic control-flow.
To the best of our knowledge there are no resource analyses that
support continuations, delimited continuations, or effect handlers.

\paragraph{Resource Bounds for Programs with Exception and Effects}

This paper introduces the first resource bound analysis for computational
effects and their handlers.
In addition, it is the first resource bound analysis that fully
supports exception handlers and exceptions that can transfer
control between different functions.
We extend a type-based technique called automatic amortized resource
analysis (AARA)~\cite{HoffmannJ21}, which was introduced by
Hofmann and Jost~\cite{Jost03}.
In AARA, types constitute resource bounds, and type derivations are
certificates of the soundness of the bound with respect to an
operational cost semantics.
It is parametric in the resource and derives high-water mark bounds
for non-monotone resource metrics.
%
%
The analysis supports recursive types~\cite{GrosenKH23}, higher-order
functions~\cite{Jost10}, arrays, and references~\cite{LichtmanH17}.
AARA can be seen as an automation of the potential methods of
amortized analysis, and the use of potential functions is crucial for
the treatment of exception handlers.

\paragraph{\bf Contributions}

We conservatively extend AARA with typing rules for exceptions, computational
effects, and handlers supporting one-shot continuations so that we can account
for resource cost that depends on the non-local transfer of control.
The cost of executing exception or effect handlers is bounded by a function of
the payload of exceptions or effects.
The analysis is parametric in the type of this payload.
In a nutshell, the idea is that a handler establishes a contract that
consists of a potential function, which specifies the amount of
potential that needs to be provided by an exception or effect.
In this article, we focus on linear worst-case bounds for a simple call-by-value
language with computational effects with handlers, higher-order functions,
lists, sums, and products.
However, extending the techniques to more language features and a
larger set of bounds is orthogonal.
In fact, we implemented the technique for recursive types and
polynomial bounds.
We also show how exceptions, exception handlers, and their respective
typing rules can be derived from effects and effect handlers.
  {\em Besides the design of the resource-aware type system, the main technical
    contributions are the soundness proof, the implementation, and an
    experimental evaluation.}

\paragraph{Soundness Proof}

A technical innovation of this work is a novel proof technique for
establishing the soundness of the derived bounds with respect to a
stack-based abstract machine that specifies the cost and behavior of
programs.
Almost all past soundness proofs are based on big-step cost semantics
and expressions in
let-normal-form~\cite{Jost03,HoffmannAH10,HoffmannW15,KahnH21}.
Here we use a small-step style soundness argument that includes
progress and preservation theorems.
The small-step approach with an abstract machine supports
continuations, which are needed to model computational effects.
Instead of let-normal form, we use a modal separation between values and
expression in the manner of fine-grain call-by-value \cite{levy_2003}. This
explicit separation of pure values and possibly effectul computations
streamlines the typing rules, evaluation rules, and the proof.
Another soundness proof for AARA (without exceptions) based on a
small-step semantics has been given by Grosen et
al.~\cite{GrosenKH23}.
The difference is that the proof in this paper is fully syntactic and
does not rely on a logical relation, uses a stack-based abstract
machine, and a modal separation of expressions and values.
Another approach is to use a big-step cost semantics and a logical
relation to prove soundness~\cite{RajaniGDH20}.

\paragraph{Implementation and Experimental Evaluation}

We have implemented the analysis for Standard ML (SML) programs to evaluate the
effectiveness of the analysis, its overhead compared to a prior implementation
of AARA without support for exception handlers or effects, and the quality of the
derived bounds.
The implementation uses the MLton SML compiler~\cite{Weeks06} as a
front end to convert SML programs into an intermediate form.
In contrast to the presentation in the paper, we support bounds that
are polynomials in sizes of values of recursive types.
We evaluate the implemented analysis with 20 benchmarks, which all use
either exceptions and effects. Of these, 14 cannot be analyzed with existing
techniques, as they either handle exceptions or use effects.
We find that the implementation efficiently finds tight worst-case
bounds for these benchmarks and that
our
technique has low overhead compared to existing AARA implementations
like RaML~\cite{HoffmannAH12a}.

\section{Overview}
\label{sec:overview}

\input{overview}

\section{Type System}
\label{sec:statics}

In this section, we present the syntax of a simple functional language
and type rules for deriving resource bounds. Our language starts with
exceptions and their handlers only, which we will extend to support effects
as well. We also briefly discuss type inference.

\input{syntax}

\input{statics}

\subsection{Type Inference}
\label{sec:inference}

The new rules for exceptions and exceptions handlers are designed so
that the existing type inference algorithm for AARA directly applies
to our type system.
The main idea of the type inference is to reduce the problem of
finding resource annotations to off-the-shelf LP solving.

In this paper, we present the rules in a declarative way and to
perform type inferences we have to first reformulate the rules in a
more algorithmic way.
This includes integrating the structural rules into the syntax
directed ones and fixing finite representation for the sets of
function types.
The latter can be achieved by representing the types with a set of
linear constraints.
The new rules do not present additional challenges for designing
the algorithmic type rules.

Using the algorithmic type rules, type inference for AARA consists of
the following steps.
\begin{enumerate}
  \item Perform Hindley–Milner type inference for the structural types.
  \item Annotate the type derivations with yet unknown
        variables for the resource annotations.
  \item Generate linear constraints based on the algorithmic type rules.
  \item Find a solution for the constraints with an LP solver;
        minimizing the initial potential.
\end{enumerate}
This strategy works for both functions and closed expressions. For
functions, the initial potential is the potential of the argument.
More details about the type inference are in the
literature~\cite{Jost03,HoffmannAH12,HoffmannJ21}.

\section{Soundness via K-Machine}

This section contains the novel soundness proof that establishes the
correctness of the derived bounds via progress and preservation using
the K machine, a stack-based abstract machine.

\subsection{K-Machine Cost Semantics}
\label{sec:dynamics}
\input{dynamics}

\subsection{Type Soundness}
\label{sec:soundness}

\input{soundness}

\section{Evaluation}
\label{sec:evaluation}
\input{evaluation}


\vspace{-1ex}
\section{Related Work}
\label{sec:related}

Automatic resource bound analysis has been approached from many
different angles, but works on non-local control transfer are rare.

\vspace{-1.5ex}
\paragraph{Recurrence Relations}

A traditional approach to automatic resource bound analysis, which was first studied by Wegbreit~\cite{Wegbreit75}, is a two-step
process that consists of extracting then solving recurrence
relations.
It has been studied for functional
programs~\cite{Grobauer01,CutlerLD20,DannerL22}, logic
programs~\cite{DebrayGHL94,Klemen0GMH19}, and imperative languages and
object-oriented languages in the COSTA
project~\cite{Albert07,Puebla08,AlbertAGP11a,AlbertGM13,AlbertFR15}
and by using techniques such as abstract interpretation and symbolic
analysis~\cite{FloresH14,Kincaid2017popl,KincaidBBR2017}.

The only system that covers exception handlers is
COSTA~\cite{Albert07,Albert12}.
The analysis operates on Java bytecode and can derive bounds for
functions that have exceptions handlers which catch exceptions raised in the same function call.
The idea is to turn code that raises an exception into a jump to the
handler.
In contrast, this paper focuses on non-local transfer of control where
raising an exception can transfer control to a different function.

\vspace{-1.5ex}
\paragraph{Term Rewriting}

Most prominently, analyses for term rewriting systems focus on
termination~\cite{BaaderN98,GieslTSF04}. However, resource bound
analysis has also been studied. Two prominent implementations are
AProVE~\cite{NoschinskiEG13,BrockschmidtEFFG14,FrohnNHBG16} and
TCT~\cite{AvanziniM13,HirokawaM14}.
It would be possible to model exceptions with non-determinism in term
rewriting, but it is not clear if this abstraction would lead to
satisfactory resource bounds.
Using types (like in this work) is a natural approach to establish a
modular interface between handling and raising exceptions.

\vspace{-1.5ex}
\paragraph{Other Static Analyses}

Other automatic resource analyses for imperative programs are based on
loop analysis and
instrumentation~\cite{GulwaniMC09,GulwaniJK09,GulwaniZ10,BlancHHK10},
ranking functions~\cite{Zuleger11,ChatterjeeFG17}, and similar
techniques~\cite{SinnZV14}.
In contrast to this paper, these works do not handle non-local
transfer of control as needed for exception handlers.

\vspace{-1.5ex}
\paragraph{Soundness via Abstract Machine}

Whereas our work is the first to prove soundness of AARA on a language with
effects by examining the semantics of a stack-based abstract machine, this
approach is well-established in the wider effect handler community
\cite{10.1145/3519939.3523710,10.1145/3622831}. Most closely related is Voigt
et al's work \cite{10.1145/3763155} demonstrating an effect system that can
ensure that resources (like file handles) are acquired and released safely.

\vspace{-1.5ex}
\paragraph{Automatic Amortized Resource Analysis}

Most closely related to this paper is work on AARA. The technique was introduced by Hofmann and Jost~\cite{Jost03} to infer linear
heap-space bounds for first-order functional programs.
In the past two decades, AARA has been extended to many
programming-language
features~\cite{Jost10,HoffmannW15,LichtmanH17,GrosenKH23}, to
non-linear bounds~\cite{HoffmannAH10,KahnH19,HofmannLOMZ22}, and to
more advanced type systems that forgo full
automation~\cite{RajaniGDH20,WangKRPH20}.
Details can be found in a survey by Jost and
Hoffmann~\cite{HoffmannJ21}.
Many works on AARA consider (deterministic) functional programs with
strict evaluation.
However, there are also works on bounds for
parallel~\cite{HoffmannS15,MullerH24} and
lazy~\cite{SimoesVFJH12,jar2017} evaluation, lower
bounds~\cite{NgoDFH16}, and bounds on expected cost of probabilistic
programs~\cite{WangKH20}.

With the exception of Grosen et al.~\cite{GrosenKH23}, prior work on
AARA establishes the soundness of the derived bounds with respect to a
big-step cost semantics.
Grosen et al.\ present a semantic soundness prove using a small-step
semantics and a logical relation.
This paper presents the first syntactic soundness proof using a
small-step cost semantics.
It is also the first paper that establishes soundness for a
stack-based abstract machine and expressions with a modal separation
of computations and values.

Potential-based reasoning has also been integrated with (concurrent)
separation logic~\cite{Atkey10,ChargueraudP19,MevelJP19}, and Hoare
logic~\cite{CarbonneauxHZ15} to reason about imperative programs.
While the implementation of AARA in Resource Aware
ML~\cite{HoffmannAH12a,HoffmannW15} supports exceptions, no
prior work on AARA considers either exception or effect handlers, or even non-local transfer of control, which is the focus of this paper.

\vspace{-1.5ex}
\paragraph{Other Type-Based Systems}
Other type-based approaches for resource bound analysis are based on
sized types~\cite{VasconcelosJFH15,AvanziniD17}, refinement
types~\cite{Crary00,CicekGA15,CicekBGGH16,HandleyVH19,KnothWHP19,WangKRPH20},
indexed types~\cite{WangWC17}, linear dependent
types~\cite{LagoG11,LagoP13a,LagoP13b}, and structural dependent
types~\cite{Danielsson08,NiuSGH22,Niu023,GrodinNSH24}.
Avanzini et al.~\cite{AvanziniLM15} studied code transformations to
simplify bound inference for higher-order functions.
Sekiyama et al.~\cite{sekiyama2023,kawamata2024} studied analysis of programs
with algebraic effects and handlers using refinement types, as well as
verification of programs with delimited continuations.
These works do not include analyses that automatically derive bounds
for programs with exception or effect handlers.

\vspace{-1ex}
\section{Conclusion and Future Work}
\label{sec:concl}

This paper presents an extension of type-based AARA to exceptions, effects,
and their handlers.
The novel type system supports non-local transfer of control, and its
soundness is established using progress and preservation theorems for
an abstract machine with a control stack.
This is a departure from the traditional soundness proof for AARA,
which uses a big-step cost semantics.
We expect that this alternative formulation of soundness will find
further applications.

While it is possible to model exceptions and effects in a
big-step semantics, control stacks enable a simpler and more local
treatment.
Similarly, it is possible to simulate exceptions and effects with sum
types and higher order functions. However, this would require program
transformations that propagate effects and continuations in deep
subexpressions, which would encumber the analysis.
Supporting general (callcc style) continuations using AARA seems
to require a different approach, since continuations enable a program to use
frames on a control stack multiple times.
Other potential avenues for future work include supporting a richer effect type
system, such as with row polymorphism, or allowing multi-shot continuations.

\section*{Data Availability Statement}
In section \ref{sec:evaluation}, evaluation, we discuss how we implemented AARA
with exceptions and effects as a tool named \tool{}, and evaluated it against 21
benchmark programs. We have made the source code of \tool{} and the 21
benchmarks available as an artifact \cite{aara_effects_artifact}, which also
contains a Dockerfile that automatically builds \tool{} into an executable.
Additionally, the artifact contains an executable of RaML \cite{ramlWeb}, a
prior implementation of AARA that we are comparing \tool{} against. We have
submitted the artifact for evaluation.

\begin{acks}
  This material is based upon work supported by Jane Street Group, LLC
  and the \grantsponsor{NSF}{National Science
    Foundation}{https://doi.org/10.13039/100000001} under grants
  no.~\grantnum{NSF}{2311983} and ~\grantnum{NSF}{2525102}.
  Any opinions, findings, and conclusions or recommendations in this
  material are those of the authors and do not necessarily reflect the
  views of the NSF.
\end{acks}

\bibliographystyle{ACM-Reference-Format}
\bibliography{bib/lit.bib,bib/publications.bib}

\newpage
\input{appendix}

\end{document}

%% file: macros.tex
\usepackage{listings}
\definecolor{background_color}{RGB}{255, 255, 255}
\definecolor{string_color}    {RGB}{180, 156,   0}
\definecolor{identifier_color}{RGB}{  0,   0,   0}
\definecolor{keyword_color}   {RGB}{ 64, 100, 255}
\definecolor{comment_color}   {RGB}{  0, 117, 110}
\definecolor{number_color}    {RGB}{ 84,  84,  84}
\newcommand{\tool}[0]{RaML+}

\newcommand{\code}[1]{\textsf{#1}}

\newcommand{\share}[3]{\ensuremath{#1 \mathop{\curlyveedownarrow} (#2, #3)}}

\newcommand{\abt}[1]{\ensuremath{\mathsf{#1}}}

\newcommand{\trivS}[0]{\ensuremath{\left<\right>}}
\newcommand{\pairS}[2]{\ensuremath{\abt{pair}(#1;#2)}}
\newcommand{\inlS}[1]{\ensuremath{\abt{inl}(#1)}}
\newcommand{\inrS}[1]{\ensuremath{\abt{inr}(#1)}}
\newcommand{\funS}[3]{\ensuremath{\abt{fun}(#1.#2.#3)}}
\newcommand{\linfunS}[2]{\ensuremath{\abt{linfun}(#1.#2)}}
\newcommand{\nilS}[0]{\ensuremath{\abt{nil}}}
\newcommand{\consS}[2]{\ensuremath{\abt{cons}(#1;#2)}}
\newcommand{\dcontS}[1]{\ensuremath{\abt{dcont}(#1)}}

\newcommand{\retS}[1]{\ensuremath{\abt{val}(#1)}}
\newcommand{\tickS}[1]{\ensuremath{\abt{tick}\{#1\}}}
\newcommand{\letS}[3]{\ensuremath{\abt{let}(#1;#2.#3)}}
\newcommand{\appS}[2]{\ensuremath{\abt{app}(#1;#2)}}
\newcommand{\linappS}[2]{\ensuremath{\abt{linapp}(#1;#2)}}
\newcommand{\casepairS}[4]{\ensuremath{\abt{casepair}(#1;#2.#3.#4)}}
\newcommand{\casevoidS}[1]{\ensuremath{\abt{casevoid}(#1)}}
\newcommand{\casesumS}[5]{\ensuremath{\abt{casesum}(#1;#2.#3;#4.#5)}}
\newcommand{\caseS}[5]{\ensuremath{\abt{caselist}(#1;#2;#3.#4.#5)}}
\newcommand{\raiseS}[1]{\ensuremath{\abt{raise}(#1)}}
\newcommand{\tryS}[3]{\ensuremath{\abt{try}(#1;#2.#3)}}
\newcommand{\doS}[2]{\ensuremath{\abt{do}[#1](#2)}}

\newcommand{\newhandlerS}[8]{\ensuremath{\abt{handler}(#1;\{#2: #3.#4.#5\}_{#2 \in #6}; #7. #8)}}

\newcommand{\exty}[2]{\ensuremath{#1 \odot #2}}

\newcommand{\unitT}[0]{\ensuremath{\abt{unit}}}
\newcommand{\prodT}[2]{\ensuremath{{#1} \times {#2}}}
\newcommand{\voidT}[0]{\ensuremath{\abt{void}}}
\newcommand{\sumT}[2]{\ensuremath{{#1} + {#2}}}
\newcommand{\funset}{\mathcal{T}}

\newcommand{\funT}[3]{\ensuremath{#1 \to \exty{#2}{#3}}}
\newcommand{\linfunT}[3]{\ensuremath{#1 \multimap \exty{#2}{#3}}}
\newcommand{\liT}[1]{\ensuremath{\abt{L}(#1)}}

\newcommand{\effectT}[3]{\ensuremath{#1 : #2 \Rightarrow #3}}

\newcommand{\float}[0]{\ensuremath{\mathbb{R}}}
\newcommand{\integer}[0]{\ensuremath{\mathbb{Z}}}

\newcommand{\aty}[2]{\ensuremath{\langle {#1},{#2}\rangle}}

\newcommand{\Q}[0]{\ensuremath{\mathbb{Q}}}

\newcommand{\Rule}[4][]{\ensuremath{\inferrule*[lab={#2},#1]{#3}{#4}}}

\newcommand{\isZero}[1]{\ensuremath{#1 = |#1|}}

\newcommand{\init}{\quad{\textsf{initial}}}
\newcommand{\final}{\quad{\textsf{final}}}

\newcommand{\eval}{\triangleright}
\newcommand{\steps}{\mapsto{}}
\newcommand{\return}{\triangleleft}
\newcommand{\exnret}{\blacktriangleleft}

\newcommand{\takes}{\,\triangleleft\!:}

\newcommand{\proves}{\vdash}

\newcommand{\emptyctx}{\cdot}
\newcommand{\emptystk}{\epsilon}

\newcommand{\funty}[2]{{#1}\rightarrow{#2}}
\newcommand{\abs}[1]{\left|{#1}\right|}

\newcommand{\refappendix}{\ref}

%% file: overview.tex
In this section, we describe the basic idea and state-of-the-art of
type-based automatic amortized resource analysis (AARA), and the challenges that have hindered these systems from analyzing programs with exceptions.
We then discuss the main technical innovations of this work, namely
(in Section~\ref{sec:overview-exn}) a modular extension of AARA that can be combined with existing type systems to derive bounds for programs with exception
handlers and (in Section~\ref{sec:overview-effect}) effect handlers.


\subsection{Type-Based Automatic Amortized Resource Analysis}
\label{sec:overview-aara}

Type-based AARA was introduced by Hofmann and Jost~\cite{Jost03}
to automatically derive \emph{linear bounds} on the heap-memory use of
\emph{first-order} functional programs. Its appeal is that
\begin{itemize}
  \item It derives tight symbolic worst-case bounds for many typical functions,
  \item It is proven sound with respect to a cost semantics,
  \item Type derivations are easily-checkable certificates of the derived bounds,
  \item Type (and therefore bound) inference can be reduced to linear programming (LP), and
  \item The integration of types with the physicist's method of amortized
        analysis makes the system naturally compositional.
\end{itemize}

Maybe surprisingly, these properties of type-based AARA have been
preserved in conservative extensions of the original work to arbitrary
user-defined resource metrics~\cite{Jost09}, non-linear
bounds~\cite{HoffmannAH10,KahnH19,HofmannLOMZ22}, general recursive
types~\cite{GrosenKH23}, lower bounds~\cite{NgoDFH16}, and higher-order
functions~\cite{Jost10}.
A more complete discussion can be found in a recent
survey~\cite{HoffmannJ21}.
To focus on the main ideas, we develop the extension of AARA to exceptions and
effects based on an eager, higher-order functional language with
lists and a type system that only encodes linear resource bounds, which we call linear-bound AARA.
However, we claim that this extension can apply to more complex AARA systems,
e.g. one capable of deriving polynomial resource bounds, with only minor changes.
We omit the discussion of said changes from this paper, but as an example, the
evaluation conducted in section \ref{sec:evaluation} uses polynomial AARA
extended with exceptions and effects.

\paragraph{Cost Model}

Various cost and resource models for functional languages have been
defined for monotone resources such as runtime and
non-monotone ones such as space use.
To abstract from a specific cost model, we follow previous work on
resource analysis \cite{GrodinNSH24,Hoffmann11} and equip the language with
a {\sf tick} effect -- we can think of the expression \code{tick q} as
adding the fixed rational number $q$ to some global counter denoting
the cost incurred.
%
Even though the {\sf tick} effect takes a fixed argument,
we can simulate non-constant ticks. For example, we can define a
function \code{tick\_nat} such that \code{tick\_nat n} will ultimately
perform \code{n} ticks. This is achieved by representing \code{n} as a
\code{unit list} of length \code{n}, to which we apply a recursive
function that traverses the list and performs \code{tick 1} at each
element.

The tick resource
model is a well-established and adequate methodology.
%
%
To define a sound cost model for time (or space), it needs to be
justified with respect to some machine model. This is known as a
bounded implementation and was pioneered by Blelloch and Greiner
\cite{NESL95,BlellochG96}. It is well established that constant tick annotations
correspond to the execution of an interpreter on a random-access
machine (RAM), modulo constant factors. Such cost models do not
necessarily incur cost proportional to the theoretical size increase
resulting from substitution during beta reduction
\cite{DalLago2005}. Instead, we can leverage an abstract machine model that
simulates the lambda calculus more efficiently, such as explicit
substitution/sharing as described by Accattoli \cite{Accattoli2019}.

In most of the examples throughout this paper, we will instrument the code with
ticks to measure the number of \emph{evaluated function calls} and
\emph{exception or effect handler evaluations}. This is an interesting resource
metric because in a functional program, the number of function calls and
handler evaluations performed is proportional to its asymptotic runtime
complexity. Additionally, it can be easily extended to be more realistic --
there has been work demonstrating how to select the amounts to {\sf tick} by
such that the derived bounds correspond to the actual clock cycles needed to
execute the compiled code \cite{Jost10}.

An immediate concern readers may have with our chosen resource metric is that it
assumes constant cost whenever a handler is run, rather than a dynamic cost,
which is often the case when locating the correct effect handler to use. Our
type system sidesteps this concerns in two ways.
First, our handlers require explicit forwarding \cite{DBLP:conf/icfp/KammarLO13,
  Hillerstrom2024}, which is to say that each handler must handle {\it all}
effects that can be performed by the computation being handled. For the
effects that it does not morally handle, it still must explicitly handle,
incur cost via {\sf tick}, then re-perform the effect, thus forwarding it.
This ensures that whenever an effect is performed, control flow is always
transferred to the closest enclosing effect handler, instead of having to
dynamically search for a handler that handles the given effect. Then the {\sf
    tick}'s triggered at each forwarding site effectively simulate the
non-constant cost of identifying a suitable handler.
Next, by enforcing a one-shot discipline for delimited continuations in
handlers, we ensure that non-handler stack frames are never executed more than
once, so the runtime can simply store a pointer to the closest enclosing effect
handler instead of having to copy or shuffle around stack frames
\cite{Sivaramakrishnan2021}.
Together, this justifies incurring only constant cost at each handler
evaluation.

\paragraph{Potential Method}
The potential method of amortized analysis was introduced \cite{tarjan85} to
analyze the cost of a sequence of data structure operations.
The key idea is to define a potential function for the data structure and the
amortized cost of an operation. Then one has to demonstrate that for all
possible executions, the amortized cost and the potential carried by the data
structure is sufficient to cover both the actual cost and the potential of the
resulting data structure.
Then initial potential and the sum of the amortized costs is an upper bound on
the resource consumption of the program.

\paragraph{Resource Annotated Types}
AARA is based on the potential method of amortized analysis.
To automate amortized analysis, it fixes the set of possible potential functions
via types. Specifically, types induce potential functions that map data
structures of the given type to potential, given as a non-negative numbers.
Then, typing rules ensure that terms which introduce or eliminate these
types preserve potential in a manner consistent with the potential functions.
Note that in AARA, the amortized cost is always zero, so the cost of a program
is fully bound by the initial potential.

On top of a simple type system with functions, lists, sums, and products, AARA
equips certain types with potential annotations:
\begin{enumerate}
  \item List types are annotated with a constant potential carried by each
        element of the list. For example, the type $L^q(\unitT)$ encodes a unit
        list that carries $q$ potential per element. In other words, it defines
        the potential function $\Phi(v : L^q(\unitT)) = q|v|$.
  \item Each variant of a sum type is annotated with some constant potential
        carried by an injection into that variant. For example, the type
        $\unitT^p + \unitT^q$ encodes a boolean that carries $p$ potential when
        true and $q$ potential when false. Concretely, it defines the potential
        function $\Phi(v : \unitT^p + \unitT^q) = p \text{ if } v =
          \inlS{\trivS}, q \text{ if } v = \inrS{\trivS}$.
  \item In functions, the argument type is annotated with a constant potential
        that must be provided when applying the function, and the result type is
        annotated with a constant potential that gets returned alongside the
        result. For example, a function of type $\tau_1 \to^p_q \tau_2$ requires
        $p$ potential to be provided along with its argument, and returns $q$
        potential along with its result.
\end{enumerate}
\vspace{-2ex}

\begin{figure}[t]
  \centering
  \begin{minipage}[c]{0.9\linewidth}
    \begin{lstlisting}
fun sqdist (v1 : real list, v2 : real list) : real =
   case (v1,v2) of ([],[]) => 0.0
    | (x::xs,y::ys) => (x - y) * (x - y) + (R.tick 1; sqdist (xs, ys))
    | ([],_::_) => raise Emis1
    | (_::_,[]) => raise Emis2\end{lstlisting}
  \end{minipage}
  \vspace{-2ex}
  \caption{The SML function \code{sqdist} computes the square of the Euclidean
    distance between two $n$-dimensional vectors (lists of reals). The function
    \code{R.tick} is used to specify resource use.}
  \vspace{-3ex}
  \label{fig:dist}
\end{figure}

\paragraph{Example}

Consider the function \code{sqdist} in Figure~\ref{fig:dist}, which computes the
square of the Euclidean distance between two $n$-dimensional vectors represented
as lists of reals ($L(\float)$).
It recursively iterates down both lists until at least one of the lists is
empty,
raising exception $\code{Emis}i$ if the $i$-th list is shorter ($i = 1,2$).
Since the resource metric chosen is the \emph{number of evaluated function calls},
the call \code{sqdist(v1,v2)} costs $\min(|\code{v1}|,|\code{v2}|)$, which we
would like to derive using AARA.
Indeed, using the simple and local typing rules of linear-bound AARA, we can justify numerous possible types for \code{sqdist}, such as
\begin{mathpar}
  {L^1(\float) \times L^0(\float)} \to^0_0 \float

  {L^0(\float) \times L^1(\float)} \to^3_3 \float

  {L^{0.5}(\float) \times L^{0.5}(\float)} \to^5_5 \float
\end{mathpar}
Given the function call $\code{sqdist}(v_1,v_2)$, these types express the cost
bounds $|v_1|$, $|v_2| + 3 - 3$, and $0.5|v_1| + 0.5|v_2| + 5 - 5$,
respectively.
All three bounds are upper bounds on the actual cost of
$\min(|\code{v1}|,|\code{v2}|)$. Which bound is the best depends on the
context in which the call $\code{sqdist}(v_1,v_2)$ appears. To facilitate this,
our type system actually expresses function types as a set of possible types.
During type inference, this set is governed by linear constraints that we infer.
For example, \code{sqdist} would be inferred to have the following type:
\vspace{-1ex}
$$
  \code{sqdist} : \{ {L^{q_1}(\float) \times L^{q_2}(\float)} \to^p_{p'} \float %
  \mid q_1 + q_2 \geq 1 \land p \geq p' \}
  \vspace{-1ex}
$$
Then at each callsite of \code{sqdist}, an LP solver will select the best concrete
choices for the potential variables $q_1,q_2,p,p'$ to minimize the cost of the
larger program in which \code{sqdist} is called. In this sense, our type system
supports {\it resource polymorphism}. However, for simplicity's sake, most
examples in this paper will only discuss monomorphic type signatures when typing
functions.

Compositionality is achieved by assigning potential to result types.
For example, if we wish to type the expression
$\code{sqdist}(\code{append}(v_1,v_2),v')$ where $v' : L^0(\float)$, then we must
construct a typing of \code{append} with result type $L^1(\float)$. Then
assuming that \code{append} does not contain \code{tick} expressions (and thus
has cost $0$), we can give it the following type:
\vspace{-1ex}
$$
  \code{append} : L^1(\float) \times L^1(\float) \to^0_0 {L^1(\float)}
  \vspace{-1ex}
$$
This type reflects that appending two lists that both carry one potential per element produces a list carrying one potential per element.

\begin{figure}[t]
  \centering
  \begin{minipage}[c]{0.9\linewidth}
    \begin{lstlisting}
fun distances_1 (vs : real list list, p : real list) : real list =
  List.map (fn v => R.tick 1; sqdist (v, p)) vs
fun distances_2 (vs : real list list, p : real list) : real list =
  let fun f v = (R.tick 1; sqdist (v, p)) handle Emis2 => (R.tick 1; ~1.0)
  in List.map f vs end\end{lstlisting}
  \end{minipage}
  \vspace{-2ex}
  \caption{The functions \code{distances\_1} and \code{distances\_2} in SML. The
    exception handler in \code{distances\_2} incurs a cost of 1 whenever it is
    evaluated. The function call \code{List.map f l} applies \code{f} to each
    element of \code{l}. }
  \vspace{-3ex}
  \label{fig:distances}
\end{figure}

\subsection{Handling Exceptions}
\label{sec:overview-exn}

While there are some AARA systems that support errors (or raising
exceptions), there does not exist a type-based AARA system that
supports handling exceptions, which is arguably the non-trivial part
of supporting exceptions that this paper develops.
%
%
In this subsection, we will analyze the functions \code{distances\_1} and
\code{distances\_2} in Figure~\ref{fig:distances}, which call the previously
discussed \code{sqdist} (from Figure~\ref{fig:dist}) and handle the
$\code{Emis}i$ exceptions that it raises. For the sake of simplicity, we will
ignore the function calls induced by \code{List.map}\footnote{When analyzing
  \code{distances\_1} and \code{distances\_2} in the evaluation section
  \ref{sec:evaluation}, we do count the function calls induced by \code{List.map}.
  This is a simplification for the overview only.}, so our cost metric in the
following discussion is the number of function calls \emph{to \code{sqdist}}
only, as well as the number of \emph{exception handler evaluations}.

Consider the function \code{distances\_1}. It
takes a list \code{vs} of vectors and a point \code{p} (a vector) and calls the
standard list map function \code{List.map} to compute the distance squared of
\code{p} to each vector in the list using the function \code{sqdist}.
Recall that we are not counting the number of function applications induced by
\code{List.map}, so
the cost of an evaluation of \code{distances\_1($[v_1\ldots,v_n]$,p)} is
$n + \sum_{1 \leq i \leq n} \min(|v_i|,|p|)$.
One valid upper bound for this is encoded by the AARA typing
\vspace{-1ex}
\[
  \code{distances\_1}  : {L^1(L^1(\float)) \times L^0(\float)} \to^0_0 L^0(\float),
  \vspace{-1ex}
\]
which corresponds to the bound $n + \sum_{1 \leq i \leq n} |v_i|$. We can
actually arrive at this typing by partially relying on the previously derived
typing $\code{sqdist} : {L^1(\float) \times L^0(\float)} \to^0_0 \float$.
\code{distances\_1} maps \code{sqdist} across its first argument, and
\code{sqdist} requires 1 potential per element of its first argument, so
unsurprisingly, \code{distances\_1} requires 1 potential per {\it inner} list
element of its first argument. Slightly more interestingly, to pay for the very
first call to \code{sqdist} in \code{distances\_1} (\code{R.tick 1} in the
lambda on line 2), \code{disances\_1} must carry 1 additional potential per {\it
    outer} list element.

Next, consider the function \code{distances\_2}.
It performs the same computation as \code{distances\_1} but handles
the exception \code{Emis2}, which can be raised by \code{sqdist}.
A \code{R.tick} expression in the handler specifies an additional cost
of $1$ if the handler is evaluated.
Would we simply treat the handler as a sequential computation,
we would derive the bound $n + \sum_{1 \leq i \leq n} (1+|v_i|)$, or $2n + \sum_{1 \leq i \leq n} |v_i|$, for the
cost of evaluating \code{distances\_2($[v_1\ldots,v_n]$,p)}.
At first sight, this looks like a tight bound. However, the type system
presented in this paper can prove the more precise bound
$n + \sum_{1 \leq i \leq n} |v_i|$ by deriving the following typing,
which is identical to the typing of \code{distances\_1}.
\footnote{In this example, the difference is only a constant
  factor. However, such imprecisions can easily multiply in a larger
  program and lead to asymptotic differences in the derived bounds.}
\vspace{-1ex}
\[
  \code{distances\_2}  : {L^1(L^1(\float)) \times L^0(\float)} \to^0_0 {L^0(\float)}
  \vspace{-1ex}
\]
Before we discuss how we can derive this tight bound, let us first
intuit why it is correct.
First note that the bound $\sum_{1 \leq i \leq n} |v_i|$ is tight if
each inner list $v_i$ is fully traversed in the function \code{sqdist} (let's
informally call this the worst-case).
Now consider the case in which the exception \code{Emis2}
is raised in the function \code{sqdist}.
In this case, the second argument of \code{sqdist} is shorter than the
first, and consequently, the respective inner list $v_i$ has not
been fully traversed.
So we are not in the previously described ``worst-case'' situation
when the handler gets evaluated: an additional cost of $1$ does not
exceed the cost of fully traversing $v_i$ and not handling an
exception.

How can we express this subtlety in a type system that is simple
enough to support type inference?
\emph{The idea is to associate potential annotations with exception types, thus
  establishing a contract between the point in the code at which an exception is
  raised and the point at which it is handled.}
The amount of potential is chosen to fit the cost of the innermost
exception handler, where the potential may be consumed.
In return, the same amount of potential has to be provided each time an
exception is raised.

To see the idea in action, let's try to justify the typing of
\code{distances\_2} shown above. We can mostly repeat the steps we took the type
\code{distances\_1}, but now we must decide what potentials $p, q$ to annotate
the exception type $\code{Emis1}^p +  \code{Emis2}^q$ with. There is no handler
for the exception \code{Emis1}, so it can carry no potential, i.e. we let $p =
  0$. In contrast, the handler for \code{Emis2} performs a \code{R.tick 1}, which
means \code{Emis2} should carry 1 potential, i.e. we let $q = 1$. Thus this exception
handler is well-typed.

But on the other end of the contract, we must ensure that whoever raises
\code{Emis2} within \code{distances\_2} must provide $1$ potential. To achieve
this, we must now construct a new typing for \code{sqdist}, specifically for its
callsite in \code{distances\_2}. Earlier, we typed \code{distances\_1} by
relying on the typing $\code{sqdist} : {L^1(\float) \times L^0(\float)} \to^0_0
  \float$, so let's start there again. The first argument \code{v1} in the
\code{sqdist} gets the type $L^1(\float)$, i.e. it carries $1$ potential per
element.
Next, observe that \code{raise Emis2} occurs in the case when $v_1$ matches
\code{\_::\_}, which means its length, and thus the potential it carries, is
$\geq 1$ and available for \code{raise Emis2}.
We can thus justify the following typing of \code{sqdist}, now equipped with an annotated exception type:
\vspace{-1ex}
\[
  \code{sqdist} :  L^1(\float) \times L^0(\float) \to^0_{0 \odot 0} \float \odot
  (\code{Emis1}^0 +  \code{Emis2}^1)
  \vspace{-1ex}
\]
Here, we have modified annotated function types to become $\tau_1 \to^p_{q \odot
  q_e} \tau_2 \odot \tau_e$, where $\tau_e$ is the resource annotated exception
type, and $q_e$ is additional potential returned alongside the raised
exception. \footnote{In most of our examples, $\tau_e$ is a sum type
  containing the various exceptions declared in the program, in which case $q_e$
  is redundant thanks to the annotations at each variant of a sum type. Our type
  system, however, has no restriction on $\tau_e$. } Therefore, this typing
expresses that in the evaluation of \code{sqdist(v1,v2)}, $|\code{v1}|$ is
sufficient to cover the \code{R.tick}'s within {\it and} provide 1 additional
potential whenever \code{Emis2} is raised.

%
Our type system is also capable of typing exceptions carrying
non-trivial payloads like lists. The subsequent section on handling computational effects will demonstrate such a payload.


\begin{figure}[t]
  \begin{minipage}[c]{0.95\linewidth}
    \begin{multicols}{2}
      \begin{lstlisting}[]
effect Insert : int list => unit
effect Remove : unit => int list option

fun hstore (m : unit -> unit) : unit = (
   m () handle
    return x => (fn store => x)
  | Insert n  k => (R.tick 1;
    fn store => k () (n :: store))
  | Remove () k => (R.tick 1;
    fn store =>
      case store of
        [] => k NONE []
      | x :: xs => k (SOME x) xs
    ) ) []

fun insert_lists (l : int list list) : unit =
  case l of
    [] => ()
  | x::xs => (do[Insert] x; insert_lists xs)

fun consume () : unit =
  case do[Remove] () of
    NONE => ()
  | SOME x => (traverse x; consume ())

fun store_lists (l : int list list) : unit = hstore (
  fn () => insert_lists l; consume ())
\end{lstlisting}
    \end{multicols}
  \end{minipage}
  \vspace{-2ex}
  \caption{Example implementing a stack store with effect handlers. We assume access to \code{traverse : $L(\integer) \to \unitT$}.
  }
  \vspace{-3ex}
  \label{fig:store}
\end{figure}

\subsection{Handling Computational Effects}
\label{sec:overview-effect}

In this section, we  will demonstrate how our type system can analyze programs
that perform and handle effects. We will consider programs equipped with a set
of named effects, each with an input and output type. For example, the canonical
IO effect \code{Print} might have input type \code{string} and output type
\code{unit}. An effect \code{EName} takes a payload \code{x} of its input type,
and is performed via the expression \code{do[EName] x}, which has the effect's
output type. For example, \code{do[Print] "Hello World"} is an expression that
performs the \code{Print} effect with the string \code{"Hello World"} as its
payload, and has type \code{unit}, as that is the output type.

To handle effects, we have effect handler expressions of the form \code{e {\bf
      handle} {\bf return} a => e1 | EName x k => e2 | \ldots} where \code{e} is the
effectful expression being handled. When \code{e} evaluates to a pure value $v$,
then $v$ is substituted for \code{a} in \code{e1} to get the final value.
However, if some effect \code{EName} is performed, the handler branch \code{e2}
is executed, given the effect's payload \code{x} and a (delimited) continuation
\code{k} that allows it to resume execution of \code{e} from the site where
\code{EName} was performed. We work in the deep handler setting of Kammar et al.
\cite{DBLP:conf/icfp/KammarLO13}, so the continuation \code{k}
reinstalls the effect handler.

Consider the example in Figure \ref{fig:store}. We first define two
effects: \code{Insert} requires an integer list payload and returns
unit, and \code{Remove} requires a unit payload and returns either an
integer list or nothing. Together, they intend to implement a store
data structure.
The function \code{hstore} takes a thunk \code{m}, which may perform the effects \code{Insert} and \code{Remove}, and handles them. In both branches of the handler, we use \code{R.tick} to specify a cost of 1 if the handler is evaluated, just as we did for exceptions. This effect handler uses functions to encode state, threading through the state \code{store}. The branch for \code{Insert} is a function which calls the continuation \code{k} on a unit before updating the state \code{store} by prepending the effect payload \code{n}. The branch for \code{Remove} is a function that cases on the current state \code{store}. If \code{store} is empty, it calls the continuation \code{k} on \code{None} followed by passing the empty list as the state. If \code{store} is nonempty, it calls the continuation \code{k} on \code{Some x} where \code{x} is the head of \code{store}, and updates the state to be the tail \code{xs} of \code{store}.

Now we move on to the computation we wish to handle, \code{store\_lists} on line 26. Given a list of lists of integers \code{l}, it first calls \code{insert\_lists l}, which performs the \code{Insert} effect on each element of \code{l}. 
Next, it calls \code{consume ()}, which repeatedly performs the
\code{Remove} effect until it returns \code{NONE}, signifying that the
store has been emptied. Importantly, for each \code{SOME x} that
\code{Remove} outputs, we call \code{traverse x}, which consumes 1
potential per element of \code{x}, i.e. we assume that it has the
typing
\(\textsf{traverse}: L^1(\integer) \to^0_0 \unitT\). Our
type system is then able to derive that \code{store\_lists
  $[v_1, \ldots, v_n]$} has a cost of
$1 + 2n + \sum_{1\le i \le n} |v_i|$, or the following type.
\vspace{-1ex}
\[
  \textsf{store\_lists}: {L^2(L^1(\integer))} \to^1_0 {\unitT}
\]
We start by arguing why this bound is correct. There are three places that incur cost in \code{store\_lists}: The call \code{insert\_lists $[v_1, \ldots, v_n]$} inserts each $v_i$ into the store exactly once, so that incurs $n$ total cost.
The call \code{consume ()} removes elements from the store
until (and including when) it is empty, which results in cost $n + 1$.
Finally, \code{traverse x} is called for every element $x$ removed
from the store, which are $v_1, \ldots, v_n$, incurring a total cost of
$\sum_{1\le i \le n} |v_i|$.


But how is our type system able to figure this out? The answer lies in having annotated types for effects. We build on the type system for exceptions explained in the Subsection~\ref{sec:overview-exn}. Now, the annotated exception type in a function's typing is replaced with an effect signature containing the annotated input and output types of each effect that its body may perform. Similar to annotated exception types, these resource annotated effect signature are contracts between the client code and handler. Let's look at the typings of \code{insert\_lists} and \code{consume} from our example.
\vspace{-0.5ex}
\[
  \code{insert\_lists}:    {L^2(L^1(\integer))} \to^0_0 \exty{{\unitT}}{\{\code{Insert}: {L^1(\integer)} \Rightarrow^2_0 {\unitT}\}}
  \vspace{-0.5ex}
\]
The type of \code{insert\_lists} promises that when the function
performs the \code{Insert} effect, it will supply a list payload that
carries 1 potential per element, as well as 2 additional
potential, while expecting the handler to return 0
potential. Its typing is relatively straightforward -- its argument
\code{$[v_1, \ldots, v_n]$} is a list of lists, each of which carries
$2 + |v_i|$ potential, which it simply forwards to the
\code{Insert} handler via the payload.
\vspace{-0.5ex}
\[
  \code{consume}:  \unitT  \to^1_0 \exty{ \unitT }{\{\code{Remove}: {\unitT} \Rightarrow^1_0 \left( \code{Some}^1(L^1(\integer)) + \code{None}^0 \right) \}}
  \vspace{-0.5ex}
\]
The type of \code{consume} promises to supply 1 potential when the function
performs the \code{Remove} effect and expects in return either \code{Some x}
carrying $1 + |x|$ potential or \code{None} carrying no potential. It pays for
the performance of \code{Remove} from its argument carrying 1 potential. Then
when casing on the result of the \code{Remove}, in the \code{Some} case on line
24, $|x|$ potential pays for the call \code{traverse x}, while the $1$ remaining
potential pays for the recursive call \code{consume ()}.

On the other end of the contract, the effect handler in \code{hstore} must be able to handle the \code{Insert} and \code{Remove} effects with the annotated types above, namely
\vspace{-0.5ex}
\[
  \{\code{Insert}: {L^1(\integer)} \Rightarrow^2_0 {\unitT}, \code{Remove}: {\unitT} \Rightarrow^1_0 \left( \code{Some}^1(L^1(\integer)) + \code{None}^0 \right)\}
  \vspace{-0.5ex}
\]
Furthermore, all branches of an effect handler must have the same type -- in this handler, all branches have type ${L^1(L^1(\integer))} \to^0_0 {\unitT}$, which can be justified as follows.
(1) The \code{return} branch on line 6 is a function that does not
touch its argument, trivially satisfying the type.
(2) The \code{Insert} branch on line 7 starts with $|n| + 2$ potental
units available, where \code{n} is the payload. $1$ potential
pays for \code{R.tick 1}. In the function, \code{store} starts with
type $L^1(L^1(\integer))$. This type is preserved when we prepend
\code{n} along with the remaining $|n| + 1$ potentials to the
store via \code{n :: store}.
(3) The \code{Remove} branch on line 9 starts with 1 potential, which pays for \code{R.tick 1}. In the function, we case on \code{store}, which has type $L^1(L^1(\integer))$ again. If empty, we send \code{None} to the client without any potential. If nonempty, i.e. it matches \code{x::xs}, we send \code{Some x} along with the $|x| + 1$ potential available from this head element to the client.

Just like \code{{\bf raise} v} for exceptions, we are able to transfer complex potential via payload $v$ from \code{{\bf do}[EName] v} to the closest enclosing effect handler. However, unlike exception handlers, effect handlers can also transfer potential back to effectful computations via their delimited continuations.

%% file: syntax.tex
\subsection{Syntax}
We develop our type system on a language with
(higher-order) functions, lists, pairs, and binary sums.
In our implementation, we use a more complete functional language that
includes regular recursive types.
We restrict ourselves to the simple setting because it already
exhibits all the technical difficulties that arise with exception
handlers (and effects).
\vspace{-2ex}
\begin{center}
  \begin{minipage}{0.4\linewidth}
    \begin{align*}
      v ::=~ & x                                 \\
      \mid~  & \funS f x e                       \\
      \mid~  & \trivS      \mid \pairS{v_1}{v_2} \\
      \mid~  & \inlS{v}    \mid \inrS{v}         \\
      \mid~  & \nilS       \mid \consS{v_1}{v_2}
    \end{align*}
  \end{minipage}
  \begin{minipage}{0.4\linewidth}
    \begin{align*}
      e ::=~ & \retS v \mid \letS{e_1}{x}{e_2} \mid \tickS q \mid \appS {v_1} {v_2} \\
      \mid~  & \casepairS{v}{x_1}{x_2}{e}                                           \\
      \mid~  & \casevoidS{v} \mid \casesumS{v}{x_1}{e_1}{x_2}{e_2}                  \\
      \mid~  & \caseS{v}{e_1}{x_1}{x_2}{e_2}                                        \\
      \mid~  & \raiseS v \mid \tryS{e_1}{x}{e_2}
    \end{align*}
  \end{minipage}
\end{center}
We syntactically distinguish values $v$ and expressions/computations $e$ using a
modal separation based on lax logic~\cite{DaviesP01} and fine-grain
call-by-value \cite{levy_2003}. Values include the trivial unit value,
functions, pairs, injections, and lists.
Since we work in a call-by-value setting, variables stand for values
and are therefore considered values.
In the examples and the implementation, we use SML syntax,
which can be translated into the lax syntax.

Subcomponents of expressions are values whenever possible without
restricting expressivity.
They include function application, as well as pattern matching elimination for
pairs, injections, and lists.
Sequential evaluation of computations only occurs within let expressions.
Values $v$ are lifted to expressions using the syntactic form
$\retS v$.
The syntactic form $\tickS q$ specifies the consumption of $q \in \Q$
resources.
As usual, they can be automatically inserted in higher-level code
based on a resource metric that can model runtime, memory use, or
energy consumption.
If $q < 0$ then $q$ resources become available. This is used, for example, to
mark when memory is freed.
Finally, exceptions can be raised using the computation $\raiseS
  v$.
The computation $\tryS{e_1}x{e_2}$ defines an exception handler $e_2$
that handles exceptions raised by $e_1$.


%% file: statics.tex
\subsection{Resource Annotated Types}
\label{sec:statics-types}

Resource annotated types are defined as follows.
\vspace{-1ex}
\begin{mathpar}
  \tau ::= \funset \mid \unitT \mid \prodT{\tau_1}{\tau_2} \mid \voidT \mid \sumT{A_1}{A_2} \mid \liT A\!\!\!\!\!\!

  A, B, C ::= \aty \tau q\!\!\!\!\!\!

  \funset ::= \{\funT{A}{B}{C} \mid \Theta \}
  \vspace{-1ex}
\end{mathpar}
The idea is that types $\tau$ are annotated with non-negative numbers which
define a potential function $\Phi(v : \tau)$ that maps values $v$ of type
$\tau$ to non-negative numbers. The subsequent subsection will formally define the potential functions.

Types \(\tau\) include arrow types, lists, products, and sums. \(A, B, C\) are
types with constant potential \(q \in \Q_{\geq 0}\) attached to them, allowing
us to specify for example \(\liT{\aty \tau q}\), a list of \(\tau\)'s where each
element carries \(q\) potential. Similarly, we can also specify sum types
\(\sumT{\aty {\tau_1}{q_2}}{\aty {\tau_2}{q_2}}\) specifying that a left
injection will carry \(q_1\) potential in addition to having a payload of type
\(\tau_1\), and likewise for the right injection. Finally, function types are
sets \(\funset\) of possible types each of the form \(\funT{A}{B}{C}\) governed
by some predicate \(\Theta\), typically the conjunction of various linear
constraints regarding potential annotations \(q \in \Q_{\ge 0}\) that appear
within \(A, B, C\). Note that $B$ is the function's normal return type,
while $C$ is the type of exceptions it could raise. For a concrete example of a
set $\funset$ of function types, refer to the resource polymorphic typing of
\code{sqdist} in the overview section \ref{sec:overview-aara}.

Note that we use a more readable notation for resource annotated types
throughout the overview section \ref{sec:overview-aara}. $L^q(\tau)$ is
shorthand for $\liT{\aty{\tau}{q}}$, $\tau_1^p + \tau_2^q$ is shorthand for
$\sumT{\aty{\tau_1}{p}}{\aty{\tau_2}{q}}$, and $\tau_1 \to^p_{q\odot q_e} \tau_2
  \odot \tau_e$ is shorthand for $\aty{\tau_1}{p} \to \aty{\tau_2}{q} \odot
  \aty{\tau_e}{q_e}$.

We also introduce the function $\abs{\cdot}$ that takes in a resource annotated
type and zeroes all the potential annotations within positive types. It is
defined in Figure \ref{fig:type-zeroing} and will be useful for defining various
typing rules in the rest of the paper. We extend it to contexts too, defining
$\abs{\Gamma}$ as the pointwise application of $\abs{\cdot}$ to each type in
$\Gamma$.

\begin{figure}[t]
  \framebox{$\abs{\tau}$ \quad $\abs{A}$} \hspace{.5em} zeroing \hfill $\;$
  \def \MathparLineskip {\lineskip=0.3cm}
  \begin{mathpar}
    \abs{\funset}                 = \funset

    \abs{\unitT}                  = \unitT

    \abs{\prodT{\tau_1}{\tau_2}}  = \prodT{\abs{\tau_1}}{\abs{\tau_2}}

    \abs{\voidT}                  = \voidT

    \abs{\sumT{A_1}{A_2}}         = \sumT{\abs{A_1}}{\abs{A_2}}

    \abs{\liT{A}}                 = \liT{\abs{A}}

    \abs{\aty{\tau}{q}}           = \aty{\abs{\tau}}{0}
  \end{mathpar}
  \vspace{-3ex}
  \caption{Zeroing Resource Annotated Types}
  \vspace{-3ex}
  \label{fig:type-zeroing}
\end{figure}

\subsection{Typing Rules for Values}

The value typing judgment \(\boxed{\Gamma; q \vdash v : \tau}\)
states that under context \(\Gamma\) and requiring \(q \in \Q_{\geq 0}\) additional
potential, value \(v\) has annotated type \(\tau\).
One way to understand these rules is that $\Gamma$ and $q$ provide the $\Phi(v:\tau)$ potential that $v$ carries.
Figure~\ref{fig:value-typing} contains the typing rules, in which the context
$\Gamma$ and the potential $q$ are treated linearly. We describe the key rules:

\begin{figure}[t]

  \framebox{$\Gamma; q \vdash v : \tau$} \hspace{.5em} value typing \hfill $\;$

  \def \MathparLineskip {\lineskip=0.3cm}
  \begin{mathpar}
    \Rule{T-Var}
    { }
    {x:\tau; 0 \vdash x:\tau }

    \Rule{T-Pair}
    { \Gamma_1 ; q_1 \vdash v_1 : \tau_1 \\ \Gamma_2; q_2 \vdash v_2 : \tau_2 }
    { \Gamma_1, \Gamma_2; q_1 + q_2 \vdash \pairS{v_1}{v_2} : \prodT{\tau_1}{\tau_2} }

    \Rule{T-Inl}
    { \Gamma; p \vdash v : \tau_1 }
    { \Gamma; q_1 + p \vdash \inlS{v} : \sumT{\aty{\tau_1}{q_1}}{A_2} }

    \Rule[rightskip=1em]{T-Fun}
    { \isZero{\Gamma}\!\!\!
      \\ \forall (\funT{\aty{\tau}{q}}{B}{C}) \in \funset.\ \Gamma; f : \funset, x : \tau; q \vdash e : \exty B C
    }
    { \Gamma; 0 \vdash \funS f x e : \funset}

    \Rule{T-Inr}
    { \Gamma; p \vdash v : \tau_2 }
    { \Gamma; q_2 + p \vdash \inrS{v} : \sumT{A_1}{\aty{\tau_2}{q_2}} }
    \\
    \Rule{T-Unit}
    { }
    { \cdot; 0 \vdash \trivS : \unitT }

    \Rule{T-Nil}
    { }
    { \cdot; 0 \vdash \nilS : \liT A}

    \Rule{T-cons}
    { \Gamma_1; q_1 \vdash v_1 : \tau
      \\ \Gamma_2; q_2 \vdash v_2 : \liT A
      \\ A = \aty \tau p
    }
    { \Gamma_1,\Gamma_2; q_1 + q_2 + p \vdash \consS{v_1}{v_2} : \liT A}
  \end{mathpar}
  \caption{Typing rules for values}
  \vspace{-3ex}
  \label{fig:value-typing}
\end{figure}

\begin{description}
  \item[T-Var] All the potential is provided by the context
        $x:\tau$ and $0$ additional potential is required.
  \item[T-Pair] The context $\Gamma_1,\Gamma_2$ and potential
        $q_1 + q_2$ are split, and $\Gamma_i; q_i$ is used to
        type $v_i$.
  \item[T-Inl/T-Inr] The potential $q_i + p$ is split. The context $\Gamma$ and potential $p$ are used to type $v$. $q_i$ additional potential is required depending on which injection it is.
  \item[T-Cons] The context $\Gamma_1,\Gamma_2$ and potential
        $q_1 + q_2 + p$ are split. Depending on $i$, $\Gamma_i; q_i$ types the head or tail of the list. $p$ additional potential is required, as each list element must carry $p$ potential.

  \item[T-Fun]
        No potential is consumed when typing a function. $q = 0$, and the
        $\isZero{\Gamma}$ premise ensures that the context carries no potential.
        This guarantees that the only source of potential when evaluating the
        function body is the function argument. This way, functions are
        potential-free and can be invoked any number of times.
        The rule derives a result type $\funset$ that contains possibly
        infinite arrow types $\funT{\aty{\tau}{q}}{B}{C}$, where $B$ is the
        annotated return type and $C$ is the annotated exception type,
        explained further in the subsequent type rules for computations.
        Each of these arrow types is justified by typing the function body $e$
        given the argument $x:\tau$, accompanying constant potential $q$, the
        set of function types $\funset$ for recursive calls, and the
        potential-free closure $\Gamma$.
\end{description}

The linear treatment of potential enables us to establish the following lemma
for closed values, proved by induction on the value typing judgment.
\begin{lemma}
  If $\emptyctx; q \vdash v : \tau$ and $\emptyctx; q' \vdash v : \tau$ then $q = q'$.
\end{lemma}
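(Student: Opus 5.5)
The plan is to prove a slightly stronger statement by induction on the structure of the closed value $v$, or equivalently on the derivation of $\emptyctx; q \vdash v : \tau$. We fix both derivations $\emptyctx; q \vdash v : \tau$ and $\emptyctx; q' \vdash v : \tau$ and show that $q$ is determined by $v$ and $\tau$ alone. The rules in Figure~\ref{fig:value-typing} are syntax-directed: each syntactic form of value is the conclusion of exactly one rule, and there is no structural rule that weakens potential. So for a given closed value both derivations must end in the same rule, and we can compare their premises.

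We then go through the cases.
\begin{itemize}
  \item \textbf{T-Var.} The value is a variable, but the context is empty, so this case cannot occur.
  \item \textbf{T-Unit and T-Nil.} Both derivations assign potential $0$.
  \item \textbf{T-Fun.} The rule assigns $0$ outright, regardless of the typing of the body. This is convenient: the body is an expression judged under a nonempty context, so we never have to reason about expression typing.
  \item \textbf{T-Pair.} We have $\pairS{v_1}{v_2} : \prodT{\tau_1}{\tau_2}$ and $q = q_1 + q_2$. The context $\emptyctx$ can only split as $\emptyctx,\emptyctx$, so both components are closed values typed at $\tau_1$ and $\tau_2$. The induction hypothesis then gives $q_1 = q_1'$ and $q_2 = q_2'$.
  \item \textbf{T-Inl.} The type $\sumT{\aty{\tau_1}{q_1}}{A_2}$ fixes the annotation $q_1$. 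The induction hypothesis applied to $v$ at $\tau_1$ gives $p = p'$, hence $q_1 + p = q_1 + p'$. T-Inr is symmetric.
  \item \textbf{T-Cons.} The list type $\liT{\aty{\tau}{p}}$ fixes both $\tau$ and $p$. The induction hypothesis on the head (at $\tau$) and on the tail (at $\liT A$) finishes the case.
\end{itemize}

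The only point that needs care is that the inductive argument uses the fact that the type in the conclusion determines the types in the premises. For pairs, sums and lists this holds because the premise types are literally subterms of $\tau$; no premise introduces a fresh type or annotation that could vary between the two derivations. Once this is checked, the statement is immediate. In effect the proof also shows $q = \Phi(v:\tau)$, which matches the intended reading of potential.
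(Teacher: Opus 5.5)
Your proof is correct and takes the same route as the paper, which also argues by induction on the syntax-directed value typing judgment of Figure~\ref{fig:value-typing}, using that the type in the conclusion fixes the premise types and annotations. One side remark: your argument relies on each closed-value rule either fixing the potential outright or recursing only into value premises, and the later rule \textsc{T-LinFun} breaks this. Its potential is constrained only through a computation premise that is closed under \textsc{T-WeakPot}, so the lemma holds for the language of that figure but not once linear functions are added.
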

Since closed values must be typed with a unique $q$, we can define the potential
function $\Phi(\_ : \tau)$ that maps values of type $\tau$ to non-negative
rational numbers by setting $\Phi(v : \tau) = q$ if $\cdot; q \vdash v : \tau$.
We call $q$ the potential of $v$ under $\tau$.
With $\Phi$, we can state a useful lemma regarding the zeroing function
$\abs{\cdot}$, also proved by induction on the typing judgment.
It states that $\isZero{\tau}$ ensures that values of type $\tau$ must
carry no potential. We refer to such types as potential-free.
\begin{lemma}
  \label{lem:zero}
  If $\isZero{\tau}$ and $\Phi(v : \tau) = q$, then $q = 0$.
\end{lemma}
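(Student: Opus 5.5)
We argue by induction on the derivation of $\emptyctx; q \vdash v : \tau$, which is how the hypothesis $\Phi(v : \tau) = q$ is witnessed. We prove the slightly stronger statement that for every closed value typing $\emptyctx; q \vdash v : \tau$ with $\isZero{\tau}$ we have $q = 0$. The context is empty and the value typing rules are linear in $\Gamma$, so every premise of the last rule also has an empty context. This is what lets the induction go through.

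First we note some structural facts about $\abs{\cdot}$ that follow directly from Figure~\ref{fig:type-zeroing}:
\begin{itemize}
\item if $\isZero{\prodT{\tau_1}{\tau_2}}$ then $\isZero{\tau_1}$ and $\isZero{\tau_2}$;
\item if $\sumT{\aty{\tau_1}{q_1}}{\aty{\tau_2}{q_2}}$ is fixed by zeroing then $q_1 = q_2 = 0$ and $\isZero{\tau_i}$;
\item if $\liT{\aty{\tau}{p}}$ is fixed by zeroing then $p = 0$ and $\isZero{\tau}$.
\end{itemize}
Each follows by comparing the two sides of the defining equation componentwise.

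Then we do a case analysis on the last rule.
\begin{itemize}
\item T-Var cannot apply, since the context is empty.
\item T-Unit, T-Nil and T-Fun give $q = 0$ immediately. T-Fun fixes the potential to $0$ no matter what $\funset$ is, and $\abs{\funset} = \funset$ is consistent with this.
\item For T-Pair we have $q = q_1 + q_2$ with $\emptyctx; q_i \vdash v_i : \tau_i$. The induction hypothesis on both components gives $q_1 = q_2 = 0$.
\item For T-Inl we have $q = q_1 + p$. Zeroing forces $q_1 = 0$, and the induction hypothesis on $v$ at $\tau_1$ gives $p = 0$. T-Inr is symmetric.
\item For T-Cons we have $q = q_1 + q_2 + p$. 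Zeroing forces $p = 0$. The induction hypothesis on the head at $\tau$ gives $q_1 = 0$, and on the tail at the same list type $\liT{A}$ it gives $q_2 = 0$.
\end{itemize}

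The argument is routine; the only point needing care is the context bookkeeping. In T-Pair and T-Cons the context $\Gamma_1,\Gamma_2$ is empty, so both $\Gamma_1$ and $\Gamma_2$ must be empty. This keeps the subderivations closed, so the induction hypothesis and the definition of $\Phi$ apply to them. The potentials are nonnegative rationals, but that is not needed: the annotations are pinned to $0$ by the equation $\isZero{\tau}$ itself, not by an inequality.
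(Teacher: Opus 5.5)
Your proof is correct and takes the paper's route: induction on the closed value typing derivation $\emptyctx; q \vdash v : \tau$, with the componentwise reading of $\isZero{\tau}$ forcing the zero annotations in the \textsc{T-Inl}, \textsc{T-Inr} and \textsc{T-Cons} cases. For the extended language of the appendix, you should also note that the \textsc{T-LinFun} and \textsc{T-Dcont} cases are vacuous, because $\abs{\cdot}$ is undefined on linear function types; this partiality is essential, since \textsc{T-LinFun} lets a closed linear function carry arbitrary potential.
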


\subsection{Typing Rules for Computations and Exceptions}

The computation typing judgment \(\boxed{\Gamma; q \vdash e : \exty B C}\)
states that under context \(\Gamma\) and requiring \(q \in \Q_{\geq 0}\)
additional potential, computation \(e\) can either evaluate to a value and
potential of annotated type \(B\), or raise an exception value and potential of
annotated type \(C\). Figure~\ref{fig:comp-typing} contains the syntax-directed
computation typing rules, which all treat potential linearly. We describe some
key rules:

\begin{figure}[t]

  \framebox{$\Gamma; q \vdash e : \exty B C$} \hspace{.5em} computation typing \hfill $\;$

  \def \MathparLineskip {\lineskip=0.3cm}
  \begin{mathpar}
    \Rule[right=T-Val]{}
    { \Gamma; q \vdash v : \tau}
    { \Gamma; q + p \vdash \retS v  : \exty {\aty \tau p} C}

    \Rule[right=T-Let]{}
    { \Gamma_1; q \vdash e_1 : \exty {\aty \tau {q'}} C
      \\ \Gamma_2,x:\tau; q' \vdash e_2 : \exty B C
    }
    { \Gamma_1,\Gamma_2; q \vdash \letS{e_1}{x}{e_2}  : \exty B C}\\

    \Rule[right=T-Tick$^+$]{ }
    { }
    { \cdot; q + p \vdash \tickS q  : \exty {\aty \unitT p} C}\!\!

    \Rule[right=T-Tick$^-$]{ }
    { }
    { \cdot; p \vdash \tickS {-q}  : \exty {\aty \unitT {p + q}} C}\!\!

    \Rule[right=T-App]{ }
    { \Gamma_1; 0 \vdash v_1 : \funset
      \\ (\aty \tau {q_1} \to \exty B C) \in \funset
      \\ \Gamma_2; q_2 \vdash v_2 : \tau
    }
    { \Gamma_1,\Gamma_2; q_1 + q_2 \vdash \appS {v_1} {v_2}  : \exty B C}\\

    \Rule{T-Casevoid}
    {
      \Gamma; q \vdash v : \voidT\\
    }
    { \Gamma; q \vdash \casevoidS{v}  : \exty B C}

    \Rule{T-Casepair}
    {
      \Gamma_1; q_1 \vdash v : \prodT{\tau_1}{\tau_2} \\
      \Gamma_2, x_1 : \tau_1, x_2 : \tau_2 ; q_2 \vdash e : \exty B C
    }
    { \Gamma_1, \Gamma_2; q_1 + q_2 \vdash \casepairS{v}{x_1}{x_2}{e}  : \exty B C}\\

    \Rule{T-Casesum}
    {
      \Gamma_1; q_1 \vdash v : \sumT{\aty{\tau_1}{p_1}}{\aty{\tau_2}{p_2}}\\
      \Gamma_2, x_1 : \tau_1 ; p_1 + q_2 \vdash e_1 : \exty B C\\
      \Gamma_2, x_2 : \tau_2 ; p_2 + q_2 \vdash e_2 : \exty B C
    }
    { \Gamma_1, \Gamma_2; q_1 + q_2 \vdash \casesumS{v}{x_1}{e_1}{x_2}{e_2}  : \exty B C}\\

    \Rule[right=T-Caselist]{ }
    { \Gamma_1; q_1 \vdash v : \liT {\aty \tau p}\!\!\!\!
      \\ \Gamma_2; q_2 \vdash e_1 : \exty B C\!\!\!\!
      \\ \Gamma_2,x:\tau,y:\liT {\aty \tau p} ; q_2 + p \vdash e_2 : \exty B C
    }
    { \Gamma_1,\Gamma_2; q_1 + q_2 \vdash \caseS{v}{e_1}{x}{y}{e_2}  : \exty B C}\!\!\!\!
    \\

    \Rule[right=T-Raise]{}
    {  \Gamma; q \vdash v : \sigma }
    { \Gamma; q + r \vdash  \raiseS v  : \exty A {\aty \sigma r}}
    \!\!\!\!\!\!\!\!

    \Rule[right=T-Try]{}
    { \Gamma_1; q \vdash e_1 : \exty A {\aty \sigma r}
      \\ \Gamma_2, x:\sigma; r \vdash e_2 : \exty A C
    }
    { \Gamma_1,\Gamma_2; q \vdash \tryS{e_1}{x}{e_2} : \exty A C}
    \!\!\!\!\!\!\!\!\!\!\!\!
  \end{mathpar}
  \caption{Typing rules for computations}
  \vspace{-3ex}
  \label{fig:comp-typing}
\end{figure}

\begin{description}
  \item[T-Tick$^+$] Ticks with a positive argument consume resources. The $q +
          p$ input potential covers the {\sf tick} that consumes $q$ potential,
        which gets typed as $\unitT$ with $p$ potential remaining. The
        annotated exception type $C$ is arbitrary.
  \item[T-Tick$^-$] Ticks with a negative argument return resources. The $p$
        input potential is arbitrary, and the {\sf tick} is typed as
        $\unitT$ with $p + q$ potential attached, since $q$ potential is
        being returned. The annotated exception type $C$ is arbitrary.
  \item[T-Val] This rule lifts the typing judgment for value to a computation
        typing by threading through $p$ additional potential alongside the value
        as the constant return and an arbitrary type $C$ for the exception
        payload. This is justified since values cannot raise exceptions.
  \item[T-App] The context $\Gamma_1, \Gamma_2$ and potential $q_1 + q_2$ are
        both split. $\Gamma_1$ is used to type $v_1$, which being a function,
        requires $0$ additional potential. $\Gamma_2$ and $q_2$ are used to type
        $v_2$, the function argument. Finally, the resulting annotated result
        and exception types $\exty B C$ are valid if the arrow type $\aty \tau
          {q_1} \to \exty B C$ is in $\funset$.
  \item[T-Let] The context $\Gamma_1, \Gamma_2$ is split. $\Gamma_1; q$ types
        $e_1$. $e_1$'s annotated type's potential $q'$, along with $\Gamma_2$
        augmented with $x:\tau$ are used to type $e_2$, whose type $B$ is the
        overall result type. Most importantly for this paper, both $e_1$ and
        $e_2$ must agree on annotated exception type $C$ so that raising an
        exception in either computation would redirect control flow the same
        way.
  \item[T-Caselist] The context $\Gamma_1, \Gamma_2$ and potential $q_1 + q_2$ are both split. $\Gamma_1$ and $q_1$ are used to type $v$ as a list $\liT{\aty{\tau}{p}}$. The nil branch $e_1$ is typed using $\Gamma_2$ and $q_2$. The cons branch $e_2$ also uses $\Gamma_2$ and $q_2$, but augmented with bound variables $x$ for the head, typed $\tau$, and $y$ for the tail, typed the same as the original list, as well as $p$ additional potential made available from the head. As in \textsc{T-Casesum}, the branches must agree on their final $\exty B C$.
  \item[T-Raise] The potential $q + r$ is split. The context $\Gamma$ along with potential $q$ are used to type $v$, the exception payload, as some $\sigma$. The raise expression then has an annotated exception type of $\aty{\sigma}{r}$. Its annotated result type $A$ is arbitrary since it will never evaluate normally.
  \item[T-Try] The context $\Gamma_1, \Gamma_2$ are split. $\Gamma_1; q$ types
        the client expression $e_1$. The annotated exception type
        $\aty{\sigma}{r}$ of this typing is used when typing the ``handler''
        expression $e_2$, which is typed using $\Gamma_2$ augmented with bound
        variable $x : \sigma$ and potential $r$. Importantly, the annotated
        result type $A$ of both the main and handler expressions must agree.
        Furthermore, note that the handler expression itself may raise an
        exception too, and that its annotated exception type $C$ is the overall
        exception type.
\end{description}

Observe the symmetry between \textsc{T-Let} and \textsc{T-Try}. In \textsc{T-Let}, normal return composes between $e_1$ and $e_2$,
while exceptional flow $\odot C$ falls through between them. In \textsc{T-Try}, normal return falls through $e_1$ and $e_2$, while exceptional flow composes between.
Furthermore, as alluded to in the overview section \ref{sec:overview}, there is no globally enforced annotated exception type. Instead, only local contracts are enforced between points where an exception is raised and its  nearest enclosing handler (or function). This is crucial in improving the efficiency of the resource analysis; the same exception handled at two different places need not carry the same amount of potential.

\begin{figure}[t]
  \framebox{$\Gamma; q \vdash e : \exty B C$} \hspace{.5em} computation typing \hfill $\;$

  \begin{mathpar}
    \Rule{T-ContFun}
    { \Gamma, x_1 : \funset, x_2 : \funset; q \vdash e : \exty B C}
    { \Gamma, x : \funset; q \vdash [x,x / x_1,x_2]e : \exty B C }

    \Rule{T-WeakVar}
    { \Gamma; q \vdash e : \exty B C }
    { \Gamma,x:\tau; q \vdash e : \exty B C }

    \Rule{T-WeakPot}
    { \Gamma; q' \vdash e : \exty B C \\ q \ge q' }
    { \Gamma; q \vdash e : \exty B C }
  \end{mathpar}
  \caption{Structural typing rules for computations}
  \vspace{-3ex}
  \label{fig:comp-typing-structural}
\end{figure}

\subsection{Structural Typing Rules for Computations}
\label{sec:structural-rules}
Thus far, all the typing rules presented are syntax-directed and strictly
linear. Such a type system would mostly behave as we would like, with two main
issues to address. The obvious issue is that a fully linear system would require
that all execution paths of a program consume the same amount of resources, no
matter what branches are taken (in {\sf caselist}, {\sf casesum}, and {\sf
    try}). This is in conflict with the way we decorate programs with {\sf tick}
when evaluating metrics such as number of function calls, which typically varies
across different execution paths. Thus we make our type system {\it affine}
instead of linear by adding the {\sc T-WeakVar} and {\sc T-WeakPot} rules,
allowing variables and potential to be discarded at any point.

The subtler issue is that AARA functions are meant to be {\it structural}.
Recall that in our discussion of the {\sc T-Fun} rule, the premise
$\isZero{\Gamma}$ is only necessary because we expect the function to be
called multiple times, so it must not capture potential from its closure. We
add the typing rule {\sc T-ContFun} to enable structural functions to be
copied arbitrarily many times. This is also necessary for allowing a recursive
function to make multiple recursive calls.


A notable and deliberate difference between our presentation of the structural
rules and previous presentations of AARA is the lack of an explicit sharing
judgment. In previous presentations of AARA, the sharing judgment
$\share{\tau}{\tau_1}{\tau_2}$ relates three annotated types which are
structurally identical, differing only at their potential annotations. The
potential carried by a value $v$ of annotated type $\tau$ is the sum of the
potentials carried by the same value $v$ using annotations $\tau_1$ and $\tau_2$
instead. This judgment is then used to govern a contraction rule that preserves
potential:
\[
  \Rule[Right=T-Cont]{}
  { \share{\tau}{\tau_1}{\tau_2}\\ \Gamma, x_1 : \tau_1, x_2 : \tau_2; q \vdash e : \exty B C}
  { \Gamma, x : \tau; q \vdash [x,x / x_1,x_2]e : \exty B C }
\]
We argue that {\sc T-Cont} is unnecessary in our type system. A well-known fact
about linear type systems in general is that the contraction of purely positive
types is admissible, i.e. types like $\unitT + \unitT \times \unitT$ can be
copied without modifying the typing rules. A similar fact applies to our type
system\footnote{This is in fact true of the previous AARA type systems too.} --
purely positive types admit sharing. The only contraction rule we must add is
  {\sc T-ContFun}, which is needed to copy functions. The admissibility of sharing
is formally expressed in the following lemma.
\begin{lemma}
  \label{lem:admit-sharing}
  If $\tau, \tau_1, \tau_2$ are types such that the following statement holds:
  \begin{center}
    If $\Phi(v : \tau) = q$, then $q = \Phi(v : \tau_1)  + \Phi(v : \tau_2)$.
  \end{center}
  \qquad Then then we can construct $\curlyveedownarrow_{\tau,\tau_1,\tau_2} :
    \funT{\tau}{(\prodT{\tau_1}{\tau_2})}{\voidT}$, which encodes sharing.
\end{lemma}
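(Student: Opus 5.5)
The plan is to build the sharing function $\curlyveedownarrow_{\tau,\tau_1,\tau_2}$ as an explicit recursive value $\funS f x e$ and to type it with \textsc{T-Fun}, proceeding by induction on the structure of $\tau$. As in the discussion preceding the lemma, I read $\tau,\tau_1,\tau_2$ as structurally identical types that differ only in their annotations. Since arrow sets $\funset$ carry no outer annotation and closed functions always have potential $0$, I take the three types to coincide at arrow positions.

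A direct induction on the stated hypothesis does not go through. At a sum or list node, the hypothesis only gives a relation for the whole value. The relation for the payload comes shifted by a constant, namely the difference between the constructor annotations $p$ and $p_1+p_2$. I would therefore prove a stronger, affine statement. Suppose that for all closed $v$ of the underlying type
\[ \Phi(v:\tau) + a \;\ge\; \Phi(v:\tau_1) + \Phi(v:\tau_2) + b . \]
Then the set assigned to $\curlyveedownarrow$ should contain $\aty{\tau}{a} \to \exty{\aty{\prodT{\tau_1}{\tau_2}}{b}}{\aty{\voidT}{0}}$. Any exception type would do, since the body never raises. The lemma is the instance $a=b=0$. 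Because \textsc{T-WeakPot} and \textsc{T-WeakVar} make the system affine, an inequality suffices wherever the hypothesis is an equation.

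The cases follow the type. For $\unitT$, the body is $\retS{\pairS{\trivS}{\trivS}}$, typed via \textsc{T-WeakPot} and \textsc{T-Val}, using $a \ge b$ (instantiate the hypothesis at $v=\trivS$). For $\voidT$ the body is $\casevoidS{x}$. For an arrow set, duplicate $x$ with \textsc{T-ContFun} and return the pair.

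For products, apply \textsc{T-Casepair} and then two nested $\abt{let}$s that call the inductively constructed sharing functions on each component. The constants can be split because the hypothesis, restricted to pairs, is additive in the two components. I would instantiate one component at a minimizer of its potential difference: $\Phi$ is affine in counts of constructors, so the infimum can be taken componentwise.

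For sums, apply \textsc{T-Casesum}. In branch $i$ the payload comes with the extra constant $p_i+a$. The hypothesis at $\inlS{w}$ gives $\Phi(w)+p+a \ge \Phi_1(w)+q_1+\Phi_2(w)+r_1+b$, which is exactly the shifted hypothesis for the payload. Then call the inductive sharing function on the payload and rebuild both injections with \textsc{T-Inl}, which consumes $q_1$ and $r_1$.

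The list case is the main obstacle, because it is where recursion through $f$ is needed and the constants must be extracted carefully. Write the element annotations of $\tau,\tau_1,\tau_2$ as $p$, $p_1$, $p_2$. Instantiating the hypothesis at $\nilS$ gives $a\ge b$. Instantiating it at a list of $n$ copies of an element $w$ and letting $n\to\infty$ yields the per-element inequality $\Phi(w:\sigma)+p \ge \Phi(w:\sigma_1)+\Phi(w:\sigma_2)+p_1+p_2$, with no leftover constant. This lets the body be typed with \textsc{T-Caselist}. The nil branch returns $\pairS{\nilS}{\nilS}$ using $a\ge b$. The cons branch receives the extra $p$. It shares the head with the element sharing function at constants $(p,\,p_1+p_2)$, and recursively calls $f$ on the tail at the same annotation $(a,b)$. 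That arrow type lies in $\funset$ because \textsc{T-Fun} lets the body assume $f:\funset$. Finally it rebuilds the two cons cells with \textsc{T-cons}, paying $p_1$ and $p_2$. Composing the element and tail sharing functions as closed, potential-free values in the closure satisfies the premise $\isZero{\Gamma}$ of \textsc{T-Fun}.
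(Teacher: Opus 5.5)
Your strategy is right, and it is the same induction on $\tau$ that the paper states in a single line. The affine strengthening with constants $(a,b)$ is what actually makes that induction go through under the semantic hypothesis as stated. Your unit, arrow, sum and list cases are fine, including the $n\to\infty$ extraction of the per-element inequality.

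\textbf{The product split.} The product case fails as written. Let $s_i=\sup_w\bigl(\Phi(w:\sigma_{i,1})+\Phi(w:\sigma_{i,2})-\Phi(w:\sigma_i)\bigr)$ for the $i$-th components. Then ``the constants can be split'' is false if it means nonnegative $a_1+a_2\le a$, $b_1+b_2\ge b$ with $a_i-b_i\ge s_i$. For a counterexample take
\begin{itemize}
\item $\tau=\prodT{(\sumT{\aty{\unitT}{0}}{\aty{\unitT}{0}})}{(\sumT{\aty{\unitT}{1}}{\aty{\unitT}{1}})}$,
\item $\tau_1=\prodT{(\sumT{\aty{\unitT}{1}}{\aty{\unitT}{1}})}{(\sumT{\aty{\unitT}{0}}{\aty{\unitT}{0}})}$,
\item $\tau_2$ the same shape with all annotations $0$.
\end{itemize}
Every value has potential $1=1+0$, so the lemma's hypothesis holds. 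But at $(a,b)=(0,0)$ you get $s_1=1$ and $s_2=-1$, and $a_1=0$ would force $b_1\le -1$.

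Your nested lets can repair this, but you must fix the order. Call the sharing function first on the component with the smaller $s_i$, at constants $(\max(0,s_i),\max(0,-s_i))$, and let the constant it returns pay for the second call. From $s_1+s_2\le a-b$ you can check that every intermediate potential is nonnegative and that $b$ remains at the end. The order depends only on the types, so the body is the same term for every $(a,b)$ in the set, as \textsc{T-Fun} requires. In the example, the opposite order fails.

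\textbf{Uninhabited types.} Each instantiation you use (at a minimizer of one component, at $n$ copies of $w$, at $\inlS{w}$) presupposes an inhabitant. Take $\tau=\prodT{\voidT}{\liT{\aty{\unitT}{0}}}$, $\tau_1=\prodT{\voidT}{\liT{\aty{\unitT}{1}}}$ and $\tau_2=\tau$. The hypothesis holds vacuously, yet no constants $(a_2,b_2)$ admit a sharing function for the list component. So calling the inductively built functions on both components cannot work. The same problem arises for uninhabited sum payloads and list element types.

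Add a separate case for an uninhabited underlying type. Such types are generated from $\voidT$ by products with an uninhabited factor and sums with two uninhabited summands. Lists, $\unitT$ and arrow sets always have closed inhabitants; for arrow sets, $\funS{f}{x}{\appS{f}{x}}$ works. In this case, destructure $x$ with $\abt{casepair}$ and $\abt{casesum}$ down to a variable of type $\voidT$, and finish with $\casevoidS{\cdot}$. Discard everything else by \textsc{T-WeakVar} and \textsc{T-WeakPot}. This body types at every $(a,b)$, and with it your induction is complete.
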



The lemma can be proven by induction on the type $\tau$. The $\voidT$ exception
type ensures that $\curlyveedownarrow_{\tau,\tau_1,\tau_2}$ does not ``cheat''
by raising an exception to achieve the desired result type -- as there are no
closed values of type $\voidT$, it becomes impossible to raise an exception.
Compiler passes can be easily implemented to translate a fully structural
(contraction and weakening for all types) frontend language into our language,
appropriately inserting calls to $\curlyveedownarrow_{\tau,\tau_1,\tau_2}$, {\sc
    T-WeakVar}, and {\sc T-WeakPot}. Together, this allows values to be be reused or
discarded while potential is managed affinely.

\subsection{Typing Rules for Computational Effects}
It turns out that the typing rules for exceptions can be naturally extended to support effects and their handlers instead. We start by removing {\sf raise} and {\sf try} from the computation grammar, instead replacing them with the following:
\vspace{-1ex}
\[
  e ::= \ldots \mid \doS{\ell}{v} \mid \newhandlerS{e}{\ell}{x}{c}{e_\ell}{\Delta}{y}{e_r}
  \vspace{-1ex}
\]
The syntactic form \(\doS{\ell}{v}\) performs effect \(\ell\) with payload
\(v\). The syntactic form \(\newhandlerS{e}{\ell}{x}{c}{e_\ell}{\Delta}{y}{e_r}\) is an
effect handler that handles any effect \(\ell\) in effect signature \(\Delta\)
performed during the execution of computation \(e\). Effect \(\ell\) is handled
by branch \(e_\ell\), which may use payload \(x\) as well as continuation \(c\)
to resume the computation of \(e\) from the site where the effect was performed.
If \(e\) evaluates to a pure value, this value is substituted for \(y\) in the
return branch \(e_r\). Formally, effect signatures $\Delta$ have the following grammar:
\vspace{-1ex}
\[
  \Delta ::= \cdot \mid \Delta, \effectT{\ell}{A}{B}\
  \vspace{-1ex}
\]
Each effect \(\ell\) has a type specification \(A \Rightarrow B\), indicating
that it expects a payload of annotated type \(A\), and if resumed via the
continuation available to the handler, it will be replaced by a value of
annotated type \(B\). For example, \(\effectT{\sf inc}{\sf int}{\sf int}\) could
indicate an effect that increments some counter by the provided payload, and if
handled, return the updated value of the counter.

Furthermore, our typing judgments needs to be modified to accommodate effects
instead of exceptions. The value typing judgment is unchanged, but our new
judgment for computation typing is \(\boxed{\Gamma; q \vdash e : \exty B
  \Delta}\), where effect signature \(\Delta\) contains all the effects that
computation \(e\) can perform. Fortunately, the previous typing rules that do
not concern effects can all be retained by \(\alpha\)-renaming all \(C\)'s to
\(\Delta\)'s then reinterpreting all the annotated exception types as effect
signatures instead. This change similarly impacts the typing of functions.
Whereas the elements of the function set \(\funset\) had elements of the form
\(\funT{A}{B}{C}\), the elements now have the form \(\funT{A}{B}{\Delta}\).


\begin{figure}[t]

  \framebox{$\Gamma; q \vdash e : \exty B \Delta$} \hspace{.5em} computation typing \hfill $\;$

  \def \MathparLineskip {\lineskip=0.3cm}
  \begin{mathpar}
    \Rule{T-Linapp}
    { \Gamma_1; p \vdash v_1 : \linfunT{\aty{\tau}{q_1}}{B}{\Delta}
      \\ \Gamma_2; q_2 \vdash v_2 : \tau
    }
    { \Gamma_1,\Gamma_2; p + q_1 + q_2 \vdash \linappS {v_1} {v_2}  : \exty B \Delta}

    \Rule{T-Do}
    {  \Gamma; p \vdash v : \tau_1\\
      \effectT{\ell}{\aty{\tau_1}{q_1}}{\aty{\tau_2}{ q_2}} \in \Delta
    }
    { \Gamma; p + q_1 \vdash  \doS{\ell}{v}   : \exty {\aty{\tau_2}{q_2}} {\Delta}}

    \Rule[Right=T-Handle]{}
    { \Gamma_1; p \vdash e : \exty{\aty{\tau}{q}}{\Delta}\\
      \Gamma_3, y : \tau; q \vdash e_r : \exty{C}{\Delta'}\\
      \isZero{\Gamma_2}\\
      \forall \effectT{\ell}{\aty{\tau_1}{q_1}}{B} \in \Delta.\ (
      \Gamma_2, x : \tau_1, c : \linfunT{B}{C}{\Delta'}; q_1 \vdash
      e_\ell : \exty{C}{\Delta'})
    }
    { \Gamma_1, \Gamma_2, \Gamma_3; p \vdash \newhandlerS{e}{\ell}{x}{c}{e_\ell}{\Delta}{y}{e_r} : \exty{C}{\Delta'} }
  \end{mathpar}

  \framebox{$\Gamma; q \vdash v : \tau$} \hspace{.5em} value typing \hfill $\;$
  \vspace{-3ex}
  \begin{mathpar}
    \Rule[Right=T-LinFun]{}
    { \Gamma, x : \tau; p + q \vdash e : \exty B \Delta
    }
    { \Gamma; p \vdash \linfunS{x}{e} : \linfunT{\aty{\tau}{q}}{B}{\Delta}}
  \end{mathpar}
  \vspace{-3ex}
  \caption{New typing rules for effect handlers and linear functions}
  \vspace{-3ex}
  \label{fig:comp-typing-effects}
\end{figure}

\paragraph{Linear Functions}
To keep resource analysis tractable, we enforce that the continuations \(c\) in
the effect handler \(\newhandlerS{e}{\ell}{x}{c}{e_\ell}{\Delta}{y}{e_r}\) are
all one-shot by typing continuations as linear functions. This necessitates
adding linear functions to the language:
\vspace{-1ex}
\begin{mathpar}
  \tau ::= \hdots \mid \linfunT{A}{B}{\Delta}

  v ::= \ldots \mid \linfunS{x}{e}

  e ::= \ldots \mid \linappS{v_1}{v_2}
  \vspace{-1ex}
\end{mathpar}
With the addition of linear function types, we extend the potential zeroing
function $\abs{\cdot}$ (defined in Figure \ref{fig:type-zeroing}).
Interestingly, the right move is to make $\abs{\cdot}$ partial, and reject
linear function types as an input. This errs on the side of caution, ensuring
that $\isZero{\linfunT{A}{B}{\Delta}}$ never holds, meaning linear functions can
never get captured in the closures of structural functions, where they would be
at risk of being invoked multiple times (nonlinearly).

Additionally, whereas our structural function types are resource polymorphic
sets, we define linear function types to always be a single resource monomorphic
type. This restriction simplifies type inference and suffices for most use
cases, including all programs in our evaluation.

\paragraph{New Typing Rules}
The new typing rules for linear functions and computational effects are
presented in figure \ref{fig:comp-typing-effects}. We describe the key rules:
\begin{description}
  \item [T-LinFun] The linear function abstraction allows us to define functions
        that can invoke one-shot continuations available to effect handlers, as
        well as other linear functions, which is forbidden to structural
        functions \(\funS{f}{x}{e}\). This is reflected in its typing rule,
        where the lack of constraints on \(\Gamma\) and $p$ indicate that it may
        capture potential from its closure.
  \item[T-Do] Much like {\sc T-Raise}, the potential \(p + q_1\) is split. The
        context \(\Gamma\) along with potential \(p\) are used to type \(v\),
        the effect payload, as \(\tau_1\). \(\aty{\tau_1}{q_1}\) is the
        annotated input type of this effect \(\ell\) as specified in in the
        effect signature. The crucial difference is that computation can resume
        if an enclosing effect handler calls its continuation. Thus instead of
        having an arbitrary result type, it has result type
        \(\aty{\tau_2}{q_2}\), which is the annotated output type of effect
        \(\ell\) as specified in the effect signature.

  \item[T-Handle] Context \(\Gamma_1, \Gamma_2, \Gamma_3\) is split. $\Gamma_1$
        and $p$ are used to type \(e\), the effectful computation being handled.
        It has an effect signature \(\Delta\), which matches the branches of the
        handler. As for the handler branches, there are several crucial
        differences from the {\sc T-Try} rule.

        First, $\Gamma_2$, which is used to type each effect handler branch, has
        the constraint that $\isZero{\Gamma_2}$. This is because an effect
        handler branch can be executed an arbitrary number of times, just like
        the function body of a structural function, so we must forbid it from
        capturing potential or linear functions in \(\Gamma\). {\sc T-Try}
        lacked this restriction because exception handlers are executed at most
        once.

        Next, in addition to the effect's inputs payload \(x : \tau_1\) and
        potential \(q_1\), as specified by effect signature \(\Delta\), our
        handler branch is also able to use continuation \(c\). This continuation
        has a linear function type to enforce the one-shot discipline. Its
        argument type matches the output type of effect \(\ell\), since the
        eventual output of a {\sc do} is precisely the argument which we apply
        the continuation to. The continuation's result type \(C\) and its effect
        signature $\Delta'$ agree with the final result type, since the
        continuation reinstalls the effect handler. Given this context, each
        effect handler branch \(e_\ell\) must have type \(C\) and also agree on
        an effect signature \(\Delta'\).

        Finally, the return branch $e_r$ is typed using $\Gamma_3$ along with
        the pure result of $e$ bound to $y$. Its result type also agrees with
        the final result type $C$ and effect signature $\Delta'$.

\end{description}

\paragraph{Explicit Forwarding}
Note that our typing rule for effect handlers, {\sc T-Handle}, enforces explicit
effect forwarding \cite{DBLP:conf/icfp/KammarLO13} by requiring that a handler
handle all effects in $\Delta$, the effect signature of the computation $e$
being handled. The effects that are not morally handled still must be explicitly
handled, but their handler branches $e_l$ would forward the effect and thus
expose it within $\Delta'$. The result of enforcing this paradigm is that
whenever an effect is performed, it always gets handled by the nearest enclosing
handler (if one exists).

Concretely, a handler branch for an effect $\ell_f$ that
is explicitly forwarded could be implemented as
$$\{\ell_f : x.c. \letS{\doS{\ell_f}{x}}{y}{\linappS{c}{y}} \}$$ The handler
branch forwards the effect $\ell_f$, then also forwards the result of the effect
back to the computation being handled by applying the delimited continuation $c$
to the effect result $y$.

\paragraph{Subsuming Exceptions}
\label{sec:effect-exceptions}
We certainly hope to be able to implement exceptions using effects. Well, this
can be achieved by defining an exception effect whose input type is the
exception type \(C\) and output type is \(\voidT\). Exception handlers by design
cannot resume execution from where the exception was raised. Reusing the trick
we used for lemma \ref{lem:admit-sharing}, since there are no closed
values of type \(\voidT\), setting it to be the output type of the exception
effect enforces this restriction, as it becomes impossible for the effect
handler branch to construct an argument to call its continuation with.
Formally, here is the translation of exception types and the {\sf raise} and {\sf try} computations into our new effect handler setting, where $\ell_e$ is the exception effect label:
\vspace{-1ex}
\begin{align*}
  C                  & \rightsquigarrow \{\effectT{\ell_e}{C}{\aty{\voidT}{0}}\}                \\
  \raiseS{v}         & \rightsquigarrow \letS{\doS{\ell_e}{v}}{x}{\casevoidS{x}}                \\
  \tryS{e_1}{x}{e_2} & \rightsquigarrow \newhandlerS{e_1}{\ell_e}{x}{\_}{e_2}{\{\ell_e\}}{y}{y}
  \vspace{-1ex}
\end{align*}
Also, our translation of {\sf try} into {\sf handler} does not perfectly conserve potential, as {\sf try} allows the handler to use linear variables from the context, while {\sf handler} restricts the handler branches to structural variables. This restriction exists because our effect handlers are designed to handle resumable effects that can be handled multiple times, whereas exception handlers are only ever run once. To address this, we could add special cases for effect handler branches that never calls their continuations, but we omit that for brevity.

%% file: dynamics.tex
To define the cost semantics of our language, we introduce an abstract
machine called the K-machine~\cite{Harper12}. The K-machine
includes an explicit control stack that records the work that remains
to be done after a step is taken during the evaluation of a program.

A stack \(k\) is a list of stack frames, each of which is either a
binder \(x.e\) that sequences computation, or an effect handler awaiting the result of its computation \(\newhandlerS{\_}{\ell}{x}{c}{e_\ell}{\Delta}{y}{e_r}\). Having only two kinds of frames is a benefit that stems from the modal separation of values and computations.
Program evaluation is performed in a transition system with the following
kinds of machine states:
\begin{itemize}
  \item $k\eval e$. This says a closed expression $e$ is being evaluated on a \textit{stack} $k$.
  \item $k\return v$. This  says a closed value $v$ is being returned to a \textit{stack} $k$.
  \item $k\exnret  (\ell, v, k') $. This says an effect \(\ell\) is being propagated to stack \(k\), carrying payload \(v\) and accumulator for the delimited continuation \(k'\).
\end{itemize}
The full grammar is given below.
\[
  \begin{array}{r c l l}
    \text{Frame}~f & \Coloneqq
                   & x.e       & \text{sequence}                                                                        \\
                   & \mid      & \newhandlerS{\_}{\ell}{x}{c}{e_\ell}{\Delta}{y}{e_r}{\Delta} & \text{effect handler}   \\
    \text{Stack}~k & \Coloneqq
                   & \epsilon  & \text{empty stack}                                                                     \\
                   & \mid      & k;f                                                          & \text{frame sequence}   \\
    \text{State}~s & \Coloneqq
                   & k\eval e  & \text{evaluate expression}                                                             \\
                   & \mid      & k\return v                                                   & \text{return value}     \\
                   & \mid      & k \exnret (\ell, v, k')                                      & \text{propagate effect}
  \end{array}
\]
We also introduce a new value for delimited continuations: $\dcontS{k}$. It is a
dynamic value, meaning it never appears in source programs of our language, but
rather arises during the steps of our dynamics, to be defined below. It reifies
a control stack \(k\) into a delimited continuation of a linear function type.
When we apply it to some argument value $v$, we transition to $k \return v$, and
return the final value this steps to.





We can now define the dynamics of our language via transitions between machine
states. An initial state is one that evaluates an expression on an empty stack:
$\epsilon \eval e;q$. A final state either returns a value to an empty stack
$\epsilon \return v;q$ or propagates an unhandled effect to an empty stack
$\epsilon \exnret (\ell, v, k);q$. Figure~\ref{fig:state-trans} contains select
rules (see Appendix~\refappendix{sec:appendix-dynamics} for complete set) for
the transition judgment $\boxed{s; q \steps s'; q'}$, which states that machine
state \(s\) given \(q \in \Q_{\geq 0}\) resources transitions to machine state
\(s'\) with \(q' \in \Q_{\geq 0}\) resources remaining. It makes use of the $@$
operator that appends two stacks, which is defined inductively over the
structure of the second stack as follows:
\vspace{-1ex}
\begin{mathpar}
  k ~@~ \epsilon := k

  k ~@~ k'; f    := (k ~@~ k'); f
  \vspace{-1ex}
\end{mathpar}

\begin{figure}[t]
  \framebox{$s; q \steps s'; q'$} \hspace{.5em} state transitions \hfill $\;$
  \def \MathparLineskip {\lineskip=0.3cm}
  \begin{mathpar}
    \Rule{D-let}{ }{k\eval\letS{e_1}{x}{e_2};q\steps k;x.e_2\eval e_1;q}

    \Rule[rightskip=1em,leftskip=1em]{D-seq}{ }{k;x.e\return v;q\steps k\eval [v/x]e;q}

    \Rule{D-tick}{p\geq q}{k\eval\tickS{q};p\steps k\return \trivS;p-q}

    \Rule{D-Ret}{ }{k\eval\retS{v};q\steps k\return v;q}


    \Rule{D-Cons}{ }{k\eval\caseS{\consS{v_1}{v_2}}{e_0}{x}{y}{e_1};q\steps k \eval [v_1,v_2/x,y]e_1;q}

    \Rule{D-Dcont}{ }{k \eval \linappS{\dcontS{k'}}{v}; q \steps k ~@~ k' \return v; q}\!\!\!\!\!\!\!\!\!\!\!\!\!\!\!\!\!\!\!\!\!\!\!\!\!


    \Rule{D-Capture}{ }{k;x.e\exnret (\ell, v, k');q\steps k\exnret (\ell, v, x.e ~@~ k');q}

    \Rule{D-Do}{ }{k\eval\doS{\ell}{v};q\steps k\exnret (\ell, v, \epsilon);q}

    \Rule{D-normal}{ }{k;\newhandlerS{\_}{\ell}{x}{c}{e_\ell}{\Delta}{y}{e_r} \return v;q \steps  k\eval [v/y]e_r;q}

    \Rule[Right=D-handle]{}
    { \ell' \in \Delta\\ d = \dcontS{\newhandlerS{\_}{\ell}{x}{c}{e_\ell}{\Delta}{y}{e_r} ~@~ k'} }
    {k;\newhandlerS{\_}{\ell}{x}{c}{e_\ell}{\Delta}{y}{e_r}{\Delta}\exnret (\ell', v, k');q
    \steps k\eval [v, d/x, c]e_{\ell'};q}
  \end{mathpar}
  \caption{Select State Transitions}
  \vspace{-3ex}
  \label{fig:state-trans}
\end{figure}

The rules are mostly standard and the resource component $q$ of the
machines state is not modified except in the rule \textsc{D-tick}. We discuss that rule, as well as other key rules effects.
\begin{description}
  \item[D-Tick] requires that there are enough resources \(p\) supplied to cover the \(q\) resources in \(\tickS{q}\). In the resulting state, there are then $p-q$ resources.
  \item[D-Do] performs an effect and is similar to the previously presented \textsc{D-raise} for exceptions. The key difference is that rather than just a payload value \(v\), it also returns the effect label \(\ell\) and an empty delimited continuation \(\cdot\) that will capture more frames as the propagation proceeds.
  \item[D-Capture] shows the propagate effect state \(k \exnret (\ell, v, k') \) captures sequence frames and accumulates them into $k'$ as the effect propagates down the stack.
  \item[D-Handle] handles an effect being propagated. Our type system ensures that the effect's label \(\ell\) will be in the signature \(\Delta\) of the handler frame. Handling the effect means we switch from propagating an effect to evaluating the handler body \(e_{\ell'}\). During this, the effect payload \(v\) is substituted for \(x\), and more importantly, a delimited continuation value is substituted for \(c\). \textit{Crucially, this delimited continuation is \(k'\), which consists of all the stack frames starting at and including the handler frame itself, all the way to the point where the effect was performed.} This ``reinstallation'' makes it clear that we are working with deep handlers.
  \item[D-Dcont] states that when a delimited continuation \(\dcontS{k'}\) is applied to a value \(v\) during linear function application, we simply append the stack within, \(k'\), to our current stack \(k\), and return \(v\) to it. Ultimately, when the \(k'\) part of the stack finishes running, whatever value it returns will just be returned to the original stack \(k\) to proceed on.
\end{description}
%


%% file: soundness.tex
We now present the type soundness for our language via progress and
preservation theorems regarding an abstract stack machine.
This proof technique is novel and one of the main contributions of
this paper.
We start with several lemmas and additional judgments.

\paragraph{Substitution Lemma}

A key ingredient to the type soundness is the following substitution
lemma. It states that if we substitute a value $v_1$ for a variable
$x_1:\tau_1$ in a well-typed value (or computation) then we can type
the resulting value (or computation) but have to account for the
potential $q_1 = \Phi(v_1 : \tau_1)$ of the value.

\begin{lemma}[Substitution]
  \label{lem:subst}

  If $\cdot;q_1\vdash v_1:\tau_1$:
  \begin{enumerate}
    \item If $\Gamma,x_1:\tau_1;q_2\vdash v:\tau$, then $\Gamma;q_1+q_2\vdash [v_1/x_1]v:\tau$.
    \item If $\Gamma,x_1:\tau_1;q_2\vdash e:B\odot \Delta$, then $\Gamma;q_1+q_2\vdash
            [v_1/x_1]e:B\odot \Delta$.
  \end{enumerate}
\end{lemma}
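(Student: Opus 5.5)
We prove both parts simultaneously by mutual induction on the typing derivations of $\Gamma,x_1:\tau_1;q_2\vdash v:\tau$ and $\Gamma,x_1:\tau_1;q_2\vdash e:B\odot\Delta$. We fix the closed value $v_1$ with $\cdot;q_1\vdash v_1:\tau_1$ throughout. The mutual induction is needed because value rules (\textsc{T-Fun}, \textsc{T-LinFun}) have computation premises, and computation rules have value premises.

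\textbf{Value rules.} In \textsc{T-Var} the context is exactly $x_1:\tau_1$ and $q_2=0$, so $[v_1/x_1]x_1=v_1$ is typed by the assumption with potential $q_1=q_1+0$ in the empty context.

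In the multiplicative rules (\textsc{T-Pair}, \textsc{T-Cons}, \textsc{T-Inl}, \textsc{T-Inr}) the context $\Gamma,x_1:\tau_1$ is split, so $x_1$ lands in exactly one premise. We apply the induction hypothesis to that premise, which adds $q_1$ to its potential. Substitution is vacuous in the other premise, since $x_1$ does not occur there: variables in values must be typed by \textsc{T-Var} and the context is linear. Recombining gives potential $q_1+q_2$.

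\textsc{T-Unit} and \textsc{T-Nil} cannot have $x_1$ in the context.

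For \textsc{T-Fun} we have $|\Gamma,x_1:\tau_1|=\Gamma,x_1:\tau_1$, so $\tau_1$ is potential-free. By Lemma~\ref{lem:zero}, $q_1=\Phi(v_1:\tau_1)=0$. We then apply the induction hypothesis to each body premise, for every arrow in $\funset$, which yields potential $q+0$ as required.

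\textsc{T-LinFun} is a direct application of the induction hypothesis to its body.

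\textbf{Computation rules.} Most cases follow the same pattern: locate the premise whose context contains $x_1$, apply the induction hypothesis there, and shift $q_1$ into that premise's potential.

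\begin{itemize}
\item In \textsc{T-Let} and \textsc{T-Try}, if $x_1\in\Gamma_1$ the extra $q_1$ goes into the overall potential $q$. If $x_1\in\Gamma_2$, the second premise's potential would become $q'+q_1$. This does not fit directly, since $q'$ is the potential returned by $e_1$ (or $r$ from the exception type). We therefore first weaken $e_1$'s result type: from $\Gamma_1;q\vdash e_1:\langle\tau,q'\rangle\odot C$ we want $\Gamma_1;q+q_1\vdash e_1:\langle\tau,q'+q_1\rangle\odot C$. This requires an auxiliary ``potential shifting'' lemma: if $\Gamma;q\vdash e:\langle\tau,p\rangle\odot\Delta$ then $\Gamma;q+c\vdash e:\langle\tau,p+c\rangle\odot\Delta$, together with the analogous statement for the exception annotation $r$ in \textsc{T-Try}. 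We prove it first by an easy induction, since constant potential threads straight through every rule: it is absorbed into $p$ in \textsc{T-Val}, \textsc{T-Tick}, and \textsc{T-Raise}, and pushed into the continuation premise in \textsc{T-Let}.
\item In \textsc{T-Casesum} and \textsc{T-Caselist} the branch context $\Gamma_2$ is shared, so if $x_1\in\Gamma_2$ we apply the induction hypothesis in every branch. The bound variables are distinct from $x_1$ after $\alpha$-renaming.
\item In \textsc{T-App}, if $x_1\in\Gamma_1$ the function premise demands potential $0$. It then follows as in \textsc{T-Fun} that $q_1=0$, because a value typed with potential $0$ cannot draw potential from its variables' types unless those are potential-free. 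More directly, we type $[v_1/x_1]v_1'$ via the induction hypothesis, obtaining potential $q_1$, and then show $q_1=0$ using Lemma~\ref{lem:zero} applied to $\funset=|\funset|$.
\item In \textsc{T-Handle}, if $x_1\in\Gamma_2$ then $\tau_1$ is potential-free and $q_1=0$, so the induction hypothesis applies in each branch without disturbing the branch potential $q_1$ of the effect signature.
\end{itemize}

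\textbf{Structural rules.}
\begin{itemize}
\item \textsc{T-WeakVar}: if the weakened variable is $x_1$, the goal follows by weakening potential with \textsc{T-WeakPot}, since $q_1\ge 0$. Otherwise we use the induction hypothesis.
\item \textsc{T-WeakPot}: use the induction hypothesis together with monotonicity of addition.
\item \textsc{T-ContFun}: if the contracted variable is $x_1$, then $\tau_1=\funset$ and $q_1=0$. We apply the induction hypothesis twice, for $x_1$ and then its copy, substituting $v_1$ for both. The fact that $[v_1,v_1/x_1,x_2]$ agrees with $[v_1/x][x,x/x_1,x_2]$ is routine.
\end{itemize}

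\textbf{Main obstacle.} The expected obstacle is the \textsc{T-Let}/\textsc{T-Try} case with $x_1$ in the continuation or handler context, which is what requires the shifting lemma. Its \textsc{T-Try} variant must shift the exception annotation $r$ rather than the return annotation.
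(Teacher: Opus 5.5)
\textbf{There is a genuine gap, and it sits in the one step your proof shares with the paper's.} Your architecture matches the paper's: simultaneous induction on the value and computation judgments, plus an auxiliary shifting lemma (the paper's Potential Relax lemma) for \textsc{T-Let} when $x_1$ lies in the context of $e_2$. The paper also uses that lemma for \textsc{T-Handle} with $x_1$ in the return-branch context $\Gamma_3$, a case you do not discuss. The gap is that auxiliary lemma itself. Your justification, that constant potential ``threads straight through every rule'', is false. In \textsc{T-Do} the result annotation $\aty{\tau_2}{q_2}$ is dictated by $\Delta$. In \textsc{T-Linapp} it is dictated by the monomorphic linear arrow. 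In \textsc{T-App} it is dictated by whichever arrows happen to lie in $\funset$. The only structural rule that touches potential, \textsc{T-WeakPot}, raises the input without changing the result, and the system has no relax rule. So from $\cdot;0\vdash\doS{\ell}{\trivS}:\exty{\aty{\unitT}{0}}{\Delta}$ you cannot obtain $\cdot;c\vdash\doS{\ell}{\trivS}:\exty{\aty{\unitT}{c}}{\Delta}$.

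\textbf{Part 2 of the statement is false as the rules stand,} so no local patch to the induction can work. Take
\begin{itemize}
\item $\Delta=\{\effectT{\ell}{\aty{\unitT}{0}}{\aty{\unitT}{0}}\}$ and $\tau_1=\liT{\aty{\unitT}{1}}$;
\item $v_1=\consS{\trivS}{\nilS}$, so that $\cdot;1\vdash v_1:\tau_1$;
\item $e=\letS{\doS{\ell}{\trivS}}{z}{\caseS{x_1}{\retS{\trivS}}{h}{t}{\tickS{1}}}$.
\end{itemize}
Then $x_1:\tau_1;0\vdash e:\exty{\aty{\unitT}{0}}{\Delta}$ holds: the tick is paid from the list annotation, and $z,h,t$ are discarded by \textsc{T-WeakVar}. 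But $\cdot;1\vdash[v_1/x_1]e:\exty{\aty{\unitT}{0}}{\Delta}$ is not derivable, for these reasons:
\begin{itemize}
\item \textsc{T-Let} gives all the constant potential to $\doS{\ell}{\trivS}$.
\item That computation's result is forced to be $\aty{\unitT}{0}$ by $\Delta$, so the continuation is typed at potential $0$.
\item The closed scrutinee must then get list annotation $0$, since value typing has no weakening.
\item Nothing is left to pay for $\tickS{1}$.
\end{itemize}
The paper's proof has exactly this hole: its Potential Relax lemma is asserted ``by induction on computation typing rules''.

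\textbf{Repair.} Proving the lemma requires changing the system. Options include adding AARA's usual relax rule for constant potential, or letting \textsc{T-Let} route constant potential to $e_2$ around $e_1$. Preservation would then presumably need stack frames and $\dcontS{k}$ that can hold potential, because that potential sits in the captured continuation while a handler runs.

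\textbf{Two smaller omissions.}
\begin{itemize}
\item You skip \textsc{T-Dcont}. As written it admits an arbitrary $\Gamma$ at potential $0$, so part 1 also fails when $x_1$ with $q_1>0$ sits in that context; the context should be read as empty.
\item In \textsc{T-ContFun}, your second use of the induction hypothesis is applied to a derivation produced by the first use, not to a subderivation. You should state the lemma for simultaneous substitution.
\end{itemize}
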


The lemma is proved by induction on the judgments
$\Gamma,x_1:\tau_1;q_2\vdash v:\tau$ and
$\Gamma,x_1:\tau_1;q_2\vdash e:B \odot \Delta$ respectively.
The proof can be found in the Appendix~\refappendix{sec:appendix-soundness}.

\begin{figure}[t]
  \framebox{$\exty{A_1}{\Delta_1} \takes k \takes \exty{A_2}{\Delta_2}$} \hspace{.5em} stack typing \hfill $\;$
  \center
  \def \MathparLineskip {\lineskip=0.3cm}
  \begin{mathpar}
    \Rule[right=K-Emp]{}{ }{\exty B \Delta \takes \epsilon \takes B\odot \Delta}

    \Rule[right=K-Bnd]{}
    {\exty A {\Delta_1} \takes k \takes \exty B {\Delta_2}\\
      x : \tau; q \vdash e : \exty B {\Delta_2}}
    {\exty A {\Delta_1} \takes k; x.e \takes \exty {\aty \tau q} {\Delta_2}}

    \Rule[right=K-Handler]{}
    {\exty{A}{\Delta_1} \takes k \takes \exty B {\Delta_3}\\
      y : \tau; q \vdash e_r : B \odot \Delta_3\\\\
      \forall \effectT{\ell}{\aty{\tau_1}{q_1}}{C} \in \Delta_2.\
      x : \tau_1, c : \linfunT{C}{B}{\Delta_3}; q_1 \vdash
      e_\ell : \exty{B}{\Delta_3}\\
    }
    {\exty{A}{\Delta_1} \takes k; \newhandlerS{\_}{\ell}{x}{c}{e_\ell}{\Delta_2}{y}{e_r} \takes \exty{\aty \tau q}{\Delta_2}}
  \end{mathpar}
  \vspace{-3ex}
  \caption{Type rules for Stacks}
  \vspace{-3ex}
  \label{fig:stack-typing}
\end{figure}

\paragraph{Type Judgments for Stacks and States}
Figure \ref{fig:stack-typing} contains the rules for the stack typing judgment
$\boxed{\exty{A_1}{\Delta_1} \takes k \takes \exty{A_2}{\Delta_2}}$, which
states that stack $k$ is well-formed when it accepts either a value of annotated
type $A_2$, or an effect in signature $\Delta_2$, and returns either a value of
annotated type $A_1$, or an effect in signature $\Delta_1$. The stack typing
judgment finally allows us to define a typing rule for \(\dcontS{k}\):
\[
  \Rule[Right=T-Dcont]{}
  {\exty{A_1}{\Delta_1} \takes k \takes \exty{A_2}{\Delta_2}}
  {\Gamma; 0 \vdash \dcontS{k} : \linfunT{A_2}{A_1}{\Delta_1}}
\]
That is, a delimited continuation encoded by stack \(k\) is a linear function that takes type \(A_2\) that \(k\) takes, an returns type \(A_1\) with effect signature \(\Delta_1\) that \(k\) outputs.

Figure \ref{fig:state-typing} contains the rules for judgment judgment $\boxed{q
    \vdash s}$, which specify that execution state \(s\) is well-formed given \(q\)
potential.

The stack typing judgment allows us to state the following lemma,
which says that two stacks can only be appended if the output type and effect
signature of the second stack agrees with the input type and effect signature of
the first stack. Following the definition of $~@~$, the lemma can be proved by
induction over the struture of the first stack.
\begin{lemma}[Stack Append Typing]
  \label{lem:stack-append}
  $\exty{B_2}{\Delta_2} \takes k_1 ~@~ k_2 \takes \exty{B_1}{\Delta_1}$ if and only if there exists $B, \Delta$ such that $\exty{B_2}{\Delta_2} \takes k_1 \takes \exty{B}{\Delta}$ and $\exty{B}{\Delta} \takes k_2 \takes \exty{B_1}{\Delta_1}$
\end{lemma}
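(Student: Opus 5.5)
We prove the lemma by induction on the structure of the second stack $k_2$. This follows the inductive definition of $@$, which recurses on its right argument, even though the text mentions the first stack. Both directions are handled together, and the main tool is inversion on the stack typing rules \textsc{K-Emp}, \textsc{K-Bnd} and \textsc{K-Handler}.

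\emph{Base case} $k_2 = \epsilon$. Here $k_1 ~@~ \epsilon = k_1$.
For ($\Rightarrow$), take $B = B_1$ and $\Delta = \Delta_1$. The first conjunct is the hypothesis, and $\exty{B_1}{\Delta_1} \takes \epsilon \takes \exty{B_1}{\Delta_1}$ holds by \textsc{K-Emp}.
For ($\Leftarrow$), inverting \textsc{K-Emp} on $\exty{B}{\Delta} \takes \epsilon \takes \exty{B_1}{\Delta_1}$ forces $B = B_1$ and $\Delta = \Delta_1$. The first premise then becomes exactly the goal.

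\emph{Inductive case} $k_2 = k_2'; f$. Here $k_1 ~@~ (k_2'; f) = (k_1 ~@~ k_2'); f$, and we split on the kind of frame $f$.

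\emph{Sequence frame} $f = x.e$. For ($\Rightarrow$), inverting \textsc{K-Bnd} gives $B_1 = \aty{\tau}{q}$, together with $\exty{B_2}{\Delta_2} \takes k_1 ~@~ k_2' \takes \exty{B'}{\Delta_1}$ and $x:\tau; q \vdash e : \exty{B'}{\Delta_1}$. The induction hypothesis splits the first judgment into $\exty{B_2}{\Delta_2} \takes k_1 \takes \exty{B}{\Delta}$ and $\exty{B}{\Delta} \takes k_2' \takes \exty{B'}{\Delta_1}$. Reapplying \textsc{K-Bnd} with the same premise on $e$ gives $\exty{B}{\Delta} \takes k_2';x.e \takes \exty{B_1}{\Delta_1}$.
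For ($\Leftarrow$), we invert \textsc{K-Bnd} on the typing of $k_2'; x.e$. We then use the induction hypothesis in the reverse direction to type $k_1 ~@~ k_2'$, and reapply \textsc{K-Bnd}.

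\emph{Handler frame.} This case is identical in structure. Inverting \textsc{K-Handler} exposes its three premises:
\begin{itemize}
\item the typing of the tail stack at $\exty{B'}{\Delta_3}$,
\item the typing of $e_r$,
\item the universally quantified branch typings.
\end{itemize}
Only the first premise mentions the stack, so we apply the induction hypothesis to it alone. The other premises are closed typings independent of $k_1$, and we carry them over unchanged when reapplying \textsc{K-Handler}.

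No substitution or potential reasoning is needed. The only point requiring care is that the stack typing rules are syntax-directed on the last frame. Because of this, inversion is unambiguous and the intermediate type $\exty{B'}{\Delta}$ is uniquely determined by the derivation. The intermediate pair $(B,\Delta)$ in the statement comes directly from the induction hypothesis, never from guessing. I expect the main, mild obstacle to be bookkeeping: the induction must be on $k_2$, not on $k_1$, so that it matches the definition of $@$. I would therefore state the induction on $k_2$ explicitly, despite the surrounding text's phrasing.
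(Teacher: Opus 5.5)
Your proof is correct and follows the paper's argument: a structural induction that follows the definition of $@$ and inverts \textsc{K-Emp}, \textsc{K-Bnd} and \textsc{K-Handler} on the last frame. You are also right that this induction has to be on $k_2$, the argument on which $@$ recurses, so the paper's phrase ``the first stack'' is best read as a slip; just state the induction hypothesis for all output pairs, since you apply it at $\exty{B'}{\Delta_1}$ and, in the handler case, at $\exty{B'}{\Delta_3}$.
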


\begin{figure}[t]
  \framebox{$q \vdash s$} \hspace{.5em} state typing \hfill $\;$
  \center
  \def \MathparLineskip {\lineskip=0.3cm}
  \begin{mathpar}
    \Rule[right=ST-Exp]{}
    {
      \cdot;q \vdash e: \exty B \Delta\\
      \_ \takes k \takes \exty B {\Delta}
    }
    {q \vdash k \eval e}\!\!\!\!\!\!\!\!\!

    \Rule[right=ST-Val]{}{\cdot;q_1 \vdash v:\tau\\ \_ \takes k\takes \exty {\aty \tau {q_2} } \Delta}{q_1 + q_2 \vdash  k\return v}\!\!\!\!\!\!

    \Rule[right=ST-Eff]{}
    {
      \cdot;q_1 \vdash v:\tau \\
      \effectT{\ell}{\aty{\tau}{p}}{C} \in \Delta'\\
      \_ \takes k\takes \exty B {\Delta'}\\
      \exty{B}{\Delta'} \takes k' \takes \exty{C}{\Delta''}
    }
    {q_1 + p \vdash k\exnret (\ell, v, k')}
  \end{mathpar}
  \vspace{-3ex}
  \caption{Type rules for Execution States}
  \vspace{-3ex}
  \label{fig:state-typing}
\end{figure}

%


\paragraph{Type Soundness} With the well-formedness of execution
states defined, we are finally able to state the theorems of soundness: preservation and progress.

\begin{theorem}[Preservation]
  If $q\vdash s$ and $s;q\steps s_0;q_0$ then $q_0\vdash s_0$.
\end{theorem}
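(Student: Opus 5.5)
The plan is a case analysis on the derivation of $s;q\steps s_0;q_0$, inverting the state typing $q\vdash s$ in each case. Before starting, we prove an \emph{inversion lemma} for computation typing. It says that any derivation of $\cdot;q\vdash e:\exty B\Delta$ can be normalized to end with a syntax-directed rule preceded only by \textsc{T-WeakPot}. The weakened potential is then available as slack. This holds because in an empty context \textsc{T-WeakVar} and \textsc{T-ContFun} can only occur above the last syntax-directed rule. With slack, we always get typings with potential $q'\le q$. We also need a small monotonicity fact: if $q\vdash s$ holds via slack, it still holds. This follows by pushing \textsc{T-WeakPot} into the expression typing in \textsc{ST-Exp}, or into the returned annotation in \textsc{ST-Val}. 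The latter requires an admissible weakening of the stack's input annotation: if $\_\takes k\takes\exty{\aty\tau{q_2}}\Delta$, then also for $q_2'\ge q_2$, using \textsc{T-WeakPot} on the frame's body or return branch.

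\textbf{Evaluation-state cases} ($k\eval e$). We invert \textsc{ST-Exp}.
\begin{itemize}
\item \textsc{D-let}: split by \textsc{T-Let} and build the frame with \textsc{K-Bnd}.
\item \textsc{D-Ret}: use \textsc{T-Val} and \textsc{ST-Val}.
\item \textsc{D-tick}: for $q>0$, \textsc{T-Tick$^+$} gives $q+p$ with $p$ flowing into the stack annotation, so $p-q$ remains correct. The negative case is symmetric.
\item Case rules, e.g.\ \textsc{D-Cons}: invert \textsc{T-Cons} on the scrutinee to get $q_1=q_h+q_t+p$, then apply Lemma~\ref{lem:subst} twice to the cons branch. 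Sums and pairs go the same way.
\item \textsc{T-App} on $\funS f x e$: invert \textsc{T-Fun} at the chosen arrow type, then substitute the function itself for $f$ (with potential $0$) and the argument for $x$. \textsc{T-Linapp} on $\linfunS{x}{e}$ is analogous, using the captured $p$.
\item \textsc{D-Dcont}: \textsc{T-Dcont} gives $\exty{A_1}{\Delta_1}\takes k'\takes\exty{A_2}{\Delta_2}$, and Lemma~\ref{lem:stack-append} composes it with $k$, after which \textsc{ST-Val} applies.
\item \textsc{D-Do}: \textsc{T-Do} gives $p+q_1$. We use \textsc{ST-Eff} with $k'=\emptystk$ typed by \textsc{K-Emp}, taking $C=\aty{\tau_2}{q_2}$ the output type.
\end{itemize}

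\textbf{Return-state cases} ($k\return v$).
\begin{itemize}
\item \textsc{D-seq}: invert \textsc{K-Bnd} and substitute with potential $q_1$ plus the frame's $q_2$.
\item \textsc{D-normal}: invert \textsc{K-Handler} and substitute into $e_r$.
\end{itemize}

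\textbf{Effect-propagation cases.}
\begin{itemize}
\item \textsc{D-Capture}: from $\_\takes k;x.e\takes\exty B{\Delta'}$ and $\exty B{\Delta'}\takes k'\takes\exty C{\Delta''}$, \textsc{K-Bnd} inversion gives $k$ with output $\exty{\aty\tau q}{\Delta'}$ and the frame. Lemma~\ref{lem:stack-append} then types $x.e~@~k'$ (viewed as $\emptystk;x.e$ appended with $k'$) from $\exty{\aty\tau q}{\Delta'}$. The effect membership in $\Delta'$ is unchanged, since \textsc{K-Bnd} preserves the signature.
\end{itemize}

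\textbf{Main obstacle: \textsc{D-handle}.} Inverting \textsc{K-Handler} on the top frame $h$ gives $k$ at $\exty B{\Delta_3}$ and a typing of $e_{\ell'}$ with $x:\tau_1$, $c:\linfunT{C}{B}{\Delta_3}$, and potential $q_1$. Here $\ell'\in\Delta_2$, the handler's signature, with input $\aty{\tau_1}{q_1}$; matching \textsc{ST-Eff} gives $\tau_1=\tau$ and $q_1=p$. We must then type $d=\dcontS{h~@~k'}$ at $\linfunT{C}{B}{\Delta_3}$. By Lemma~\ref{lem:stack-append} it suffices to show $\exty B{\Delta_3}\takes h\takes\exty{\aty\tau q}{\Delta_2}$, which is the frame part of the \textsc{K-Handler} premise re-used over $\emptystk$ via \textsc{K-Emp}. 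We also need $\exty{\aty\tau q}{\Delta_2}\takes k'\takes\exty C{\Delta''}$, which is exactly the \textsc{ST-Eff} premise.

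The delicate bookkeeping is that the \textsc{ST-Eff} premises were phrased with the full stack $k;h$. We therefore need an inversion showing that the signature $\Delta'$ there coincides with the handler's $\Delta_2$, and that the effect's output type $C$ matches the continuation's argument type. This uses \textsc{K-Handler}'s conclusion form. Once $d$ is typed with potential $0$ via \textsc{T-Dcont}, two applications of Lemma~\ref{lem:subst} yield $\cdot;q_1+p\vdash[v,d/x,c]e_{\ell'}:\exty B{\Delta_3}$, and \textsc{ST-Exp} closes the case. The emptiness of the context in the handler premises justifies the closedness required by Lemma~\ref{lem:subst}.
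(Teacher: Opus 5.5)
Your structure matches the paper's proof. You do a case analysis on the step, strip \textsc{T-WeakPot} to reach the syntax-directed rule, and then use the Substitution and Stack Append lemmas. Your \textsc{D-handle} argument is exactly the paper's: invert \textsc{K-Handler}, rebuild the one-frame handler stack over \textsc{K-Emp}, append $k'$, type $d$ by \textsc{T-Dcont}, and substitute for $x$ and $c$. Your \textsc{D-normal} case (invert \textsc{K-Handler}, substitute into $e_r$, close with \textsc{ST-Exp}) is the correct treatment. The paper's written case for that rule concludes $k\return v$, which is not the target state.

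\textbf{Gap in \textsc{D-Do}.} The step fails when the typing of $\doS{\ell}{v}$ ends in \textsc{T-WeakPot}, which your own inversion lemma permits.
\begin{itemize}
\item You then have only $q\ge p+q_1$. \textsc{ST-Eff} gives $p+q_1\vdash k\exnret(\ell,v,\emptystk)$, not $q\vdash k\exnret(\ell,v,\emptystk)$.
\item Your monotonicity fact is argued only for \textsc{ST-Exp} and \textsc{ST-Val}. Its mechanism, pushing \textsc{T-WeakPot} into an expression or into the top frame, has nothing to act on in \textsc{ST-Eff}.
\item There the potential is exactly $\Phi(v:\tau)+p$. The first summand is unique. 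The second is read off the signature against which every binder frame of $k$ above the nearest handler is typed, since \textsc{K-Bnd} shares the signature.
\item Raising $p$ therefore means retyping those frame bodies at a different signature. A body that itself performs $\ell$ would then need more potential than it has.
\end{itemize}
The paper covers this case by appealing to its State Relaxing lemma, asserted for all state forms. You have two ways to close it:
\begin{itemize}
\item prove that lemma for \textsc{ST-Eff} with a genuine argument; or
\item let \textsc{ST-Eff} conclude $q\vdash k\exnret(\ell,v,k')$ for every $q\ge q_1+p$. \textsc{D-Capture} carries this slack along unchanged, and \textsc{D-handle} absorbs it by \textsc{T-WeakPot} on $[v,d/x,c]e_{\ell'}$.
\end{itemize}

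\textbf{Omitted case.} Your case list leaves out the handler-installation step $k\eval\mathsf{handler}(e;\ldots)\steps k;h\eval e$ (the paper's \textsc{D-try}). It is routine. In the empty context, the premises of \textsc{T-Handle} for the branches and for $e_r$ are precisely those of \textsc{K-Handler}, and the slack goes into the typing of $e$.
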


\begin{proof}
  Preservation is proved by case analysis on the state transition judgment $s;q\steps s_0;q_0$. We present one case
  illustrative of normal computation, and one case illustrative of effect handling. The complete proof can be found in
  the Appendix~\refappendix{sec:appendix-soundness}.
  \begin{description}
    \item[D-Cons] $k\eval \caseS{\consS{v_1}{v_2}}{e_0}{x}{y}{e_1};q\steps k\eval
            [v_1,v_2/x,y]e_1;q$\\
          By the assumption, $\_ \takes k\takes B\odot \Delta$,
          $\emptyctx;q\proves\caseS{\consS{v_1}{v_2}}{e_0}{x}{y}{e_1}:B\odot \Delta$. It suffices to show $\emptyctx;q\proves
            [v_1,v_2/x,y]e_1:B\odot \Delta$. Induct on $\emptyctx;q\proves\caseS{\consS{v_1}{v_2}}{e_0}{x}{y}{e_1}:B\odot \Delta$. Only {\sc T-caselist} and {\sc T-WeakPot} are possible. We omit the latter.

          By the premises of \textsc{T-caselist},
          $x:\tau,y:\liT{\aty{\tau}{p}};q_2+p\proves  e_0:B\odot \Delta$, where
          $\emptyctx;q_1\proves \consS{v_1}{v_2}:\liT{\aty{\tau}{p}}$, $q=q_1+q_2$. Invert $\emptyctx;q_1\proves \consS{v_1}{v_2}:\liT{\aty{\tau}{p}}$ to get
          $q_1=q_1'+q_2'+p$, $\emptyctx;q_1'\proves v_1:\tau$, $\emptyctx;q_2'\proves
            v_2:\liT{\aty{\tau}{p}}$. Then apply the substitution lemma \ref{lem:subst}, $\emptyctx;q_2+p+q_1'+q_2'\proves [v_1,v_2/x,y]e_1:B\odot
            \Delta$, which is
          $\emptyctx;q\proves[v_1,v_2/x,y]e_1:B\odot \Delta$.

    \item[D-Handle] $k;\newhandlerS{\_}{\ell}{x}{c}{e_\ell}{\Delta}{y}{e_r} \exnret (\ell', v, k');q \steps$ $k\eval [v, d/x, c]e_{\ell'};q$\\
          when $\ell \in \Delta$ and where $d = \dcontS{\newhandlerS{\_}{\ell}{x}{c}{e_\ell}{\Delta}{y}{e_r}  ~@~ k'}$.\\
          By the premises of {\sc ST-Eff} on the left, we have
          (1)\ $q=q_1+p$,
          (2)\ $\cdot;q_1 \vdash v:\tau$,
          (3)\ $\effectT{\ell'}{\aty{\tau}{p}}{C} \in \Delta$,
          (4)\ $\_ \takes k; \newhandlerS{\_}{\ell}{x}{c}{e_\ell}{\Delta}{y}{e_r} \takes \exty{B'}{\Delta}$, and
          (5)\ $\exty{B'}{\Delta} \takes k' \takes \exty{C}{\Delta''}$.
          Invert (4) using {\sc K-Handler} to get
          (6) \(\_ \takes k \takes \exty {B} {\Delta_3}\),
          (7) for all $\effectT{\ell}{\aty{\tau_1}{q_1}}{C}$ in $\Delta$,  $x : \tau_1, c : \linfunT{C}{B}{\Delta_3}; q_1 \vdash e_\ell : \exty{B}{\Delta_3}$, and
          (8) $y : \tau'; q' \vdash e_r : \exty{B}{\Delta_3}$ such that $B' = \aty{\tau'}{q'}$.
          Next, we use {\sc K-Emp}, {\sc K-Handler}, (7), (8), and lemma \ref{lem:stack-append} to construct the stack typing \(\exty{B}{\Delta_3} \takes \newhandlerS{\_}{\ell}{x}{c}{e_\ell}{\Delta}{y}{e_r} ~@~ k' \takes \exty{C}{\Delta''}\). Then by {\sc T-Dcont}, we have (9) $\cdot; 0 \vdash d : \linfunT{C}{B}{\Delta_3}$.

          Now, we can apply the substitution lemma for both $x$ and $c$ in $e_\ell$. For $[v/x]$, we have (2). For $[{\sf dcont}(\ldots)/c]$, we have (9). For $e_\ell$, we have (7). Putting it all together, we have
          \[
            \cdot ; p + q_1 + 0 \vdash [v, d/x, c]e' : \exty B {\Delta_3}
          \]
          Finally, using $q = q_1 + p$ from (1), along with \(\_ \takes k \takes \exty {B} {\Delta_3}\) from (6), we can apply {\sc ST-Exp} to type the overall right hand side of the $\steps$.
  \end{description}
  \vspace{-4ex}
\end{proof}
\vspace{-1ex}
\begin{theorem}[Progress]
  If $q\proves s$, then either $s$ final or $s;q\steps s';q'$.
\end{theorem}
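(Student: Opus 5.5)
We argue by case analysis on the derivation of $q \proves s$, i.e.\ on which of \textsc{ST-Exp}, \textsc{ST-Val}, \textsc{ST-Eff} concludes it. The argument needs a \emph{canonical forms} lemma, proved by induction on closed value typing $\cdot; q \vdash v : \tau$. It says that a closed value of type $\funset$ is $\funS f x e$. A closed value of type $\linfunT{A}{B}{\Delta}$ is either $\linfunS{x}{e}$ or $\dcontS{k}$, the latter via \textsc{T-Dcont}. A closed value of type $\liT A$ is $\nilS$ or a $\consS{v_1}{v_2}$, and similarly for products and sums. There is no closed value of type $\voidT$. Variables are excluded because the context is empty.

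\textbf{Value and effect states.} For \textsc{ST-Val}, $s = k \return v$. If $k = \epsilon$ the state is final. If $k = k'; x.e$, \textsc{D-seq} applies. If $k = k'; \mathsf{handler}(\ldots)$, \textsc{D-normal} applies. No resource condition is involved in any of these. For \textsc{ST-Eff}, $s = k \exnret (\ell, v, k')$. If $k = \epsilon$ the state is final. If the top frame is $x.e$, \textsc{D-Capture} fires. If the top frame is a handler with signature $\Delta_2$, inverting $\_ \takes k \takes \exty{B}{\Delta'}$ via \textsc{K-Handler} gives $\Delta' = \Delta_2$. Since $\effectT{\ell}{\aty{\tau}{p}}{C} \in \Delta'$ by the premise of \textsc{ST-Eff}, we get $\ell \in \Delta_2$, and \textsc{D-handle} applies. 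This is exactly where explicit forwarding and the stack typing of signatures are needed.

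\textbf{Expression states.} For \textsc{ST-Exp}, $s = k \eval e$ with $\cdot; q \vdash e : \exty B \Delta$. We proceed by induction on this typing derivation; this is needed because of \textsc{T-WeakPot} and, in principle, \textsc{T-ContFun} and \textsc{T-WeakVar}. With an empty context, \textsc{T-WeakVar} and \textsc{T-ContFun} cannot be the last rule. \textsc{T-WeakPot} is handled by the induction hypothesis at $q' \le q$; enough resources at $q'$ also suffice at $q$, since only \textsc{D-tick} checks resources. For the syntax-directed rules:
\begin{itemize}
\item $\retS v$, $\letS{e_1}{x}{e_2}$, $\doS{\ell}{v}$ and $\mathsf{handler}$ step unconditionally, by \textsc{D-Ret}, \textsc{D-let}, \textsc{D-Do}, and pushing a handler frame, respectively.
\item For $\appS{v_1}{v_2}$, canonical forms give $v_1 = \funS f x e$, which steps by substitution.
\item For $\linappS{v_1}{v_2}$, $v_1$ is either a $\mathsf{linfun}$, which steps by substitution, or a $\mathsf{dcont}$, which steps by \textsc{D-Dcont}.
\item The case forms step because canonical forms determine the shape of the scrutinee. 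For $\casevoidS{v}$ the case is vacuous, since no closed value has type $\voidT$.
\end{itemize}

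\textbf{Main obstacle.} The one case that depends on the potential is $\tickS{q_0}$. By \textsc{T-Tick$^+$} the available potential is $q_0 + p \ge q_0$ with $p \ge 0$, so the premise $p \ge q$ of \textsc{D-tick} holds; with weakening, the state's potential is at least the typed potential. A negative tick always steps. This is the sole place where typing guarantees that resources do not run out, and it is the point of the progress theorem for the cost bound. The remaining care is in checking that canonical forms for linear function types correctly include $\dcontS{k}$.
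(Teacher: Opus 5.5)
Your proof is correct and follows the paper's argument. Both proceed by case analysis on the state typing: value and effect states are handled by cases on the top stack frame, expression states by canonical forms, and ticks by going through \textsc{T-WeakPot} to guarantee enough potential for \textsc{D-tick}. Several points the paper only asserts or glosses over are made precise in your version: you invert \textsc{K-Handler} to get $\ell \in \Delta_2$, you separate canonical forms for $\funset$ from those for linear function types (including $\dcontS{k}$), and you treat $\casevoidS{v}$ as vacuous.
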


\begin{proof}
  Progress is proved by case analysis on the state typing judgment $q \proves s$. We present the case for the propagate
  effect state. The complete proof can be found in the
  Appendix~\refappendix{sec:appendix-soundness}.
  \begin{description}
    \item[ST-Eff] $q \vdash k\exnret (\ell', v, k')$\\
          Induct on structure of k.
          \begin{itemize}
            \item $k=\emptystk$: $k \exnret (\ell', v, k')$ is final

            \item $k=k_1;x.e$: By \textsc{D-Capture}, $k_1;x.e\exnret (\ell', v, k');q \steps k_1 \exnret (\ell', v, x.e ~@~ k');q$

            \item $k=k_1;\newhandlerS{\_}{\ell}{x}{c}{e_\ell}{\Delta}{y}{e_r} $: Our type system guarantees that $\ell' \in \Delta$. Then by \textsc{D-handle}, $k_1;\newhandlerS{\_}{\ell}{x}{c}{e_\ell}{\Delta}{y}{e_r} \exnret (\ell', v, k');q \steps k_1 \eval [v, d/x, c]e_{\ell'};q$
                  where $d = \dcontS{\newhandlerS{\_}{\ell}{x}{c}{e_\ell}{\Delta}{y}{e_r}  ~@~ k'}$
          \end{itemize}
  \end{description}
  \vspace{-3ex}
\end{proof}
\vspace{-2ex}


%% file: evaluation.tex
In this section, we first describe our implementation of AARA with exceptions
and effects. We then evaluate the accuracy and performance of the implementation
with 21 representative benchmarks, while comparing it against previous
implementations of AARA. Both the source code of our implementation and the 21
benchmarks are available in the associated artifact
\cite{aara_effects_artifact}.

\subsection{Implementation}
Our implementation of AARA with exceptions and effects is built upon a novel
framework that is modeled after RaML~\cite{ramlWeb} and implemented in
OCaml.
We refer to it as \tool{} in this paper.
\tool{} has an SML frontend and makes some command-line calls to the MLton SML
compiler~\cite{Weeks06}.
Therefore, all benchmarks and code examples in this paper are
implemented in SML.
As mentioned in Section~\ref{sec:statics}, the analysis implemented
in \tool{} is more complete than the formal development in this
paper and supports many SML features, including products, sums, and
recursive types.
SML features like polymorphism, nested pattern matching, and modules are not
supported by our type theory or \tool{}; instead, we rely on the MLton compiler
to do away with them, such as MLton's monomorphisation and pattern flattening
passes.

The implementation of the new typing rules for exceptions and effects in \tool{}
follows from the algorithmic versions (see
Section~\ref{sec:inference}) of the typing rules in
Section~\ref{sec:statics}. These typing rules generate linear constraints between potential annotations, thus reducing type inference to a linear
optimization problem for the whole program. This optimization problem is then solved by the Coin-Or LP solver~\cite{CoinOrCLP}.
%

\vspace{-1ex}
\paragraph{Exceptions Frontend}
SML has native support for raising exceptions via \code{{\bf raise} ExName} and handling them via \code{e {\bf handle} e'}. We directly use these syntactic forms when writing programs for \tool{}. Then to add exception support to \tool{}, we simply translate SML exceptions and handlers into their corresponding AST variants in the IR.

Since MLton performs whole program compilation, we combine all
exceptions into one large sum type, which serves as the payload of all
exceptions. For example, our running example from
Figure~\ref{fig:dist}, which declares two list length mismatch
exceptions, will use the \code{{\bf datatype} exn = Emis1 {\bf of} unit | Emis2 {\bf of} unit} for execptions.
As discussed in
\hyperref[sec:effect-exceptions]{Section~\ref{sec:effect-exceptions}},
another option is to translated exceptions into effects. However, we
retain both syntactic forms due to the restriction on effect handler
branches from using linear variables.

\vspace{-1ex}
\paragraph{Effects frontend}
Adding effect support was more challenging because SML does
not have native support for effects and their handlers.
Users define the effects they wish to perform in a datatype.
They can then perform effects using an SML function we supply called
\code{e\_do}, and handle effects using \code{e\_handle}. These
functions are translated into explicit {\sf do} and {\sf handler} AST
variants in the IR.

\vspace{-1ex}
\paragraph{Cost Metric}
As stated in the overview, the cost metric we chose in this paper is the
number of function calls and handler evaluations. To streamline the evaluation,
we added the command line flags \code{-capps} and \code{-chandles} to \tool{},
which respectively insert \code{(R.tick 1)} next to function calls and handlers
before running the main analysis pass. This enables us to omit \code{R.tick}
expressions from our benchmark programs and minimize inconsistencies in how we
evaluate benchmarks.


\subsection{Goals and Methodology}

The goal of our evaluation is to determine the efficiency of the new
analysis, the quality of the derived bounds, as well as the
overhead with respect to RaML, an AARA implementation that supports raising
exceptions but not handling them.

To this end, we have constructed 21 benchmark programs which employ the raising
and handling of exceptions and effects to induce nontrivial (non-local) control
flow.
Given these benchmarks, we automatically derive bounds using \tool{}
with exceptions/effects and compare them to manually derived bounds to
determine their tightness.
For all but two benchmarks, the cost metric chosen measures the number of
function calls and handler evaluations.
Additionally, we also test our implementation on 6 examples without exception
handlers to determine whether \tool{} has a significant performance overhead
when analyzing the same programs as RaML.
Since the extension of AARA to exceptions is conservative, we do not
expect to derive bounds that differ from the RaML bounds for these
benchmarks (and the experiments confirm this).

For each benchmark, we manually analyze the benchmarks by hand to determine
whether the automatically derived bounds by \tool{} are tight. Furthermore, for
the exception benchmarks, we also measure the time it takes for \tool{} to
perform its analysis. We run 10 trials for each benchmark and report the average
elapsed time across them. These experiments were run on a machine with an AMD
Ryzen 9 7940HS processor and 64 GB of RAM.

\begin{table}[t]
  \resizebox{\textwidth}{!}{
    \begin{tabular}{l|l|l|l|l|l}
                                                                                & \multicolumn{2}{l|}{RaML}                  & \multicolumn{3}{l}{\tool{} with Exceptions}                                                                                                                                                                                                                  \\ \hline
      Function                                                                  & Derived Bound                              & Time (s)                                    & Derived Bound                              & Tight?                                                                                                                                                 & Time (s) \\ \hline
      \code{sqdist(v1 : L(\float), v2 : L(\float)) : \float}                    & \(|\code{v2}|\)                            & 0.05                                        & \(|\code{v1}|\)                            & \checkmark                                                                                                                                             & 0.007    \\
      \code{distances\_1(vs : L(L(\float)), p : L(\float)) : L(\float)}         & \(1 + 3|\code{vs}| + \sum_i |\code{v}_i|\) & 0.07                                        & \(1 + 3|\code{vs}| + \sum_i |\code{v}_i|\) & \checkmark                                                                                                                                             & 0.02     \\
      \code{distances\_2(vs : L(L(\float)), p : L(\float)) : L(\float)}         & Unsolvable                                 & --                                          & \(1 + 3|\code{vs}| + \sum_i |\code{v}_i|\) & \checkmark                                                                                                                                             & 0.03     \\ \hline
      \code{nearest(vs : L(L(\float)), p : L(\float)) : (\float, L(L(\float)))} & Unsolvable                                 & --                                          & \(1 + 3|\code{vs}| + \sum_i |\code{v}_i|\) & \checkmark                                                                                                                                             & 0.12     \\
      \code{zip(l1 : L(a), l2 : L(b)) : L((a, b))}                              & Unsolvable                                 & --                                          & \(1 + 2|\code{l2}|\)                       & \checkmark                                                                                                                                             & 0.01     \\
      \code{map2(f : (a, b) $\to$ c, l1 : L(a), l2 : L(b)) : c}                 & Unsolvable                                 & --                                          & \(1 + 4|\code{l2}|\)                       & \checkmark                                                                                                                                             & 0.01     \\
      \code{eval\_exp(e : exp) : \integer}                                      & Unsolvable                                 & --                                          & \(1 + 2c + 2p + t\)                        & \checkmark                                                                                                                                             & 0.13     \\
      \code{sort(l : L(\integer)) : L(\integer)}                                & Unsolvable                                 & --                                          & \(|\code{l}| + \binom{|\code{l}|}{2}\)     & \checkmark \tablefootnote{This bound is tight because we use quicksort, which is worst case quadratic time. \tool{} cannot derive logarithmic bounds.} & 0.07     \\ \hline
      \code{zip'(l1 : L(a), l2 : L(b)) : L((a, b))}                             & \(|\code{l1}|\)                            & 0.03                                        & \(|\code{l1}|\)                            & \checkmark                                                                                                                                             & 0.01     \\
      \code{map2'(f : (a, b) $\to$ c, l1 : L(a), l2 : L(b)) : c}                & \(2|\code{l1}|\)                           & 0.03                                        & \(2|\code{l1}|\)                           & \checkmark                                                                                                                                             & 0.01     \\
      \code{eval\_exp'(e : exp) : \integer}                                     & \(2p + 2t\)                                & 0.04                                        & \(c + p + t\)                              & \checkmark                                                                                                                                             & 0.04     \\
      \code{sort'(l : L(\integer)) : L(\integer)}                               & \(\binom{|\code{l}|}{2}\)                  & 0.09                                        & \(\binom{|\code{l}|}{2}\)                  & \checkmark                                                                                                                                             & 0.06
    \end{tabular}
  }\\[2ex]
  \caption{\tool{} with Exceptions Evaluation Results}
  \vspace{-5ex}
  \label{table:evaluation}
\end{table}

\subsection{Exception Benchmarks}
Table~\ref{table:evaluation} contains the results of the evaluation of
exception benchmarks.
The first group of functions we evaluate in are the running examples covered in
the overview Section~\ref{sec:overview-exn}. The bounds for \code{distances\_1}
and \code{distances\_2} differ from the bounds derived in the overview because
we now count all function calls instead of calls to \code{sqdist} only.
In the second group, we evaluate additional functions that
contain both exceptions and handlers:
\begin{description}
  \item[nearest] The function \code{nearest} builds on the examples in
        Section~\ref{sec:overview-exn} and returns the shortest distance
        between any vector in \code{vs} and vector \code{p}, as well as the
        subset of \code{vs} satisfying this distance.  An interesting aspect
        of \code{nearest} is that it leverages exceptions with payloads to
        improve efficiency when the shortest distance is 0.

  \item[zip] The \code{zip} function combines two lists. However, if the two
        lists are not the same length, it throws an exception
        \code{UnequalLengths of ('a * 'b) list} whose payload is the
        partialliy zipped result. The function
        needs to repeatedly handle and re-raise the exception with a larger
        payload.

  \item[map2] Similarly, the function \code{map2} applies a function
        that takes two inputs across two lists, once again raising the same
        exception if they are not the same length. 

  \item[eval\_exp] This function is an evaluator for a simple expression
        datatype.
        It uses exceptions and handlers to shortcut multiplication
        whenever there is a 0. 

  \item[sort] The function \code{sort} does a linear pass over the
        input integer list to check that it is sorted. If it is not,
        an exception is thrown, and the surrounding handler calls
        quicksort to sort the list.
\end{description}

Finally, the last group of benchmarks are just functions from the previous group
with their exception handlers removed. For example, \code{eval\_exp'} no longer
uses exceptions to shortcut multiplication with 0, and \code{sort'} calls
quicksort without first performing the sortedness check in \code{sort}.

\begin{table}[t]
  \resizebox{0.65\textwidth}{!}{
    \begin{tabular}{l|l|l}
                                                                  & \multicolumn{2}{l}{\tool{} with Effects}                     \\ \hline
      Function                                                    & Derived Bound                                   & Tight?     \\ \hline
      \code{simple\_IO(() : \unitT) : \unitT}                     & 11                                              & \checkmark \\\hline
      \code{store\_lists(vs : L(L(\integer))) : \unitT}           & \(6 + 9|\code{vs}| + \sum_i |\code{v}_i|\)      & \checkmark \\
      \code{store\_lists\_q\_naive(vs : L(L(\integer))) : \unitT} & Unsolvable                                      & --         \\
      \code{store\_lists\_q(vs : L(L(\integer))) : \unitT}        & \(8 + 13|\code{vs}| + \sum_i |\code{v}_i|\)     & \checkmark \\
      \code{store\_suffix\_lists(vs : L(\integer)) : \unitT}      & \(14 + 10|\code{vs}| + \binom{|\code{vs}|}{2}\) & \checkmark \\\hline
      \code{generator\_to\_list(n : nat) : L(nat)}                & $1 + 3|\code{n}|$                               & \checkmark \\
      \code{generator\_to\_list\_2(n : nat) : L(nat)}             & $1 + 5|\code{n}| + \binom{|\code{n}|}{2}$       & \checkmark \\
      \code{generator\_counter(n : nat) : \unitT}                 & $3 + 11|\code{n}| + \binom{|\code{n}|}{2}$      & \checkmark \\
      \code{prefix\_sum\_list(ns : L(nat)) : L(nat)}              & $2 + 12|\code{ns}| + \sum_i |\code{ns}_i|$      & \checkmark
    \end{tabular}
  }\\[2ex]
  \caption{\tool{} with Effects Evaluation Results}
  \vspace{-5ex}
  \label{table:evaluation2}
\end{table}

\subsection{Effect Benchmarks}
The benchmarks for effect handlers are listed in Table~\ref{table:evaluation2}.
Since we are only evaluating \tool{} with Effects and not the original RaML
(which cannot analyze programs with effects), we did not measure the analysis
times for this benchmarks, as there would be nothing to compare against. The
first benchmark, \code{simple\_IO}, performs 3 IO effects. The second group
contains functions that use the same \code{Insert} and \code{Remove} effects as
our running example in the overview in Figure \ref{fig:store}:
\begin{description}
  \item[store\_lists\_q\_naive] is \code{store\_lists} (the overview example)
        with a modification to the handler. While the effect handler in
        \code{store\_lists} stores and removes elements as if it were a stack,
        \code{store\_lists\_q\_naive} implements a queue instead. As \code{naive} in
        the name suggests, this queue is naively implemented as a single list that
        requires either \code{Insert} or \code{Remove} to traverse it, incurring cost
        proportional to the size of the store. This cost is impossible to capture in
        the effect signature, thus this example is intentionally impossible to
        analyze.

  \item[store\_lists\_q] is an efficient implementation of \code{store\_lists\_q\_naive}. It uses the standard functional queue implementation of two lists, which has amortized constant cost for both \code{Insert} and \code{Remove} operations, which is now possible to type.

  \item[store\_suffix\_lists] is a twist on the original \code{store\_lists}. Its argument \code{vs} is no longer a list of lists of integers, but rather just a list of integers. We store all the suffixes of this list into the store. We then remove all of them from the store and traverse each one, just as in \code{store\_lists}. 
\end{description}

Finally, we evaluate four benchmarks that use the ${\sf Yield}: {\sf nat}
  \Rightarrow \unitT$ effect that comes from generators in the literature. They
all rely on \code{datatype nat = Zero | Succ of nat}. This representation allows
us to store potential in a \code{nat} proportional to the number it represents.
\begin{description}
  \item[generator\_to\_list] takes a natural number $n$, creates a generator
        that performs \code{Yield} on every natural number from $n$ to $0$, then uses
        a handler to collect the yielded numbers into a list.
  \item[generator\_to\_list\_2] is similar to \code{generator\_to\_list} but calls \code{traverse} (but for \code{nat}s) on each element of the output list. For reasons similar to \code{store\_suffix\_lists}, it also requires quadratic potential.
  \item[generator\_counter] creates the same generator as the previous two benchmarks, then uses the \code{Get} and \code{Put} effects from state to sum up the yielded numbers. It shows that \tool{} with Effects can analyze code that composes effect handlers.
  \item[prefix\_sum\_list] relies on a \code{prefix\_sum\_generator}, which uses
        the state effects \code{Get} and \code{Put} to keep track of the prefix
        sum of a list, and also performs the \code{Yield} effect at each prefix.
        On the other end, \code{prefix\_sum\_list} collects the yielded numbers
        back into a list. It shows that \tool{} with Effects can analyze code
        that performs multiple effects together (the state and generator
        effects), as well as the explicit forwarding of the \code{Yield} effect
        to the outer handler.
\end{description}


\subsection{Results}

Our evaluation results are summarized in
and Table~\ref{table:evaluation} and Table~\ref{table:evaluation2}.
It contains, for each benchmark, the function name and its type, the bound
derived by \tool{}, and its tightness. For the exception benchmarks, we also
provide the bounds derived by RaML, as well as analysis times for both RaML and
\tool{} (in the {\it Time (s)} columns). As indicated by the
checkmarks in the {\it Tight?} columns, all derived bounds are not only
asymptotically tight but have optimal constant factors too.

There are a few trivial differences between the bounds derived by RaML and
\tool{}. For \code{sqdist}, RaML derives \(|\code{v2}|\) while \tool{} derives
\(|\code{v1}|\). These bounds differ, but as explained in section
\ref{sec:overview-aara} of the overview, they are both valid and tight upper
bounds of the true cost $\min(|\code{v1}|,|\code{v2}|)$ that our type system
cannot directly encode.
In the bounds for the functions \code{eval\_exp} and
\code{eval\_exp'}, $c$, $p$, and $t$ respectively denote the number of
\code{Const}, \code{Plus}, and \code{Times} nodes.
For \code{eval\_exp'}, RaML derives $2p + 2t$ while \tool{} derives $c + p + t$.
While these bounds appear quite different at a glance, they are nearly equal,
since the number of leaves, $c$, in a binary tree, is always equal to the number
of interior nodes, $p + t$, plus one. We are not entirely sure why these two
tools report different bounds.

Our findings are that \tool{} with exceptions and effects performs better than
RaML on our benchmarks. It can analyze a larger set of user functions that raise
and handle exceptions and effects, while RaML can only analyze
functions that raise exceptions. Furthermore, by observing the time it takes to
analyze functions that both implementations can analyze, we see that despite the
additional complexity, our implementation does not add overhead. In fact, our
implementation is always faster, although that can be attributed to RaML
tracking multivariate potential functions (such as \(|\code{l1}||\code{l2}|\)),
which \tool{} does not support.

%% file: appendix.tex
\begin{appendices}
  \section{Type System for Linear-Bound AARA with Effects}
  \label{sec:appendix-type-system}

  \subsection{Syntax Grammar}
  \begin{minipage}{0.45\linewidth}
    \[
      \begin{array}{lll}
        v & ::= & x                \\
          &     & \funS f x e      \\
          &     & \trivS           \\
          &     & \pairS{v_1}{v_2} \\
          &     & \inlS{v}         \\
          &     & \inrS{v}         \\
          &     & \nilS            \\
          &     & \consS{v_1}{v_2} \\
          &     & \linfunS{x}{e}   \\
          &     & \dcontS{k}
      \end{array}
    \]
  \end{minipage}
  \begin{minipage}{0.45\linewidth}
    \[
      \begin{array}{lll}
        e & ::= & \retS v                                             \\
          &     & \tickS q                                            \\
          &     & \letS{e_1}{x}{e_2}                                  \\
          &     & \appS {v_1} {v_2}                                   \\
          &     & \casepairS{v}{x_1}{x_2}{e}                          \\
          &     & \casevoidS{v}                                       \\
          &     & \casesumS{v}{x_1}{e_1}{x_2}{e_2}                    \\
          &     & \caseS{v}{e_1}{x_1}{x_2}{e_2}                       \\
          &     & \doS{\ell}{v}                                       \\
          &     & \newhandlerS{e}{\ell}{x}{c}{e_\ell}{\Delta}{y}{e_r} \\
          &     & \linappS {v_1} {v_2}
      \end{array}
    \]
  \end{minipage}

  \subsection{Resource Annotated Types and Effect Signatures}
  \begin{mathpar}
    \tau ::= \funset \mid \unitT \mid \prodT{\tau_1}{\tau_2} \mid \voidT \mid \sumT{A_1}{A_2} \mid \liT A \mid \linfunT{A}{B}{\Delta}

    A, B, C ::= \aty \tau q

    \funset ::= \{\funT{A}{B}{\Delta} \mid \Theta \}

    \Delta ::= \cdot \mid \Delta, \effectT{\ell}{A}{B}
  \end{mathpar}

  \subsection{Value Typing}
  \framebox{$\Gamma; q \vdash v : \tau$}
  \begin{mathpar}
    \Rule{T-Var}
    { }
    {x:\tau; 0 \vdash x:\tau }

    \Rule{T-Fun}
    { \Gamma=\abs{\Gamma}
      \\ \forall (\funT{\aty{\tau}{q}}{B}{\Delta}) \in \funset.\ \Gamma; f : \funset, x : \tau; q
      \vdash e : \exty B \Delta
    }
    { \Gamma; 0 \vdash \funS f x e : \funset}\\

    \Rule{T-Unit}
    { }
    { \cdot; 0 \vdash \trivS : \unitT }

    \Rule{T-Pair}
    { \Gamma_1 ; q_1 \vdash v_1 : \tau_1 \\ \Gamma_2; q_2 \vdash v_2 : \tau_2 }
    { \Gamma_1, \Gamma_2; q_1 + q_2 \vdash \pairS{v_1}{v_2} : \prodT{\tau_1}{\tau_2} }\\

    \Rule{T-Inl}
    { \Gamma; p \vdash v : \tau_1 }
    { \Gamma; q_1 + p \vdash \inlS{v} : \sumT{\aty{\tau_1}{q_1}}{A_2} }

    \Rule{T-Inr}
    { \Gamma; p \vdash v : \tau_2 }
    { \Gamma; q_2 + p \vdash \inrS{v} : \sumT{A_1}{\aty{\tau_2}{q_2}} }\\

    \Rule{T-Nil}
    { }
    { \cdot; 0 \vdash \nilS : \liT A}

    \Rule{T-cons}
    { \Gamma_1; q_1 \vdash v_1 : \tau
      \\ \Gamma_2; q_2 \vdash v_2 : \liT A
      \\ A = \aty \tau p
    }
    { \Gamma_1,\Gamma_2; q_1 + q_2 + p \vdash \consS{v_1}{v_2} : \liT A}

    \Rule{T-LinFun}
    { \Gamma, x : \tau; p + q \vdash e : \exty B \Delta
    }
    { \Gamma; p \vdash \linfunS{x}{e} : \linfunT{\aty{\tau}{q}}{B}{\Delta}}

    \Rule{T-Dcont}
    {\exty{A_1}{\Delta_1} \takes k \takes \exty{A_2}{\Delta_2}}
    {\Gamma; 0 \vdash \dcontS{k} : \linfunT{A_2}{A_1}{\Delta_1}}
  \end{mathpar}

  \subsection{Computation Typing}
  \framebox{$\Gamma; q \vdash e : \exty B \Delta$}
  \begin{mathpar}
    \Rule[right=T-Val]{}
    { \Gamma; q \vdash v : \tau}
    { \Gamma; q + p \vdash \retS v  : \exty {\aty \tau p} \Delta}

    \Rule[right=T-Let]{}
    { \Gamma_1; q \vdash e_1 : \exty {\aty \tau {q'}} C
      \\ \Gamma_2,x:\tau; q' \vdash e_2 : \exty B C
    }
    { \Gamma_1,\Gamma_2; q \vdash \letS{e_1}{x}{e_2}  : \exty B \Delta}\\

    \Rule[right=T-Tick$^+$]{ }
    { }
    { \cdot; q + p \vdash \tickS q  : \exty {\aty \unitT p} \Delta}

    \Rule[right=T-Tick$^-$]{ }
    { }
    { \cdot; p \vdash \tickS {-q}  : \exty {\aty \unitT {p + q}} \Delta}

    \Rule{T-App}
    { \Gamma_1; 0 \vdash v_1 : \funset
      \\ (\aty \tau {q_1} \to \exty B \Delta) \in \funset
      \\ \Gamma_2; q_2 \vdash v_2 : \tau
    }
    { \Gamma_1,\Gamma_2; q_1 + q_2 \vdash \appS {v_1} {v_2}  : \exty B \Delta}

    \Rule{T-Casepair}
    {
      \Gamma_1; q_1 \vdash v : \prodT{\tau_1}{\tau_2}\\
      \Gamma_2, x_1 : \tau_1, x_2 : \tau_2 ; q_2 \vdash e : \exty B \Delta
    }
    { \Gamma_1, \Gamma_2; q_1 + q_2 \vdash \casepairS{v}{x_1}{x_2}{e}  : \exty B \Delta}\\

    \Rule{T-Casevoid}
    {
      \Gamma; q \vdash v : \voidT\\
    }
    { \Gamma; q \vdash \casevoidS{v}  : \exty B \Delta}

    \Rule{T-Casesum}
    {
      \Gamma_1; q_1 \vdash v : \sumT{\aty{\tau_1}{p_1}}{\aty{\tau_2}{p_2}}\\
      \Gamma_2, x_1 : \tau_1 ; p_1 + q_2 \vdash e_1 : \exty B \Delta\\
      \Gamma_2, x_2 : \tau_2 ; p_2 + q_2 \vdash e_2 : \exty B \Delta\\
    }
    { \Gamma_1, \Gamma_2; q_1 + q_2 \vdash \casesumS{v}{x_1}{e_1}{x_2}{e_2}  : \exty B \Delta}\\

    \Rule{T-Caselist}
    { \Gamma_1; q_1 \vdash v : \liT {\aty \tau p}
      \\ \Gamma_2; q_2 \vdash e_1 : \exty B \Delta
      \\ \Gamma_2,x:\tau,y:\liT {\aty \tau p} ; q_2 + p \vdash e_2 : \exty B \Delta
    }
    { \Gamma_1,\Gamma_2; q_1 + q_2 \vdash \caseS{v}{e_1}{x}{y}{e_2}  : \exty B \Delta}

    \Rule{T-Linapp}
    { \Gamma_1; p \vdash v_1 : \linfunT{\aty{\tau}{q_1}}{B}{\Delta}
      \\ \Gamma_2; q_2 \vdash v_2 : \tau
    }
    { \Gamma_1,\Gamma_2; p + q_1 + q_2 \vdash \linappS {v_1} {v_2}  : \exty B \Delta}

    \Rule{T-Do}
    {  \Gamma; p \vdash v : \tau_1\\
      \effectT{\ell}{\aty{\tau_1}{q_1}}{\aty{\tau_2}{ q_2}} \in \Delta
    }
    { \Gamma; p + q_1 \vdash  \doS{\ell}{v}   : \exty {\aty{\tau_2}{q_2}} {\Delta}}

    \Rule{T-Handle}
    { \Gamma_1; p \vdash e : \exty{\aty{\tau}{q}}{\Delta}\\
      \Gamma_3, y : \tau; q \vdash e_r : \exty{C}{\Delta'}\\
      \isZero{\Gamma_2}\\
      \forall \effectT{\ell}{\aty{\tau_1}{q_1}}{B} \in \Delta.\ (
      \Gamma_2, x : \tau_1, c : \linfunT{B}{C}{\Delta'}; q_1 \vdash
      e_\ell : \exty{C}{\Delta'})
    }
    { \Gamma_1, \Gamma_2, \Gamma_3; p \vdash \newhandlerS{e}{\ell}{x}{c}{e_\ell}{\Delta}{y}{e_r} : \exty{C}{\Delta'} }
  \end{mathpar}

  \subsection{Structural Typing Rules}
  \framebox{$\Gamma; q \vdash e : \exty B \Delta$}
  \begin{mathpar}
    \Rule{T-ContFun}
    { \Gamma, x_1 : \funset, x_2 : \funset; q \vdash e : \exty B \Delta}
    { \Gamma, x : \funset; q \vdash [x,x / x_1,x_2]e : \exty B \Delta }

    \Rule{T-WeakFun}
    { \Gamma; q \vdash e : \exty B \Delta }
    { \Gamma,x:\funset; q \vdash e : \exty B \Delta }

    \Rule{T-WeakPot}
    { \Gamma; q' \vdash e : \exty B \Delta \\ q \ge q' }
    { \Gamma; q \vdash e : \exty B \Delta }



  \end{mathpar}

  \newpage
  \section{Full Cost Semantics}
  \label{sec:appendix-dynamics}

  \subsection{State Transitions}
  \framebox{$s; q \steps s'; q'$}
  \begin{mathpar}
    \Rule{D-tick}{p\geq q}{k\eval\tickS{q};p\steps k\return \trivS;p-q}

    \Rule{D-ret}{ }{k\eval\retS{v};q\steps k\return v;q}

    \Rule{D-fun}{ }{k\eval\appS{\funS{f}{x}{e}}{v_2};q\steps k\eval [\funS{f}{x}{e},v_2/f,x]e;q}

    \Rule{D-let}{ }{k\eval\letS{e_1}{x}{e_2};q\steps k;x.e_2\eval e_1;q}

    \Rule{D-seq}{ }{k;x.e\return v;q\steps k\eval [v/x]e;q}

    \Rule{D-pair}{ }{k\eval\casepairS{\pairS{v_1}{v_2}}{x_1}{x_2}{e};q\steps k\eval [v_1,v_2/x_1,x_2]e;q}

    \Rule{D-inl}{ }{k\eval\casesumS{\inlS{v}}{x_1}{e_1}{x_2}{e_2};q\steps k\eval [v/x_1]e_1;q}

    \Rule{D-inr}{ }{k\eval\casesumS{\inrS{v}}{x_1}{e_1}{x_2}{e_2};q\steps k\eval [v/x_2]e_2;q}

    \Rule{D-nil}{ }{k\eval\caseS{\nilS}{e_0}{x}{y}{e_1};q\steps k\eval e_0;q}

    \Rule{D-cons}{ }{k\eval\caseS{\consS{v_1}{v_2}}{e_0}{x}{y}{e_1};q\steps k \eval [v_1,v_2/x,y]e_1;q}

    \Rule{D-try}{ }{k\eval \newhandlerS{e}{\ell}{x}{c}{e_\ell}{\Delta}{y}{e_r};q\steps k; \newhandlerS{\_}{\ell}{x}{c}{e_\ell}{\Delta}{y}{e_r} \eval e;q}

    \Rule{D-do}{ }{k\eval\doS{\ell}{v};q\steps k\exnret (\ell, v, \epsilon);q}

    \Rule{D-Capture}{ }{k;x.e\exnret (\ell, v, k');q\steps k\exnret (\ell, v, x.e ~@~ k');q}

    \Rule{D-handle}
    { \ell' \in \Delta }
    {k;\newhandlerS{\_}{\ell}{x}{c}{e_\ell}{\Delta}{y}{e_r} \exnret (\ell', v, k');q\\
    \steps k\eval [v, \dcontS{\newhandlerS{\_}{\ell}{x}{c}{e_\ell}{\Delta}{y}{e_r} ~@~ k'}/x, c]e_{\ell'};q}

    \Rule{D-normal}{ }{k;\newhandlerS{\_}{\ell}{x}{c}{e_\ell}{\Delta}{y}{e_r} \return v;q \steps k \eval [v/y]e_r; q}

    \Rule{D-Dcont}{ }{k \eval \linappS{\dcontS{k'}}{v}; q \steps k ~@~ k' \return v; q}

    \Rule{D-linfun}{ }{k\eval\linappS{\linfunS{x}{e}}{v};q\steps k\eval [v/x]e;q}
  \end{mathpar}

  \subsection{Initial/Final States}
  \framebox{$s \init$ \quad $s \final$}
  \begin{mathpar}
    \Rule{D-init}{ }{\epsilon \eval e;q \init}

    \Rule{D-final}{ }{\epsilon \return v;q \final}

    \Rule{D-final-exn}{ }{\epsilon \exnret (\ell, v, k);q \final}
  \end{mathpar}

  \subsection{Stack Typing}
  \framebox{$\exty{A_1}{\Delta_1} \takes k \takes \exty{A_2}{\Delta_2}$}
  \begin{mathpar}
    \Rule{K-Emp}{ }{\exty B \Delta \takes \epsilon \takes  B \odot \Delta}

    \Rule{K-Bnd}
    {\exty A {\Delta_1} \takes k \takes \exty B {\Delta_2}\\
      x : \tau; q \vdash e : \exty B {\Delta_2}}
    {\exty A {\Delta_1} \takes k; x.e \takes \exty {\aty \tau q} {\Delta_2}}

    \Rule{K-Handler}
    {\exty{A}{\Delta_1} \takes k \takes \exty B {\Delta_3}\\
      \forall \effectT{\ell}{\aty{\tau_1}{q_1}}{C} \in \Delta_2.\
      x : \tau_1, c : \linfunT{C}{B}{\Delta_3}; q_1 \vdash
      e_\ell : \exty{B}{\Delta_3}\\
      y : \tau; q \vdash e_r : B \odot \Delta_3
    }
    {\exty{A}{\Delta_1} \takes k; \newhandlerS{\_}{\ell}{x}{c}{e_\ell}{\Delta_2}{y}{e_r} \takes \exty{\aty \tau q}{\Delta_2}}
  \end{mathpar}

  \subsection{State Typing}
  \framebox{$q \vdash s$}
  \begin{mathpar}
    \Rule{ST-Exp}{\cdot;q \vdash e: \exty B \Delta\\ \_ \takes k \takes \exty B {\Delta}}{q \vdash k \eval e}


    \Rule{ST-Val}{\cdot;q_1 \vdash v:\tau\\ \_ \takes k\takes \exty {\aty \tau {q_2} } \Delta}{q_1 + q_2 \vdash  k\return v}

    \Rule{ST-Eff}
    {
      \cdot;q_1 \vdash v:\tau \\
      \effectT{\ell}{\aty{\tau}{p}}{C} \in \Delta'\\
      \_ \takes k\takes \exty B {\Delta'}\\
      \exty{B}{\Delta'} \takes k' \takes \exty{C}{\Delta''}
    }
    {q_1 + p \vdash k\exnret (\ell, v, k')}
  \end{mathpar}

  \newpage
  \section{Full Soundness Proof}
  \label{sec:appendix-soundness}

  \begin{lemma}
    \label{lem:zero}
    If $\isZero{\tau}$ and $\Phi(v : \tau) = q$, then $q = 0$.
  \end{lemma}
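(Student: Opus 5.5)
Since $\Phi(v:\tau)=q$ means $\cdot;q\vdash v:\tau$, I will prove the stronger statement: if $\cdot;q\vdash v:\tau$ and $\isZero{\tau}$, then $q=0$. The proof is by induction on the derivation of the value typing judgment, equivalently on the structure of $v$. The value rules are all syntax directed, and no weakening rule for values appears in the appendix, so each case is fixed by the last rule used. Two preliminary facts about the zeroing function will be used. First, $\abs{\cdot}$ is partial and undefined on linear function types, so $\isZero{\tau}$ never holds for $\tau=\linfunT{A}{B}{\Delta}$. Second, $\isZero{\cdot}$ is inherited by components: if $\isZero{\prodT{\tau_1}{\tau_2}}$ then $\isZero{\tau_i}$. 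Likewise, if $\isZero{\sumT{\aty{\tau_1}{q_1}}{\aty{\tau_2}{q_2}}}$ or $\isZero{\liT{\aty{\tau'}{p}}}$, then all annotations $q_1,q_2,p$ are $0$ and each component type is zeroed. Both facts follow immediately from the clauses in Figure~\ref{fig:type-zeroing}.

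The cases then go as follows.
\begin{itemize}
\item \textsc{T-Var} cannot occur, since the context is empty.
\item \textsc{T-Unit}, \textsc{T-Nil} and \textsc{T-Fun} give $q=0$ directly.
\item \textsc{T-LinFun} and \textsc{T-Dcont} are excluded, because their types are linear function types and so $\isZero{\tau}$ is undefined.
\item \textsc{T-Pair}: the premises give $\cdot;q_i\vdash v_i:\tau_i$ with $\isZero{\tau_i}$. The context split of the empty context is empty, so induction gives $q_1=q_2=0$.
\item \textsc{T-Inl} and \textsc{T-Inr}: we have $q=q_1+p$, where $q_1=0$ because the annotation is zeroed and $p=0$ by induction on the payload.
\item \textsc{T-cons}: we have $q=q_1+q_2+p$. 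The annotation gives $p=0$, induction on the head at $\tau'$ gives $q_1=0$, and induction on the tail at the same list type gives $q_2=0$.
\end{itemize}

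The only delicate point is the function case. A function type $\funset$ is considered zeroed with no condition on the annotations inside its arrow types. This is harmless, because \textsc{T-Fun} always concludes with potential $0$ and the closure is closed here. The other point to check is that there are no structural value rules that could add potential. If the system allowed weakening of potential on values, the claim would fail, so I will confirm that \textsc{T-WeakPot} applies only to computations. With that confirmed, the induction closes.
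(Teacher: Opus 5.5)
Your proposal is correct and takes the same approach as the paper, which says only that the lemma is ``proved by induction on the typing judgment.'' Your case analysis fills in that induction soundly. Linear function types are excluded because $\abs{\cdot}$ is partial, the zeroed annotations eliminate the extra summands in \textsc{T-Inl}, \textsc{T-Inr} and \textsc{T-cons}, and you correctly check that the structural rules apply only to computations, not values.
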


  \begin{lemma}[Sharing]
    If $\tau\curlyveedownarrow(\tau_1,\tau_2)$ and $\emptyctx;q\proves v:\tau$,
    then $\emptyctx;q_1\proves v:\tau_1$, $\emptyctx;q_2\proves
      v:\tau_2$, where $q=q_1+q_2, q_1\geq 0, q_2\geq 0$.
  \end{lemma}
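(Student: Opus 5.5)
Since the sharing judgment $\tau\curlyveedownarrow(\tau_1,\tau_2)$ is not spelled out in this paper, I take it to be the standard AARA relation. Under that reading, $\tau,\tau_1,\tau_2$ have the same structure and differ only in their potential annotations. At every annotated position $\aty{\sigma}{r}$, $\aty{\sigma_1}{r_1}$, $\aty{\sigma_2}{r_2}$ (list elements, sum injections) we have $r = r_1 + r_2$ and the component types are again related by sharing. Function types $\funset$ and linear function types are shared only with themselves; for the latter this is harmless because any value $v$ of such a type must carry potential $0$ if it is to be split, and $\abs{\cdot}$ already excludes them. I will prove the lemma by induction on the derivation of $\emptyctx;q\proves v:\tau$, with an inner case analysis on the shape of $\tau$ and the sharing derivation.

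The cases go as follows.

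For $\trivS$ and $\nilS$: $q=0$, and the same rule types $v$ at $\tau_1$ and at $\tau_2$ with $q_1=q_2=0$.

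For \textsc{T-Pair}: sharing decomposes componentwise. I apply the induction hypothesis to $v_1$ and to $v_2$ and add the resulting potentials.

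For \textsc{T-Inl}: write $q=r+p$, where $\emptyctx;p\proves v':\sigma$ and $r$ is the left annotation. Sharing gives $r=r_1+r_2$ and $\sigma\curlyveedownarrow(\sigma_1,\sigma_2)$. The induction hypothesis splits $p=p_1+p_2$, and I set $q_i=r_i+p_i$. \textsc{T-Inr} is symmetric.

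For \textsc{T-cons}: $q=q_h+q_t+r$. The induction hypothesis applies to the head at the element type and to the tail at the list type, since the list type shares with itself. I split $r=r_1+r_2$ and reassemble with \textsc{T-cons} at $\tau_1$ and at $\tau_2$.

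For $\funS f x e$: $q=0$ and $\tau_1=\tau_2=\tau$. Since the context is empty, reusing the same \textsc{T-Fun} derivation twice gives $q_1=q_2=0$.

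For $\linfunS x e$ and $\dcontS k$: if the sharing judgment admits linear types at all, it does so only with zero splits. \textsc{T-Dcont} has potential $0$, so that case is trivial. For \textsc{T-LinFun} with captured potential $p$, I will argue that sharing is not defined for $\multimap$ types, consistent with $\abs{\cdot}$ being partial. Otherwise I would take $q_1=p$, $q_2=0$ with the same type only if the relation forces $\tau_1=\tau_2=\tau$, and this cannot be right, since the lemma would then require $q = 2p$; so it must indeed be undefined there.

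Variables cannot occur, because the context is empty.

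Non-negativity of $q_1$ and $q_2$ follows from the side conditions that all annotations lie in $\Q_{\geq 0}$: each $q_i$ is a sum of such annotations. Value typing has no weakening rule, so inversion is syntax-directed and the induction is clean.

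The main obstacle I expect is pinning down the sharing relation precisely for function and linear function types. The function case relies crucially on \textsc{T-Fun} forcing $\isZero{\Gamma}$ and potential $0$, which makes duplication sound. Linear functions must be excluded from sharing, or restricted to zero potential. Otherwise the argument is routine bookkeeping of the linear splits of potential.
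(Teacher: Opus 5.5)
Your argument is correct under the standard reading of the relation $\tau\curlyveedownarrow(\tau_1,\tau_2)$, which the paper never defines. It is organized differently from the paper's proof, which says only ``by simultaneous induction on the types $\tau,\tau_1,\tau_2$.'' You induct on the closed value typing derivation $\emptyctx;q\proves v:\tau$, with the sharing instance universally quantified. This is legitimate because value typing is syntax-directed.

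The difference shows up in the \textsc{T-cons} case. The tail of $\consS{v_1}{v_2}$ has the \emph{same} type $\liT{A}$ as the whole list, so an induction on types alone gives no hypothesis for it. The paper's one-liner therefore tacitly needs a nested induction on the list value. In your version the tail's derivation is a proper subderivation, and you simply reuse the given instance $\liT{A}\curlyveedownarrow(\liT{A_1},\liT{A_2})$. Say it that way: ``the list type shares with itself'' reads like $\liT{A}\curlyveedownarrow(\liT{A},\liT{A})$, which is false in general. What the type-directed route buys is that it follows the structural definition of sharing directly. It also matches Lemma~\ref{lem:admit-sharing}, where $\curlyveedownarrow_{\tau,\tau_1,\tau_2}$ is built by recursion on $\tau$.

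You also surface a point the paper's proof passes over. The statement can only hold if sharing excludes $\multimap$ types and types containing them, since \textsc{T-LinFun} lets a closed $\linfunS{x}{e}$ carry positive captured potential. Commit to that definition from the start; your opening remark that reflexive sharing on $\multimap$ is ``harmless'' contradicts it.

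Also replace the ``$q=2p$'' justification. It relies on uniqueness of potential, which actually fails for $\linfunS{x}{e}$: \textsc{T-WeakPot} on the body lets the captured potential be any value above the minimum. The clean reason is a body that must pay for a $\tickS{1}$ out of captured potential, with an argument annotation of $0$. Every typing $\emptyctx;q_i\proves\linfunS{x}{e}:\tau$ then has $q_i\ge 1$, so $q=1$ admits no split $q=q_1+q_2$.
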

  \begin{proof}
    By simultaneous induction on the types $\tau, \tau_1, \tau_2$.
  \end{proof}

  \begin{lemma}[Potential Relax]
    \label{lem:potrelax}
    If $\Gamma,q\vdash e:\aty{\tau}{p}\odot \Delta$, then $\Gamma,q+r\vdash e:\aty{\tau}{p+r}\odot \Delta$.
  \end{lemma}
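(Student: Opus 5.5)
We prove the Potential Relax lemma by induction on the derivation of $\Gamma; q \vdash e : \exty{\aty{\tau}{p}}{\Delta}$, case-splitting on the last rule used. The idea is that extra potential $r$ supplied up front can always be carried to the point where the result is returned and attached to the result annotation. In each case we either handle $r$ directly in the rule that produces the result, or push it into the subderivation whose result type is the overall result type.

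\textbf{Base cases.}
\begin{itemize}
\item For \textsc{T-Val}, we have $q = q_0 + p$ with $\Gamma; q_0 \vdash v : \tau$. Reapplying \textsc{T-Val} with annotation $p+r$ gives $\Gamma; q_0 + p + r \vdash \retS v : \exty{\aty{\tau}{p+r}}{\Delta}$, which is what we need.
\item For \textsc{T-Tick$^+$}, we have $q = q_0 + p$, and choosing $p+r$ as the remaining annotation gives the claim immediately.
\item For \textsc{T-Tick$^-$}, the conclusion is $\cdot; p_0 \vdash \tickS{-q_0} : \aty{\unitT}{p_0+q_0}$. Instantiating the rule with $p_0+r$ gives $\cdot; p_0+r \vdash \tickS{-q_0} : \aty{\unitT}{p_0+r+q_0}$.
\end{itemize}

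\textbf{Rules whose result type is inherited from a subderivation.} These are \textsc{T-Let} (via $e_2$), \textsc{T-Casepair}, \textsc{T-Casesum}, \textsc{T-Caselist} (both branches), \textsc{T-Casevoid} and \textsc{T-Handle}.
\begin{itemize}
\item For \textsc{T-Let}, apply the induction hypothesis to the typing of $e_2$, turning $\Gamma_2, x:\tau'; q' \vdash e_2$ into a derivation with $q'+r$. Then apply it to $e_1$ so that $e_1$ returns $\aty{\tau'}{q'+r}$ from input $q+r$, and reassemble with \textsc{T-Let}.
\item The case rules split the potential as $q_1 + q_2$. Put $r$ into $q_2$, apply the induction hypothesis to each branch, and reassemble.
\item \textsc{T-Casevoid} places no constraint on the result annotation. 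We rederive it and then apply \textsc{T-WeakPot} to absorb $r$.
\item For \textsc{T-Handle}, the result type is $C = \aty{\tau}{p}$. The obstacle here is that $C$ also appears in the continuation type $c : \linfunT{B}{C}{\Delta'}$ in the effect branches. Strengthening the result annotation of $c$ to $p+r$ would require every branch to produce that much, and the branches only receive $q_1$. So we first try to route $r$ through the handled computation $e$: apply the induction hypothesis to $e$ to get $\aty{\tau_e}{q+r}$, then to $e_r$. The effect branches, however, still need result $C$ shifted by $r$. If this fails, I would prove the lemma mutually with the analogous statement for stack typings: $\exty{A}{\Delta_1} \takes k \takes \exty{B}{\Delta_2}$ implies the same with $r$ added to the annotations of $A$ and $B$ and to every output annotation in $\Delta_2$ (but not its inputs). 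Under this stronger invariant, branches receive $q_1 + r$ through the effect signature. I would set up the induction with this generalization from the start, relaxing the output annotations of $\Delta$ as well. This handler case is where I expect the real difficulty.
\end{itemize}

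\textbf{Applications.} \textsc{T-App} and \textsc{T-Linapp} have their result type fixed by the function type, so the induction hypothesis cannot be applied to a subderivation. For \textsc{T-Linapp}, we weaken the function value's typing: by inversion on \textsc{T-LinFun}, apply the induction hypothesis to the body, which yields $\linfunT{\aty{\tau}{q_1}}{\aty{\tau'}{p+r}}{\Delta}$ with the closure potential $p$ increased by $r$. For \textsc{T-Dcont}, the same step goes through using the stack version of the lemma. For \textsc{T-App}, we need a shifted arrow in $\funset$. The natural approach is to assume that function sets are closed under this shift, i.e.\ that the constraint $\Theta$ is invariant under adding $r$ to both the argument and result annotations, as in the \code{sqdist} example where $p \ge p'$ is shift-invariant. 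Otherwise we fall back to \textsc{T-WeakPot}-style reasoning and restrict the lemma to the syntax-directed fragment actually used in preservation. Finally, \textsc{T-Do} takes its output annotation from $\Delta$, and is covered by the generalization above. The structural rules \textsc{T-WeakPot}, \textsc{T-WeakFun} and \textsc{T-ContFun} commute with the shift directly.
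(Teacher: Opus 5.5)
Your routine cases (\textsc{T-Val}, \textsc{T-Tick}$^{\pm}$, \textsc{T-Let}, the case rules, \textsc{T-Casevoid}, literal $\linfunS{x}{e}$ under \textsc{T-Linapp}, and the structural rules) are correct. They are essentially all that the paper's one-line proof (``by induction on computation typing rules'') covers. The remaining cases, however, are not proved. For \textsc{T-Do} and \textsc{T-Handle} you switch to a different statement, in which the outputs of $\Delta$ are shifted, so the conclusion no longer has the same signature. For \textsc{T-App} you add a hypothesis (shift-invariant $\Theta$) that \textsc{T-Fun} does not impose, since it accepts any set $\funset$, e.g.\ a singleton.

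These cases cannot be closed, because the statement as written is false. Let $\Delta = \{\effectT{\ell}{\aty{\unitT}{0}}{\aty{\unitT}{0}}\}$. Then \textsc{T-Do} gives $\cdot; 0 \vdash \doS{\ell}{\trivS} : \exty{\aty{\unitT}{0}}{\Delta}$. Yet for $r > 0$ there is no derivation of $\cdot; r \vdash \doS{\ell}{\trivS} : \exty{\aty{\unitT}{r}}{\Delta}$: \textsc{T-Do} reads the result annotation off the entry for $\ell$ in $\Delta$, and no structural rule changes the result type (\textsc{T-WeakPot} only raises the input). Likewise, $f : \funset; 0 \vdash \appS{f}{\trivS} : \exty{\aty{\unitT}{0}}{\Delta}$ with $\funset = \{\funT{\aty{\unitT}{0}}{\aty{\unitT}{0}}{\Delta}\}$ cannot be relaxed. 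Neither can $\linappS{c}{\trivS}$ for a variable $c$ of linear function type, because \textsc{T-Var} fixes the type of $f$ resp.\ $c$. Your inversion on \textsc{T-LinFun} only handles literal linear functions, not the continuation variables that handler branches actually apply.

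Your proposed generalization does not rescue the plan either. If only the outputs of $\Delta$ are shifted, a handler branch still receives just $q_1$, not $q_1 + r$ as you assert. So a branch that ignores its continuation (say $e_\ell = \retS{\trivS}$ with $q_1 = 0$ and $C = \aty{\unitT}{0}$) cannot produce the extra $r$ demanded by the shifted result type. Moreover, a conclusion with a shifted signature would be useless where the lemma is actually needed. That is the \textsc{T-Let} case of the substitution lemma, where the relaxed $e_1$ must carry the same $\Delta$ as $[v_1/x_1]e_2$.

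So what is missing is not a cleverer induction: the statement or the type system has to change. For example, one could require shift-closed function sets and provide some way to carry extra constant potential across function-call and effect boundaries. That change would then have to be threaded through the stack and state typing used in preservation. You correctly located the problematic cases; the paper's one-line proof does not address them either.
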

  \begin{proof}
    By induction on computation typing rules.
  \end{proof}
  \begin{lemma}[Substitution]
    \label{lem:subst}
    If $\cdot;q_1\vdash v_1:\tau_1$:
    \begin{enumerate}
      \item If $\Gamma,x_1:\tau_1;q_2\vdash v:\tau$, then $\Gamma;q_1+q_2\vdash [v_1/x_1]v:\tau$.
      \item If $\Gamma,x_1:\tau_1;q_2\vdash e:B\odot \Delta$, then $\Gamma;q_1+q_2\vdash
              [v_1/x_1]e:B\odot \Delta$.
    \end{enumerate}
  \end{lemma}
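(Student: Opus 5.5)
We prove parts (1) and (2) simultaneously, by mutual induction on the derivations of $\Gamma,x_1:\tau_1;q_2\vdash v:\tau$ and $\Gamma,x_1:\tau_1;q_2\vdash e:B\odot\Delta$. Value rules appear as premises of computation rules, and \textsc{T-Fun} and \textsc{T-LinFun} have computation premises, so the two parts cannot be separated. Throughout, we read contexts as multisets, so that $x_1$ can be placed in whichever part of a split context the rule assigns it to.

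\textbf{Variable and closed-value cases.} For \textsc{T-Var} with $x=x_1$, we have $\Gamma=\cdot$ and $q_2=0$, and the goal is exactly the hypothesis $\cdot;q_1\vdash v_1:\tau_1$. The case of a different variable cannot arise, since the context of \textsc{T-Var} is a single binding. \textsc{T-Unit}, \textsc{T-Nil} and \textsc{T-Dcont} have contexts not containing $x_1$ (the first two have empty contexts, and $\dcontS{k}$ is closed), so they are vacuous or trivial.

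\textbf{Linear splitting rules.} These are \textsc{T-Pair}, \textsc{T-Inl/Inr}, \textsc{T-Cons}, \textsc{T-Val}, \textsc{T-Let}, \textsc{T-App}, \textsc{T-Linapp}, the case rules, \textsc{T-Do}, and the $\Gamma_1$/$\Gamma_3$ parts of \textsc{T-Handle}. In each, $x_1$ lies in exactly one of the split contexts. We apply the induction hypothesis to that premise, which adds $q_1$ to its potential; the added $q_1$ then appears in the conclusion's potential sum, and substitution commutes with the syntax. In the branching rules (\textsc{T-Casesum}, \textsc{T-Caselist}) the shared $\Gamma_2$ may contain $x_1$. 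Then we apply the induction hypothesis to both branches, each receiving the same extra $q_1$, and the bound variables $x,y$ are fresh. The \textsc{T-Let} case is similar, with $x_1$ in $\Gamma_1$ or $\Gamma_2$. \textsc{T-LinFun} is immediate: the potential $p$ becomes $p+q_1$, which the rule permits freely.

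\textbf{Main obstacle: rules with zeroed contexts.} These are \textsc{T-Fun} and the handler-branch context $\Gamma_2$ of \textsc{T-Handle}, together with the structural rules. If $x_1$ lies in a zeroed context, then $\tau_1=\abs{\tau_1}$, so Lemma~\ref{lem:zero} gives $q_1=0$. The conclusion's potential is therefore unchanged (e.g.\ $0$ in \textsc{T-Fun}). We apply the induction hypothesis to each premise (each arrow type in $\funset$, each effect branch) to get typings of the substituted body at potential $q+0$, and the remaining context is still zeroed. One subtlety is that the premise of \textsc{T-Fun} quantifies over possibly infinitely many types. The induction is still well-founded, because the derivation tree is infinitely branching but well-founded, so we proceed by induction on derivations rather than on their size.

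\textbf{Structural rules.} For \textsc{T-WeakPot}, and for \textsc{T-WeakFun} when the weakened variable is not $x_1$, we apply the induction hypothesis directly. If \textsc{T-WeakFun} weakens $x_1$ itself, then $\tau_1=\funset$, so $q_1=0$ by the same zeroing argument, and the goal follows from the premise unchanged. For \textsc{T-ContFun} contracting $x_1$, we have $\tau_1=\funset$ and $q_1=0$. We apply the induction hypothesis twice to the premise, substituting $v_1$ for $x_1'$ and then for $x_2'$; this is justified because $v_1$ is closed. The result is $[v_1,v_1/x_1',x_2']e=[v_1/x_1]([x_1,x_1/x_1',x_2']e)$ at potential $q+0+0$.
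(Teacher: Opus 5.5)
Your overall plan is the paper's: mutual induction on the derivations for $v$ and $e$, with the zeroing argument forcing $q_1=0$ under \textsc{T-Fun} and in the handler-branch context. On two cases you are more careful than the paper: the zeroed $\Gamma_2$ of \textsc{T-Handle}, which the paper wrongly dismisses, and \textsc{T-ContFun}. The gap is in two cases you file under ``linear splitting'': \textsc{T-Let} with $x_1$ in the context $\Gamma_2$ of the body, and \textsc{T-Handle} with $x_1\in\Gamma_3$ (the return branch). These cases are not additive. In the \textsc{T-Let} case the induction hypothesis gives $\Gamma_2\setminus x_1, x:\tau;\, q'+q_1\vdash[v_1/x_1]e_2:\exty{B}{C}$. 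But $q'$ is not a summand of the conclusion's potential $q$; it is the result annotation of $e_1$. To reapply \textsc{T-Let} you need $\Gamma_1;\, q+q_1\vdash e_1:\exty{\aty{\tau}{q'+q_1}}{C}$. Likewise, for \textsc{T-Handle} you need $\Gamma_1;\, p+q_1\vdash e:\exty{\aty{\tau}{q+q_1}}{\Delta}$. Both require raising the input potential and the output annotation together, and no rule does this, since \textsc{T-WeakPot} only raises the input. The paper closes exactly these two cases with a separate Potential Relax lemma; your proof has no such step.

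This step is also not a routine induction. In its \textsc{T-App} case it needs $\funset$ to be closed under adding the same constant to the argument and result annotations. That holds for constraint sets like the overview's $p\ge p'$, but the rules do not impose it. For \textsc{T-Linapp} on a variable of fixed, monomorphic linear type it fails outright. Take $c:\linfunT{\aty{\unitT}{0}}{\aty{\unitT}{0}}{\Delta}$ and $x_1:\liT{\aty{\unitT}{1}}$. The term $\letS{\linappS{c}{\trivS}}{x}{\retS{\pairS{x}{x_1}}}$ is typable at potential $0$. After substituting $\consS{\trivS}{\nilS}$ (potential $1$), it is not typable at potential $1$, because $\linappS{c}{\trivS}$ can only return $\aty{\unitT}{0}$. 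So the \textsc{T-Let} case cannot be finished from the stated rules. You would need to add a relax (framing) rule or restrict function types, and then recheck the rest of the soundness argument.

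Two smaller points:
\begin{enumerate}
\item In \textsc{T-App} with $x_1$ in function position, the premise demands potential exactly $0$, so you must argue $q_1=0$. The function value is typed by \textsc{T-Var} or \textsc{T-Fun}, so your zeroing argument applies.
\item \textsc{T-Dcont} as written admits an arbitrary context, so that case is not vacuous. With $x_1$ in that context and $q_1>0$, the goal is underivable unless the rule's context is required to be empty or zeroed.
\end{enumerate}
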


  \begin{proof}
    Induction on $\emptyctx;q_1\proves v_1:\tau_1$:
    \begin{itemize}
      \item \textsc{T-var}: /
      \item \textsc{T-triv, T-fun, T-Pair, T-inl, T-inr, T-nil, T-cons, T-linfun, T-dcont}:\\
            (A): Induction on $\Gamma,x_1:\tau_1;q_2\proves v:\tau$
            \begin{itemize}
              \item \textsc{T-var}: $q_2=0, v=x_1,\Gamma = \emptyctx, \tau_1=\tau$. Then
                    $[v_1/x_1]x_1=v_1$, and
                    $\emptyctx;q_1\proves v_1:\tau_1$.
              \item \textsc{T-triv}: /
              \item \textsc{T-fun}: $q_2=0, v=\funS{f}{x}{e}(x_1\neq x), \tau=\funset$. \\
                    $[v_1/x_1]\funS{f}{x}{e}=\funS{f}{x}{[v_1/x_1]e}$.\\
                    By the assumption, $\forall (\funty{\aty{\tau_x}{q'}}{B\odot \Delta})\in\funset$, we have
                    $\Gamma, x_1:\tau_1,  f:\funset,x:\tau_x;q'\proves e:B\odot \Delta$ and $\abs{\Gamma,x_1:\tau_1}=\Gamma,x_1:\tau_1$ (so $\abs{\tau_1}=\tau_1$).\\
                    Then by lemma \ref{lem:zero},  $q_1=0$.\\
                    Using IH(B), $\forall (\funty{\aty{\tau_x}{q'}}{B\odot \Delta})\in\funset$, we have
                    $\Gamma, f:\funset, x:\tau_x;q'+q_1\proves [v_1/x_1]e:B$, where $q_1=0$, $\abs{\Gamma}=\Gamma$.\\
                    By \textsc{T-fun}, $\Gamma;0\proves\funS{f}{x}{[v_1/x_1]e}:\tau$.

              \item \textsc{T-Pair}:
                    \[\Rule{T-Pair}
                      { \Gamma_1 ; q_1' \vdash v_1' : \tau_1' \\ \Gamma_2; q_2' \vdash v_2' : \tau_2' }
                      { \underbrace{\Gamma_1, \Gamma_2}_{\Gamma, x_1 : \tau_1}; \underbrace{q_1' + q_2'}_{q_2} \vdash \underbrace{\pairS{v_1'}{v_2'}}_v : \underbrace{\prodT{\tau_1'}{\tau_2'}}_\tau}\]
                    \begin{itemize}
                      \item If $x_1 \in \Gamma_1$, using IH(A) on the first premise, $\Gamma_1 \setminus x_1; q_1' + q_1 \proves [v_1/x_1]v_1' : \tau_1'$.\\ By \textsc{T-Pair} on this and the second premise, $\Gamma_1 \setminus x, \Gamma_2; q_1 + q_1' + q_2' \proves \pairS{[v_1/x_1]v_1'}{v_2'} : \prodT{\tau_1'}{\tau_2'}$ as desired.
                      \item If $x_2 \in \Gamma_2$, a symmetric argument can be made, starting with IH(A) on the second premise.
                    \end{itemize}

              \item \textsc{T-inl}:
                    \[
                      \Rule{T-Inl}
                      { \Gamma'; p \vdash v' : \tau_1' }
                      { \underbrace{\Gamma'}_{\Gamma, x_1 : \tau_1}; \underbrace{q_1' + p}_{q_2} \vdash \underbrace{\inlS{v'}}_v : \underbrace{\sumT{\aty{\tau_1'}{q_1'}}{A_2}}_\tau }\]

                    Applying IH(A) on the premise, $\Gamma; p + q_1 \proves [v_1/x_1]v' : \tau_1$.\\
                    By \textsc{T-inl} on this, $\Gamma; p + q_1 + q_1' \proves \inlS{[v_1/x_1]v'} : \sumT{\aty{\tau_1'}{q_1'}}{A_2} $ as deisred.

              \item \textsc{T-inr}: Symmetric to \textsc{T-inl}

              \item \textsc{T-nil}: /

              \item \textsc{T-cons}: $v=\consS{v_1'}{v_2'}$, $\tau=\liT{\aty{\tau_0}{p}}$,
                    $\Gamma,x_1:\tau_1=\Gamma_1,\Gamma_2$.\\
                    By the assumption $q_2=p+q_1'+q_2'$, $\Gamma_1;q_1'\proves
                      v_1':\tau_0$, and $\Gamma_2;q_2'\proves v_2':\liT{\aty{\tau_0}{p}}$.\\
                    Then $x_1:\tau_1\in \Gamma_1$ xor $x_1:\tau_1\in \Gamma_2$.
                    \begin{itemize}
                      \item If $x_1:\tau_1\in \Gamma_1$, using IH(A), $\Gamma_1\setminus(x_1:\tau_1);q_1'+q_1\proves
                              [v_1/x_1]v_1':\tau_0$. \\
                            By \textsc{T-cons}, $\Gamma;q_2+q_1\proves
                              \consS{[v_1/x_1]v_1'}{v_2'}:\liT{\aty{\tau_0}{p}}$.

                      \item Otherwise, $x_1:\tau_1\in \Gamma_2$, using IH(A), $\Gamma_2\setminus(x_1:\tau_1);q_2'+q_1\proves
                              [v_1/x_1]v_2':\liT{\aty{\tau_0}{p}}$.\\
                            By \textsc{T-cons}, $\Gamma;q_2+q_1\proves
                              \consS{v_1'}{[v_1/x_1]v_2'}:\liT{\aty{\tau_0}{p}}$.
                    \end{itemize}

              \item \textsc{T-linfun}: $v=\linfunS{x}{e}(x\neq x_1)$ and
                    $\tau=\linfunT{\aty{\tau'}{p}}{B}{\Delta}$.\\
                    $[v_1/x_1]\linfunS{x}{e}=\linfunS{x}{[v_1/x_1]e}$.\\
                    By the assumption. we have $\Gamma, x_1:\tau_1, x:\tau';q_2+p\proves e:B\odot
                      \Delta$.\\
                    Using IH(B), $\Gamma, x:\tau';q_2+p+q_1\proves [v_1/x_1]e:B\odot \Delta$.\\
                    By \textsc{T-linfun}, $\Gamma;q_2+q_1\proves
                      [v_1/x_1]\linfunS{x}{e}:\linfunT{\aty{\tau'}{p}}{B}{\Delta}$.\\

              \item \textsc{T-dcont}: $v=\dcontS{k}$ has no free variable, so vacuous.\\
            \end{itemize}
            (B): Induction on  $\Gamma,x_1:\tau_1;q_2\proves e:B\odot \Delta$
            \begin{itemize}
              \item \textsc{T-val}: $e=\retS{v'}, B=\aty{\tau}{p}$.\\
                    By the assumption, $\Gamma, x_1:\tau_1;q_2-p\proves v':\tau$.\\
                    Using IH(A), $\Gamma;q_2-p+q_1\proves [v_1/x_1]v':\tau$.\\
                    By \textsc{T-val}, $\Gamma;q_2+q_1\proves
                      [v_1/x_1]\retS{v'}:\aty{\tau}{p}\odot \Delta$.

              \item \textsc{T-let}: $\Gamma, x_1:\tau_1=\Gamma_1,\Gamma_2$ and $e=\letS{e_1}{x}{e_2}$.\\
                    By the assumption, $\Gamma_1;q_2\proves e_1:\aty{\tau}{q_3}\odot \Delta$
                    and $\Gamma_2,x:\tau_1;q_3\proves e_2:B\odot \Delta (x\neq x_1)$\\
                    Then $x_1:\tau_1\in\Gamma_1$ xor $x_1:\tau_1\in\Gamma_2$.
                    \begin{itemize}
                      \item If $x_1:\tau_1\in\Gamma_1$, by IH(B),
                            $\Gamma_1\setminus(x_1:\tau_1);q_2+q_1\proves
                              [v_1/x_1]e_1:\aty{\tau}{q_3}\odot \Delta$.\\
                            By \textsc{T-let}, $\Gamma;q_2+q_1\proves
                              \letS{[v_1/x_1]e_1}{x}{e_2}:B\odot \Delta$.

                      \item Otherwise $x_1:\tau_1\in\Gamma_2$, then by IH(B),
                            $\Gamma_2\setminus(x_1:\tau_1),x:\tau;q_3+q_1\proves
                              [v_1/x_1]e_2:B\odot \Delta$.\\
                            By \textsc{T-let}, and potential relax lemma \ref{lem:potrelax}, $\Gamma;q_2+q_1\proves
                              \letS{e_1}{x}{[v_1/x_1]e_2}:B\odot \Delta$.
                    \end{itemize}

              \item \textsc{T-app}: $\Gamma, x_1:\tau_1=\Gamma_1,\Gamma_2$ and $e=\appS{v_1'}{v_2'}$.\\
                    By the assumption, $q_2=p_1+p_2+p$. $\Gamma_1;p\proves
                      v_1':\funset$, $\funty{\aty{\tau}{p_1}}{B\odot \Delta}\in\funset$ and $\Gamma_2;p_2\proves v_2':\tau$.\\
                    Then $x_1:\tau_1\in\Gamma_1$ xor $x_1:\tau_1\in\Gamma_2$.
                    \begin{itemize}
                      \item If $x_1:\tau_1\in\Gamma_1$, by IH(A), $\Gamma_1\setminus(x_1:\tau_1);q_1+p\proves
                              [v_1/x_1]v_1':\funset$.\\
                            By \textsc{T-app}, $\Gamma;q_1+q_2\proves\appS{[v_1/x_1]v_1'}{v_2'}:B\odot \Delta$.

                      \item Otherwise $x_1:\tau_1\in\Gamma_2$, then by IH(A), $\Gamma_2\setminus(x_1:\tau_1);p_2+q_1\proves
                              [v_1/x_1]v_2':\tau$.\\
                            By \textsc{T-app}, $\Gamma;q_2+q_1\proves\appS{v_1'}{[v_1/x_1]v_2'}:B\odot \Delta$.
                    \end{itemize}

              \item \textsc{T-tick}: /

              \item \textsc{T-casepair}:
                    \[\Rule{T-Casepair}
                      {
                        \Gamma_1; q_1' \vdash v : \prodT{\tau_1'}{\tau_2'}\\
                        \Gamma_2, x_1 : \tau_1', x_2 : \tau_2' ; q_2' \vdash e' : \exty B \Delta
                      }
                      { \underbrace{\Gamma_1, \Gamma_2}_{\Gamma, x_1 : \tau_1}; \underbrace{q_1' +
                          q_2'}_{q_2} \vdash \underbrace{\casepairS{v}{x_1}{x_2}{e'}}_e  : \exty B \Delta}\]

                    \begin{itemize}
                      \item If $x \in \Gamma_1$, by IH(A) on the first premise, $\Gamma_1 \setminus x; q_1' + q_1 \proves [v_1/x_1]v : \prodT{\tau_1'}{\tau_2'}$.\\
                            By \textsc{T-casepair} on this and the second premise, $\Gamma_1 \setminus
                              x, \Gamma_2; q_1' + q_1 + q_2' \proves
                              \casepairS{[v_1/x_1]v}{x_1}{x_2}{e'} : \exty{B}{\Delta}$ as desired.
                      \item If $x \in \Gamma_2$, by IH(B) on the second premise, $\Gamma_2 \setminus
                              x, x_1 : \tau_1', x_2 : \tau_2'; q_2' + q_1 \proves [v_1/x_1]e' :
                              \exty{B}{\Delta}$.\\
                            By \textsc{T-casepair} on this and the first premise, $\Gamma_1 \setminus
                              x, \Gamma_2; q_1' + q_1 + q_2' \proves
                              \casepairS{v}{x_1}{x_2}{[v_1/x_1]e'} : \exty{B}{\Delta}$ as desired.
                    \end{itemize}

              \item \textsc{T-casesum}:
                    Analogous to \textsc{T-caseprod} case.

              \item \textsc{T-caselist}: $\Gamma, x_1:\tau_1=\Gamma_1,\Gamma_2$ and $e=\caseS{v}{e_0}{x}{y}{e_1}$.\\
                    By the assumption, $q_2=q_1'+q_2'$, $\Gamma_1;q_1'\proves v:\liT{\aty{\tau_0}{r}}$,
                    $\Gamma_2;q_2'\proves e_0:B\odot \Delta$,
                    $\Gamma_2,x:\tau_0,y:\liT{\aty{\tau_0}{r}};q_2'+r\proves e_1:B\odot \Delta$.\\
                    Then $x_1:\tau_1\in\Gamma_1$ xor $x_1:\tau_1\in\Gamma_2$.
                    \begin{itemize}
                      \item If $x_1:\tau_1\in\Gamma_1$, by IH(A),
                            $\Gamma_1\setminus(x_1:\tau_1);q_1+q_1'\proves[v_1/x_1]v:
                              \liT{\aty{\tau_0}{r}}$.\\
                            By \textsc{T-caselist}, $\Gamma;q_1+q_2\proves
                              \caseS{[v_1/x_1]v}{e_0}{x}{y}{e_1}:B\odot \Delta$.

                      \item Otherwise $x_1:\tau_1\in\Gamma_2$, then by IH(B),
                            $\Gamma_2\setminus(x_1:\tau_1);q_2'+q_1\proves
                              [v_1/x_1]e_0:B\odot \Delta$, and
                            $\Gamma_2\setminus(x_1:\tau_1),x:\tau_0,y:\liT{\aty{\tau_0}{r}};q_2'+q_1+r\proves
                              [v_1/x_1]e_1:B\odot \Delta$.\\
                            By \textsc{T-caselist}, $\Gamma;q_1+q_2\proves
                              \caseS{v}{[v_1/x_1]e_0}{x}{y}{[v_1/x_1]e_1}:B\odot \Delta$.
                    \end{itemize}

              \item \textsc{T-do}: By the premise, $\Gamma,
                      x_1:\tau_1;p\proves v:\tau_0$ with $\effectT{\ell}{\aty{\tau_0}{q_0}}{B}\in \Delta$, $q_2=p+q_0$.\\
                    Using IH(A), $\Gamma;p+q_1\proves [v_1/x_1]v:\tau_0$.\\
                    By \textsc{T-do}, $\Gamma;q_2+q_1\proves
                      [v_1/x_1]\doS{\ell}{v}:B\odot\Delta$.

              \item \textsc{T-handle}:

                    \[\Rule{T-Handle}
                      { \Gamma_1; q_2 \vdash e : \exty{\aty{\tau}{q}}{\Delta}\\
                        \isZero{\Gamma_2}\\
                        \forall \effectT{\ell}{\aty{\tau_1}{q_l}}{B} \in \Delta.\ (
                        \Gamma_2, x : \tau_1, c : \linfunT{B}{C}{\Delta'}; q_l \vdash
                        e_\ell : \exty{C}{\Delta'})\\\\
                        \Gamma_3, y : \tau; q \vdash e_r : \exty{C}{\Delta'}
                      }
                      { \Gamma_1, \Gamma_2, \Gamma_3; q_2 \vdash \newhandlerS{e}{\ell}{x}{c}{e_\ell}{\Delta}{y}{e_r} :
                        \exty{C}{\Delta'} }\]
                    Let $\Gamma_1, \Gamma_2, \Gamma_3=\Gamma,x_1:\tau_1$. Since $\isZero{\Gamma_2}$, only 2 cases are possible:
                    \begin{itemize}
                      \item If $x_1 \in \Gamma_1$, by IH(B) on the first premise, $\Gamma_1 \setminus
                              x_1; q_2 + q_1 \proves [v_1/x_1]e : \exty{\aty{\tau}{q}}{\Delta}$.\\
                            By \textsc{T-handle}, we have the desired.
                      \item If $x_1 \in \Gamma_2$, by IH(B) on the last premise, $\Gamma_2 \setminus
                              x_1, y : \tau; q + q_1 \proves [v_1/x_1]e_r :
                              \exty{C}{\Delta'}$.\\
                            By \textsc{T-handle} and potential relax lemma \ref{lem:potrelax} on the first premis, we have the desired.
                    \end{itemize}

              \item \textsc{T-ContFun}, \textsc{T-weekfun}, \textsc{T-weakpot}:
                    Follows from applying IH on the premise.

            \end{itemize}
    \end{itemize}
  \end{proof}

  \begin{lemma}[State Relaxing]
    \label{lem:strelax}
    If $q\vdash s$, then $q'\vdash s$ for $q'\geq q$.
  \end{lemma}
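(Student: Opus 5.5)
We argue by case analysis on the last rule of the derivation of $q\vdash s$, with $q' = q + r$ for some $r \ge 0$. In every case the plan is to push the surplus $r$ into a component of the state typing that can absorb extra potential, leaving the stack typings untouched. There are three cases, one per state rule.

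\textbf{Case ST-Exp.} Here we have $\cdot;q\vdash e:\exty B\Delta$ and $\_\takes k\takes\exty B\Delta$. We apply \textsc{T-WeakPot} to the expression typing with $q+r\ge q$, which gives $\cdot;q+r\vdash e:\exty B\Delta$. We then reassemble with \textsc{ST-Exp}, using the same stack typing.

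\textbf{Case ST-Val.} Here $q=q_1+q_2$, with $\cdot;q_1\vdash v:\tau$ and $\_\takes k\takes\exty{\aty\tau{q_2}}\Delta$. Two routes are available.
\begin{itemize}
\item We cannot simply enlarge $q_1$, because by the uniqueness lemma the potential of a closed value is determined by its type.
\item We cannot change the stack typing by appeal to a weakening lemma either, since none is stated.
\end{itemize}
The route I would take is to weaken the stack typing itself: prove an auxiliary fact that if $\exty{A}{\Delta_1}\takes k\takes\exty{\aty\tau{q_2}}{\Delta_2}$, then $\exty{A}{\Delta_1}\takes k\takes\exty{\aty\tau{q_2+r}}{\Delta_2}$. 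We prove this by cases on the top frame of $k$.
\begin{itemize}
\item For \textsc{K-Emp}, the input and output annotations coincide. So the claim, as stated, fails for the empty stack unless the output is also raised.
\item Hence the auxiliary statement should be stated as \emph{stack input relaxing modulo the unconstrained outer output}. In \textsc{ST-Val} the outer type is the wildcard $\_$, so we may also raise the output to $\aty{\tau}{q_2+r}$, and \textsc{K-Emp} is again instantiated.
\item For \textsc{K-Bnd}, the premise $x:\tau;q_2\vdash e:\exty B{\Delta_2}$ is weakened to potential $q_2+r$ by \textsc{T-WeakPot}. The rest of the stack is unchanged.
\item For \textsc{K-Handler}, the premise $y:\tau;q_2\vdash e_r:\ldots$ is weakened in the same way. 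The effect branches are unaffected, since they do not depend on the input annotation $q_2$.
\end{itemize}
Only the top frame needs modification, so no induction is really needed beyond the empty-stack subtlety. We then conclude by \textsc{ST-Val}, with $q_1+(q_2+r)=q+r$.

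\textbf{Case ST-Eff.} Here $q=q_1+p$, with $\effectT{\ell}{\aty\tau p}{C}\in\Delta'$ and two stack typings, for $k$ and $k'$. Changing $p$ is the awkward option, since it would change $\Delta'$, which both stacks mention. Instead I would absorb $r$ by choosing a different typing. One option is to replace $\Delta'$ by a signature in which $\ell$ has input annotation $p+r$. A handler branch typed at $q_1$-potential $p$ can be weakened to $p+r$ by \textsc{T-WeakPot}, and this is what the \textsc{K-Handler} premise in $k$ needs. However, $k'$ and the sequence frames in $k$ are typed against $\Delta'$ as their effect signature, and changing an entry's input annotation there requires an effect-signature subsumption that is not available.

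The more robust alternative is to leave the signatures alone and place the surplus on the payload side. This is not directly possible either, because $q_1$ is fixed by $v$'s type. So I expect ST-Eff to be the main obstacle. My plan is to prove a signature-relaxation lemma: raising the input potential of an effect in $\Delta$ preserves computation typings (by induction, with \textsc{T-Do} weakened via \textsc{T-WeakPot}) and stack typings (weakening the handler branches). I would then apply it to $k$ and $k'$ simultaneously.

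If that becomes too heavy, a fallback is to make the statement hold trivially by adopting the convention that state typing is closed under surplus potential. That means adding a \textsc{ST-Weak} rule, under which preservation remains valid since \textsc{D-tick} only consumes resources. I would try the direct relaxation lemmas first.
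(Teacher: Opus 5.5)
\textbf{There is a genuine gap in the \textsc{ST-Eff} case.} Your \textsc{ST-Exp} and \textsc{ST-Val} cases are correct, and they are what the paper's one-line proof (``by induction on state and stack typing rules'') amounts to. You weaken the expression, or the premise of the top frame (\textsc{K-Bnd} or \textsc{K-Handler}), with \textsc{T-WeakPot}, and use the wildcard outer type when $k=\epsilon$.

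The problem in \textsc{ST-Eff} is not that it is heavy: the signature-relaxation lemma you plan to prove is false. Raising only the input annotation of $\ell$ makes every $\doS{\ell}{\cdot}$ typed against $\Delta'$ more expensive. \textsc{T-WeakPot} cannot supply the difference, because \textsc{T-Do} pins the result annotation to $\ell$'s output annotation. For example, take $\Delta=\{\effectT{\ell}{\aty{\unitT}{p}}{\aty{\unitT}{p}}\}$. Then $x:\unitT;p\vdash \letS{\doS{\ell}{x}}{y}{\doS{\ell}{y}}:\exty{\aty{\unitT}{p}}{\Delta}$ is derivable. Once the input annotation is raised to $p+r$, no typing exists, even if you also give the frame $p+r$: the first perform returns exactly $p$, which the second perform cannot afford. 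This frame can sit in $k'$ of a well-typed state such as $\epsilon;h\exnret(\ell,\trivS,\epsilon;x.e)$, where $h$'s branch is $\linappS{c}{x}$. So applying your lemma to $k$ and $k'$ simultaneously fails.

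The \textsc{ST-Weak} fallback changes the system rather than proving the statement. You also cannot skip this case: the \textsc{D-Do} case of preservation derives only $p+q_1\le q$ and then concludes $q\vdash k\exnret(\ell,v,\epsilon)$, which needs relaxing for \textsc{ST-Eff}.

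\textbf{The missing idea is that the surplus must circulate.} It travels with the effect to the handler, the handler passes it back through the continuation, and the resumed frames spend it on the next perform. If a handler branch performs an outer effect before resuming, the surplus must cross that $\mathsf{do}$ too, so outer signatures are affected as well. Hence you must raise $\ell$'s output annotation along with its input.

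The clean way to do this is a uniform shift. Add $r$ to every computation-level constant annotation:
\begin{itemize}
\item the input potential and result annotation of computation judgments;
\item the argument and result annotations of every arrow (in every function set) and of every linear-function type;
\item the input and output annotations of every effect in every signature;
\item the annotations at stack boundaries.
\end{itemize}
Leave data annotations (list elements, injections) untouched.

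Every rule is invariant under this shift:
\begin{itemize}
\item \textsc{T-Do} becomes $p+q_1+r$ in and $q_2+r$ out, and \textsc{T-Let}, \textsc{T-App}, \textsc{T-Linapp}, \textsc{T-Handle}, \textsc{K-Bnd} and \textsc{K-Handler} shift in the same way.
\item \textsc{T-Casevoid} needs one extra \textsc{T-WeakPot}.
\item Value typings keep their potential, since data annotations are unchanged. So $\abs{\Gamma}=\Gamma$ is preserved for \textsc{T-Fun} and \textsc{T-Handle}.
\item \textsc{T-Dcont} follows from the shifted stack typing.
\end{itemize}
Shifting the whole \textsc{ST-Eff} derivation retypes $k$ and $k'$ against the shifted $\Delta'$ and gives $q_1+p+r\vdash k\exnret(\ell,v,k')$. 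The same shift also reproves your other two cases, though your local argument is simpler there.
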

  \begin{proof}
    By induction on state and stack typing rules.
  \end{proof}

  \begin{lemma}[Stack Append Typing]
    \label{lem:stack-append}
    $\exty{B_2}{\Delta_2} \takes k_1 ~@~ k_2 \takes \exty{B_1}{\Delta_1}$ if and only if there exists $B, \Delta$ such that $\exty{B_2}{\Delta_2} \takes k_1 \takes \exty{B}{\Delta}$ and $\exty{B}{\Delta} \takes k_2 \takes \exty{B_1}{\Delta_1}$
  \end{lemma}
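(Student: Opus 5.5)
The plan is to prove the lemma by structural induction on the \emph{second} stack $k_2$. The operator $~@~$ is defined by recursion on its second argument, and the stack typing rules \textsc{K-Emp}, \textsc{K-Bnd} and \textsc{K-Handler} are likewise driven by the outermost frame, which sits at the right end of the stack. The inductive statement has to be generalized over the annotated type and signature $B_1,\Delta_1$ that the combined stack accepts. We also keep $B_2,\Delta_2$ fixed throughout, since they only describe the output of $k_1$. Both directions of the ``if and only if'' are handled together in each case.

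\emph{Case $k_2=\epsilon$.} Here $k_1 ~@~ \epsilon = k_1$.
\begin{itemize}
\item For the forward direction, choose $B=B_1$ and $\Delta=\Delta_1$. The first conjunct is then the hypothesis itself, and the second conjunct $\exty{B_1}{\Delta_1}\takes\epsilon\takes\exty{B_1}{\Delta_1}$ is an instance of \textsc{K-Emp}.
\item For the backward direction, \textsc{K-Emp} is the only rule whose conclusion mentions $\epsilon$. Inverting $\exty{B}{\Delta}\takes\epsilon\takes\exty{B_1}{\Delta_1}$ therefore forces $B=B_1$ and $\Delta=\Delta_1$, and the other premise is exactly the goal.
\end{itemize}

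\emph{Case $k_2=k';f$.} Here $k_1 ~@~ (k';f)=(k_1 ~@~ k');f$. We split on the form of the frame $f$.
\begin{itemize}
\item If $f=x.e$, the only applicable rule is \textsc{K-Bnd}. Inverting it on the hypothesis gives $B_1=\aty{\tau}{q}$, a typing $\exty{B_2}{\Delta_2}\takes k_1 ~@~ k'\takes\exty{B'}{\Delta_1}$, and $x:\tau;q\vdash e:\exty{B'}{\Delta_1}$.
\item The induction hypothesis for $k'$, instantiated at $B',\Delta_1$, yields $B,\Delta$ with $\exty{B_2}{\Delta_2}\takes k_1\takes\exty{B}{\Delta}$ and $\exty{B}{\Delta}\takes k'\takes\exty{B'}{\Delta_1}$.
\item Reapplying \textsc{K-Bnd} with the unchanged premise on $e$ gives $\exty{B}{\Delta}\takes k';x.e\takes\exty{B_1}{\Delta_1}$, which finishes the forward direction.
\item The backward direction is the mirror image: invert \textsc{K-Bnd} on the typing of $k';x.e$, use the backward half of the induction hypothesis to type $k_1 ~@~ k'$, and reapply \textsc{K-Bnd}.
\item If $f$ is a handler frame, the argument is identical with \textsc{K-Handler} in place of \textsc{K-Bnd}. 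The branch premises for each $e_\ell$ and the premise for $e_r$ carry over unchanged.
\end{itemize}

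The step needing the most care is the inversion in the cons case. It relies on two facts. First, stack typing is syntax-directed on the outermost frame, so exactly one rule applies to each frame form. Second, the frame-local premises (the typing of $e$, or of $e_\ell$ and $e_r$) mention only the type boundary between the frame and the stack beneath it, namely $B'$ and $\Delta_1$ (respectively the handler's output type and $\Delta_3$). They do not depend on the contents of that stack, so they can be transplanted verbatim from $k_1 ~@~ k'$ to $k'$ and back. Once this is observed, the rest is bookkeeping of the intermediate type and signature $B',\Delta'$.
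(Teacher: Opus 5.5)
Your proof is correct. Inducting on $k_2$ with $B_1,\Delta_1$ generalized, and inverting \textsc{K-Emp}, \textsc{K-Bnd} or \textsc{K-Handler} on the outermost frame, handles both directions. The frame-local premises do transfer unchanged.

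The paper gives only a one-line sketch: the lemma follows the definition of $~@~$ ``by induction over the structure of the first stack.'' Since $~@~$ recurses on its \emph{second} argument, your induction on $k_2$ is the one that actually tracks that definition, and it needs no auxiliary facts. A literal induction on $k_1$ would first need $\epsilon ~@~ k_2 = k_2$ and associativity of $~@~$, each proved by induction on the second argument. It would then still have to decompose the typing of $(\epsilon;f) ~@~ k_2$, which is essentially your argument again. So your route is the sensible reading of the paper's intended proof, written out in full.
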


  \begin{theorem}[Preservation]
    If $q\vdash s$ and $s;q\steps s_0;q_0$ then $q_0\vdash s_0$.
  \end{theorem}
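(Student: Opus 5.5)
The proof proceeds by case analysis on the transition $s;q\steps s_0;q_0$. In each case we invert the state typing $q\vdash s$, peeling off any uses of \textsc{T-WeakPot}, \textsc{T-WeakFun} and \textsc{T-ContFun} by an inner induction on the computation typing derivation. We then rebuild a typing of $s_0$ with $q_0$ potential. Wherever the rebuilt state needs less potential than is available, Lemma~\ref{lem:strelax} (State Relaxing) absorbs the slack. The redex cases (\textsc{D-fun}, \textsc{D-pair}, \textsc{D-inl}, \textsc{D-inr}, \textsc{D-nil}, \textsc{D-cons}, \textsc{D-linfun}) all follow the pattern of the \textsc{D-Cons} case above. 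We invert \textsc{ST-Exp}, then the matching elimination rule, then the value typing of the scrutinee, and finally apply Lemma~\ref{lem:subst} (Substitution) once per bound variable. Potential adds up exactly. For \textsc{D-fun}, the function value itself is substituted for $f$ at potential $0$ by \textsc{T-Fun}. The argument is substituted at its own potential, and the chosen arrow type from $\funset$ supplies the body typing.

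Next come the stack-manipulating cases. \textsc{D-ret} inverts \textsc{T-Val} into $\cdot;q-p\vdash v:\tau$ and applies \textsc{ST-Val} with $q_2=p$. \textsc{D-let} inverts \textsc{T-Let} and pushes the frame $x.e_2$ using \textsc{K-Bnd}, whose premise is exactly the second premise of \textsc{T-Let} with empty $\Gamma_2$. \textsc{D-seq} inverts \textsc{ST-Val} and \textsc{K-Bnd}, then substitutes. \textsc{D-try}, the handler push, inverts \textsc{T-Handle} with closed contexts, so the $\isZero{\Gamma_2}$ constraint is trivial, and builds \textsc{K-Handler}. \textsc{D-normal} inverts \textsc{K-Handler} and substitutes into $e_r$. \textsc{D-tick} uses \textsc{T-Tick}$^{\pm}$: with $p=q'+r$ available, the step leaves $r$, and \textsc{ST-Val} types $\trivS$ at potential $0$ with $r$ attached. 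For negative ticks, the potential grows by exactly the annotation.

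The effect cases are as follows. \textsc{D-do} inverts \textsc{T-Do} to obtain $\cdot;p\vdash v:\tau_1$ and $\ell:\aty{\tau_1}{q_1}\Rightarrow\aty{\tau_2}{q_2}\in\Delta$. It then uses \textsc{ST-Eff} with $k'=\epsilon$ typed by \textsc{K-Emp} at $\exty{\aty{\tau_2}{q_2}}{\Delta}$, which matches the continuation input type. \textsc{D-Capture} inverts \textsc{ST-Eff} and \textsc{K-Bnd} for $k;x.e$. It then uses Lemma~\ref{lem:stack-append} to combine $x.e$ (typed from $\exty{B}{\Delta'}$ back to $\exty{\aty\tau q}{\Delta'}$) with $k'$ into a typing of $x.e\,@\,k'$. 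Here we must check that the frame $x.e$ preserves the effect signature $\Delta'$, which \textsc{K-Bnd} guarantees. \textsc{D-handle} is as shown above. \textsc{D-Dcont} inverts \textsc{T-Linapp} and \textsc{T-Dcont} to obtain $\exty{A_1}{\Delta_1}\takes k'\takes\exty{\aty\tau{q_1}}{\Delta_2}$ and the argument at potential $q_2$. Composing with $\_\takes k\takes\exty{A_1}{\Delta_1}$ via Lemma~\ref{lem:stack-append} yields a typing of $k\,@\,k'$, and \textsc{ST-Val} gives $q_1+q_2\le q$.

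I expect the main obstacle to be the effect cases \textsc{D-Capture} and \textsc{D-Dcont}. There the annotations recorded in \textsc{ST-Eff} (the intermediate $B,\Delta'$) must line up exactly with those produced by Lemma~\ref{lem:stack-append}, in particular the ``there exists $B,\Delta$'' direction. A second concern is whether the $\Delta$ of a handler frame can differ from the $\Delta'$ in \textsc{ST-Eff}. I would handle this by keeping \textsc{ST-Eff}'s signature as the one read off the top frame of $k$ and re-deriving it after capture. Weakening rules wrapping the scrutinized computation are dispatched uniformly by inducting on the derivation and using State Relaxing.
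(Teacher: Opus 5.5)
Your proposal is correct and follows essentially the same route as the paper's proof. Both proceed by case analysis on the transition, use an inner induction to strip \textsc{T-WeakPot}, apply the Substitution lemma in the redex cases, use only the composition (``if'') direction of the Stack Append lemma for \textsc{D-Capture}, \textsc{D-handle} and \textsc{D-Dcont}, and invoke State Relaxing to absorb slack. So the existential direction you were worried about is never needed. Your \textsc{D-normal} case (invert \textsc{K-Handler}, substitute $v$ into $e_r$, conclude with \textsc{ST-Exp}) matches the stated transition rule better than the paper's appendix, which concludes $q \vdash k \return v$ instead. You also cover \textsc{D-linfun}, which the paper's case list omits.
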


  \begin{proof}
    Induct on $s;q\steps s_0;q_0$.
    \begin{itemize}

      \item \textsc{D-tick}: $k\eval\tickS{r};q\steps k\return \trivS; q_0$ and $q\geq r,
              q_0=q-r$.\\
            $q\proves s$, so $\emptyctx;q\proves \tickS{r}:B\odot \Delta$ and $\_\takes k\takes B\odot \Delta$.\\
            We want to show $q-r\proves k\return \trivS$.\\
            Let $B=\aty{\tau}{x}$. Induct on $\emptyctx;q\proves \tickS{r}:\aty{\tau}{x}\odot \Delta$ to show
            $q-x\geq r$ and
            $\tau=\unitT$. Only 2 cases are possible:
            \begin{itemize}
              \item \textsc{T-tick$^+$}: Then $r\geq 0$, $q=r+x$, $\tau=\unitT$, $x=0$. Good.
              \item \textsc{T-tick$^-$}: Then $r\leq 0$, $q=r+x$, $\tau=\unitT$, $x=0$. Good.


              \item \textsc{T-WeakPot}: By the assumption,
                    $\emptyctx;q'\proves\tickS{r}:\aty{\tau}{x}\odot \Delta$, where $q\geq q'$.\\
                    Using the inner IH $q'-x\geq r$, we have  $q'-x\geq r$.
            \end{itemize}
            By \textsc{T-unit}, $\emptyctx;0\proves \trivS:\unitT$.\\
            We have $\_\takes k\takes \aty{\tau}{x}\odot \Delta$, then by \textsc{ST-val}, $x\proves k\return
              \trivS$.\\
            $x\leq q-r$. By state relaxing lemma \ref{lem:strelax}, $q-r\proves k\return \trivS$.

      \item \textsc{D-fun}: $k\eval\appS{\funS{f}{x}{e}}{v_2};q\steps k\eval
              [\funS{f}{x}{e},v_2/f,x]e;q$.\\
            $q\proves s$, so $\_\takes k\takes B\odot \Delta$, $\emptyctx;q\proves \appS{\funS{f}{x}{e}}{v_2}:B\odot
              \Delta$.\\
            It suffices to show $\emptyctx;q\proves [\funS{f}{x}{e},v_2/f,x]e:B\odot \Delta$.\\
            Induct on $\emptyctx;q\proves \appS{\funS{f}{x}{e}}{v_2}:B\odot \Delta$. Only 2 cases are possible:
            \begin{itemize}
              \item \textsc{T-app}: \\
                    By the assumption, $\emptyctx;0\proves
                      \funS{f}{x}{e}:\funset$, $\emptyctx;q_2\proves v_2:\tau$,
                    $q=q_1+q_2$, $\funty{\aty{\tau}{q_1}}{B\odot \Delta}\in\funset$.\\
                    Invert, get that for $\funty{\aty{\tau}{q_1}}{B\odot \Delta}\in\funset$, we have
                    $f:\funset, x:\tau;q_1\proves e:B\odot \Delta$.\\
                    Using the substitution lemma \ref{lem:subst} twice on $f,x$,
                    $\emptyctx;q_1+q_2\proves [\funS{f}{x}{e},v_2/f,x]e:B\odot \Delta$.


              \item \textsc{T-WeakPot}: \\
                    By the assumption,
                    $\emptyctx;q'\proves\appS{\funS{f}{x}{e}}{v_2}:B\odot \Delta$ and
                    $q\geq q'$.\\
                    Using the inner IH, $\emptyctx;q'\proves
                      [\funS{f}{x}{e},v_2/f,x]e:B\odot \Delta$.\\
                    By \textsc{T-WeakPot}, $\emptyctx;q\proves
                      [\funS{f}{x}{e},v_2/f,x]e:B\odot \Delta$.
            \end{itemize}

      \item \textsc{D-ret}: $k\eval \retS{v};q\steps k\return v;q$.\\
            $q\proves k\eval \retS{v}$, so $\_\takes k\takes B\odot \Delta$ and $\emptyctx;q\proves
              \retS{v}:B\odot \Delta$.\\
            Let $B=\aty{\tau}{r}$. Induct on $\emptyctx;q\proves \retS{v}:\aty{\tau}{r}\odot
              \Delta$
            to show there exists $q_1$, where $\emptyctx;q_1\proves
              v:\tau$ and $q-r\geq q_1$.\\
            Only 2 cases are possible:
            \begin{itemize}
              \item \textsc{T-val}: $B=\aty{\tau}{r}$,  and the assumption gives
                    $\emptyctx;q_1\proves
                      v:\tau$ where $q_1+r=q$.


              \item \textsc{T-WeakPot}:\\
                    By the assumption, $\emptyctx;q\proves \retS{v}:B\odot
                      \Delta$, and $q\geq q'$.\\
                    Using the inner IH, we have $q-r\geq q'-r\geq q_1$, where $\emptyctx;q_1\proves
                      v:\tau$.
            \end{itemize}
            By \textsc{ST-val}, $q_1'+r\proves k\return v$.\\
            $q_1'+r\leq q-r+r=q$, so $q\proves k\return v$ by state relaxing lemma \ref{lem:strelax}.

      \item \textsc{D-let}: $k\eval \letS{e_1}{x}{e_2};q\steps k;x.e_2\eval e_1;q$.\\
            By the assumption, $q\proves s$, so $\emptyctx;q\proves  \letS{e_1}{x}{e_2}:B\odot
              \Delta$,
            $\_\takes k\takes B\odot \Delta$. \\
            Induct on $\emptyctx;q\proves  \letS{e_1}{x}{e_2}:B\odot \Delta$ to show:\\
            for some
            $\tau_1, q_1$, we have $\emptyctx;q\proves e_1:\aty{\tau_1}{q_1}\odot \Delta$ and
            $x:\tau_1;q_1\proves e_2:B\odot \Delta$.\\
            Only 2 cases are possible:
            \begin{itemize}
              \item \textsc{T-let}: Immediate by the assumption.


              \item \textsc{T-WeakPot}:.\\
                    By the assumption, $\emptyctx;q'\proves
                      \letS{e_1}{x}{e_2}:B\odot \Delta$, where $q\geq q'$.\\
                    Using the inner IH, we have $\tau_1, q_1$, such that $\emptyctx;q'\proves
                      e_1:\aty{\tau_1}{q_1}\odot \Delta$,
                    $x:\tau_1;q_1\proves e_2:B\odot \Delta$.\\
                    By \textsc{T-WeakPot}, $\emptyctx;q\proves
                      e_1:\aty{\tau_1}{q_1}\odot \Delta$, $x:\tau_1;q_1\proves e_2:B\odot \Delta$.
            \end{itemize}
            By \textsc{K-bnd}, $k;x.e_2\takes \aty{\tau_1}{q_1}\odot \Delta $.\\
            We have $\emptyctx;q\proves e_1:\aty{\tau_1}{q_1}\odot \Delta$. By \textsc{ST-exp}, $q\proves k;x.e_2\eval e_1$.

      \item \textsc{D-seq}: $k;x.e\return v;q\steps k\eval [v/x]e;q$.\\
            By the assumption, $q\proves  k;x.e\return v$. Invert, get $q=q_1+q_2$,
            $\emptyctx;q_1\proves v:\tau$ and $\_\takes k;x.e\takes \aty{\tau}{q_2}\odot \Delta$.\\
            Invert, $x:\tau;q_2\proves e:B\odot \Delta$ and $\_\takes k\takes B\odot \Delta$.\\
            By the substitution lemma \ref{lem:subst}, $\emptyctx;q_2+q_1\proves [v/x]e:B\odot \Delta$. \\
            By \textsc{ST-exp}, $q_2+q_1\proves k\eval [v/x]e$ ($q=q_1+q_2$).

      \item \textsc{D-nil}: $k\eval \caseS{\nilS}{e_0}{x}{y}{e_1};q\steps k\eval e_0;q$.\\
            By the assumption, $\_\takes k\takes B\odot \Delta$,
            $\emptyctx;q\proves\caseS{\nilS}{e_0}{x}{y}{e_1}:B\odot \Delta$.\\
            It suffices to show $\emptyctx;q\proves e_0:B\odot \Delta$.\\
            Induct on $\emptyctx;q\proves\caseS{\nilS}{e_0}{x}{y}{e_1}:B\odot \Delta$. Only 2 cases are possible:

            \begin{itemize}
              \item \textsc{T-caselist}: By the assumption, $\emptyctx;q_2\proves e_0:B\odot \Delta$,
                    $\emptyctx;q_1\proves \nilS:\liT{A}$ where
                    $q=q_1+q_2$.\\
                    Then by \textsc{T-WeakPot}, $\emptyctx;q\proves e_0:B\odot \Delta$.

              \item \textsc{T-WeakPot}: Analogous to other cases.
            \end{itemize}

      \item \textsc{D-cons}: $k\eval \caseS{\consS{v_1}{v_2}}{e_0}{x}{y}{e_1};q\steps k\eval
              [v_1,v_2/x,y]e_1;q$.\\
            By the assumption, $\_\takes k\takes B\odot \Delta$,
            $\emptyctx;q\proves\caseS{\consS{v_1}{v_2}}{e_0}{x}{y}{e_1}:B\odot \Delta$.\\
            It suffices to show $\emptyctx;q\proves
              [v_1,v_2/x,y]e_1:B\odot \Delta$.\\
            Induct on $\emptyctx;q\proves\caseS{\consS{v_1}{v_2}}{e_0}{x}{y}{e_1}:B\odot \Delta$. Only 2 cases are possible:

            \begin{itemize}
              \item \textsc{T-caselist}: \\
                    By the assumption,
                    $x:\tau,y:\liT{\aty{\tau}{p}};q_2+p\proves  e_0:B\odot \Delta$, where
                    $\emptyctx;q_1\proves \consS{v_1}{v_2}:\liT{\aty{\tau}{p}}$, $q=q_1+q_2$.\\
                    Invert $\emptyctx;q_1\proves \consS{v_1}{v_2}:\liT{\aty{\tau}{p}}$ to get
                    $q_1=q_1'+q_2'+p$, $\emptyctx;q_1'\proves v_1:\tau$, $\emptyctx;q_2'\proves
                      v_2:\liT{\aty{\tau}{p}}$.\\
                    Using substitution lemma \ref{lem:subst}, $\emptyctx;q_2+p+q_1'+q_2'\proves [v_1,v_2/x,y]e_1:B\odot
                      p_0$, which is
                    $\emptyctx;q\proves[v_1,v_2/x,y]e_1:B\odot \Delta$.

              \item \textsc{T-WeakPot}: Analogous to other cases.
            \end{itemize}

      \item \textsc{D-pair}: $k\eval\casepairS{\pairS{v_1}{v_2}}{x_1}{x_2}{e};q\steps k\eval [v_1,v_2/x_1,x_2]e;q$.\\
            By the premise of \textsc{ST-exp}, $k\takes B\odot \Delta$,
            $\emptyctx;q\proves\casepairS{\pairS{v_1}{v_2}}{x_1}{x_2}{e}:B\odot \Delta$.\\
            It suffices to show $\emptyctx;q\proves
              [v_1,v_2/x_1,x_2]e:B\odot \Delta$.\\
            Induct on $\emptyctx;q\proves\casepairS{\pairS{v_1}{v_2}}{x_1}{x_2}{e}:B\odot \Delta$. Only 2 cases are possible:

            \begin{itemize}
              \item \textsc{T-casepair}: \\
                    By the premises,
                    $x_1:\tau_1, x_2 : \tau_2 ;q_2 \proves  e:B\odot \Delta$,
                    $\emptyctx;q_1\proves \pairS{v_1}{v_2}:\prodT{\tau_1}{\tau_2}$, and $q=q_1+q_2$.\\
                    By inversion on $\emptyctx;q_1\proves \pairS{v_1}{v_2}:\prodT{\tau_1}{\tau_2}$, we get
                    $q_1=q_1'+q_2'$, $\emptyctx;q_1'\proves v_1:\tau_1$, and $\emptyctx;q_2'\proves v_2:\tau_2$.\\
                    Using substitution lemma \ref{lem:subst}, $\emptyctx;q_2+q_1'+q_2'\proves [v_1,v_2/x_1,x_2]e:B \odot \Delta$ as desired, since $q = q_2 + q_1' + q_2'$.

              \item \textsc{T-WeakPot}: Analogous to other cases.
            \end{itemize}

      \item \textsc{D-inl}:  $k\eval\casesumS{\inlS{v}}{x_1}{e_1}{x_2}{e_2};q\steps k\eval [v/x_1]e_1;q$.\\
            By the premise of \textsc{ST-exp}, $k\takes B\odot \Delta$,
            $\emptyctx;q\proves \casesumS{\inlS{v}}{x_1}{e_1}{x_2}{e_2} :B\odot \Delta$.\\
            It suffices to show $\emptyctx;q\proves
              [v/x_1]e_1:B\odot \Delta$.\\
            Induct on $\emptyctx;q\proves\casesumS{\inlS{v}}{x_1}{e_1}{x_2}{e_2}:B\odot \Delta$. Only 2 cases are possible:

            \begin{itemize}
              \item \textsc{T-casesum}: \\
                    By the premises,
                    $x_1:\tau_1, ;p_1 + q_2 \proves e_1:B\odot \Delta$,
                    $\emptyctx;q_1\proves \inlS{v} : \sumT{\aty{\tau_1}{p_1}}{A_2}$, and $q=q_1+q_2$.\\
                    By inversion on $\emptyctx;q_1\proves \inlS{v} : \sumT{\aty{\tau_1}{p_1}}{A_2}$, we get
                    $q_1=p_1 + q_1'$ and $\emptyctx;q_1' \proves v:\tau_1$.\\
                    Using substitution lemma \ref{lem:subst}, $\emptyctx;p_1 + q_2 + q_1'\proves [v/x_1]e_1:B \odot \Delta$ as desired, since $q = p_1 + q_2 + q_1'$.

              \item \textsc{T-WeakPot}: Analogous to other cases.
            \end{itemize}

      \item \textsc{D-inr}: Symmetric to above

      \item \textsc{D-do}: $k\eval\doS{\ell}{v};q\steps k\exnret (\ell, v, \epsilon);q$\\
            By the premises of {\sc ST-Exp} on the left, $\cdot; q \vdash \doS{\ell}{v} : \exty{B}{\Delta}$ and $\_ \takes k \takes \exty{B}{\Delta}$.\\
            Induct on $\cdot; q \vdash \doS{\ell}{v} : \exty{B}{\Delta}$ to show that $\cdot; p \vdash v : \tau_1$ with $\effectT{\ell}{\aty{\tau_1}{q_1}}{\aty{\tau_2}{q_2}} \in \Delta$ and $p\leq q-q_1$. Only 2 cases are possible:
            \begin{itemize}
              \item \textsc{T-do}: Follows directly from its premises,  and \(q = p + q_1\).
              \item \textsc{T-WeakPot}: Analogous to other cases.
            \end{itemize}
            Let $k' = \epsilon$. By \textsc{K-Emp} and \textsc{ST-Eff}, $q\vdash k\exnret (l, v, \epsilon)$ as desired.

      \item \textsc{D-Capture}: $k;x.e\exnret (\ell, v, k');q\steps k\exnret (\ell, v, x.e ~@~ k');q$\\
            By the premises of {\sc ST-Eff} on the left, we have
            \begin{itemize}
              \item \(q=q_1+p\)
              \item \(\cdot;q_1 \vdash v:\tau \)
              \item \(\Delta' = \Delta, \effectT{\ell}{\aty{\tau}{p}}{C}\)
              \item \(\_ \takes k; x.e \takes \exty B {\Delta'}\)
              \item \(\exty{B}{\Delta'} \takes k' \takes \exty{C}{\Delta''}\)
            \end{itemize}
            The first 3 premises can be untouched. For the 4th premise, invert with {\sc K-Bnd} to get \(\_ \takes k \takes \exty {B_2} {\Delta'}\) where $x : \tau_2; q_2 \vdash e : \exty{B_2}{\Delta'}$ and $B = \aty{\tau_2}{q_2}$.  For the 5th premise, we first use {\sc K-Emp} and {\sc K-Bnd} to construct that the single-frame stack $x.e$ has typing $\exty{B_2}{\Delta'} \takes x.e \takes \exty{B}{\Delta'}$, then apply lemma \ref{lem:stack-append} between stacks $x.e$ and $k'$ to get that \(\exty{B_2}{\Delta'} \takes x.e ~@~ k' \takes \exty{C}{\Delta''}\).
            \begin{itemize}
              \item \(q=q_1+p\)
              \item \(\cdot;q_1 \vdash v:\tau \)
              \item \(\Delta' = \Delta, \effectT{\ell}{\aty{\tau}{p}}{C}\)
              \item \(\_ \takes k \takes \exty {B_2}{\Delta'}\)
              \item \(\exty{B_2}{\Delta'} \takes x.e ~@~ k' \takes \exty{C}{\Delta''}\)
            \end{itemize}
            Now with these 5 premises, we can apply {\sc ST-Eff} to type the right side.

      \item \textsc{D-Handle}:
            $k;\newhandlerS{\_}{\ell}{x}{c}{e_\ell}{\Delta}{y}{e_r} \exnret (\ell', v, k');q \steps$ $k\eval [v, d/x, c]e_{\ell'};q$\\
            when $\ell \in \Delta$ and where $d = \dcontS{\newhandlerS{\_}{\ell}{x}{c}{e_\ell}{\Delta}{y}{e_r}  ~@~ k'}$.\\
            By the premises of {\sc ST-Eff} on the left, we have
            (1)\ $q=q_1+p$,
            (2)\ $\cdot;q_1 \vdash v:\tau$,
            (3)\ $\effectT{\ell'}{\aty{\tau}{p}}{C} \in \Delta$,
            (4)\ $\_ \takes k; \newhandlerS{\_}{\ell}{x}{c}{e_\ell}{\Delta}{y}{e_r} \takes \exty{B'}{\Delta}$, and
            (5)\ $\exty{B'}{\Delta} \takes k' \takes \exty{C}{\Delta''}$.
            Invert (4) using {\sc K-Handler} to get
            (6) \(\_ \takes k \takes \exty {B} {\Delta_3}\),
            (7) for all $\effectT{\ell}{\aty{\tau_1}{q_1}}{C}$ in $\Delta$,  $x : \tau_1, c : \linfunT{C}{B}{\Delta_3}; q_1 \vdash e_\ell : \exty{B}{\Delta_3}$, and
            (8) $y : \tau'; q' \vdash e_r : \exty{B}{\Delta_3}$ such that $B' = \aty{\tau'}{q'}$.
            Next, we use {\sc K-Emp}, {\sc K-Handler}, (7), (8), and lemma \ref{lem:stack-append} to construct the stack typing \(\exty{B}{\Delta_3} \takes \newhandlerS{\_}{\ell}{x}{c}{e_\ell}{\Delta}{y}{e_r} ~@~ k' \takes \exty{C}{\Delta''}\). Then by {\sc T-Dcont}, we have (9) $\cdot; 0 \vdash d : \linfunT{C}{B}{\Delta_3}$.

            Now, we can apply the substitution lemma for both $x$ and $c$ in $e_\ell$. For $[v/x]$, we have (2). For $[{\sf dcont}(\ldots)/c]$, we have (9). For $e_\ell$, we have (7). Putting it all together, we have
            \[
              \cdot ; p + q_1 + 0 \vdash [v, d/x, c]e' : \exty B {\Delta_3}
            \]
            Finally, using $q = q_1 + p$ from (1), along with \(\_ \takes k \takes \exty {B} {\Delta_3}\) from (6), we can apply {\sc ST-Exp} to type the overall right hand side.

      \item \textsc{D-normal}: $k;\newhandlerS{\_}{\ell}{x}{c}{e_\ell}{\Delta}{y}{e_r} \return v;q\steps
              k\eval [v/y]e_r;q$.\\
            By the assumption, $q=q_1+q_2$, $\emptyctx;q_1\proves v:\tau$, $\_\takes
              k;\newhandlerS{\_}{\ell}{x}{c}{e_\ell}{\Delta}{y}{e_r}\takes
              \aty{\tau}{q_2}\odot \Delta$.\\
            Invert, get $\_\takes k\takes  \aty{\tau}{q_2}\odot \Delta$.\\
            By \textsc{ST-val}, $q_1+q_2=q\proves  k\return v$.

      \item \textsc{D-try}: $k\eval \newhandlerS{e}{\ell}{x}{c}{e_\ell}{\Delta}{y}{e_r};q\steps k;
              \newhandlerS{\_}{\ell}{x}{c}{e_\ell}{\Delta}{y}{e_r} \eval e;q$.\\
            By the assumption, $\_\takes k\takes B\odot \Delta$ and
            $\emptyctx;q\proves\newhandlerS{e}{\ell}{x}{c}{e_\ell}{\Delta}{y}{e_r}:B\odot \Delta$.\\
            Induct on
            $\emptyctx;q\proves\newhandlerS{e}{\ell}{x}{c}{e_\ell}{\Delta}{y}{e_r}:B\odot
              \Delta$ to show there exists $\Delta', q'$ such that $q \ge q'$, and
            $\emptyctx;q'\proves e_1:B\odot \Delta'$, and $\forall
              \effectT{\ell}{\aty{\tau_1}{q_1}}{\aty{\tau_2}{q_2}} \in \Delta', x :
              \tau_1, c : \linfunT{\aty{\tau_2}{q_2}}{A}{\Delta}; q_1 \vdash e_\ell
              : \exty{A}{\Delta}$, and $y:\tau\proves e_r:B\odot\Delta$\\
            Only 2 cases are possible:
            \begin{itemize}
              \item \textsc{T-Handle}: Immediate by the assumption when context is
                    empty, and choosing $q' = q$.


              \item \textsc{T-WeakPot}: By the assumption, for some $q'''$ such that
                    $q \ge q'''$, we have
                    $\emptyctx;q'''\proves\newhandlerS{e}{\ell}{x}{c}{e_\ell}{\Delta}{y}{e_r}:B\odot
                      \Delta$. Then apply the inner IH to obtain that there exists
                    $\Delta', q''$ such that $q''' \ge q''$, and
                    $\emptyctx;q'\proves e_1:B\odot \Delta'$, and $\forall
                      \effectT{\ell}{\aty{\tau_1}{q_1}}{\aty{\tau_2}{q_2}} \in
                      \Delta', x : \tau_1, c :
                      \linfunT{\aty{\tau_2}{q_2}}{A}{\Delta}; q_1 \vdash e_\ell :
                      \exty{A}{\Delta}$, and $y:\tau\proves e_r:B\odot\Delta$.
                    Now choose $q' = q''$, then by transitivity $q \ge q'$. The
                    remaining conditions immediately hold.
            \end{itemize}
            Then by \textsc{K-handler} and \textsc{ST-exp} and \textsc{T-WeakPot},
            $q\proves
              k;\newhandlerS{\_}{\ell}{x}{c}{e_\ell}{\Delta}{y}{e_r}\eval e$.

      \item \textsc{D-Dcont}: $k \eval \appS{\dcontS{k'}}{v}; q \steps k ~@~ k' \return v; q$\\
            By the premises of {\sc ST-Exp} on the left, $\cdot;q \vdash \appS{\dcontS{k'}}{v} : \exty B \Delta$ where $\_ \takes k \takes \exty B {\Delta}$. We want to show that $\cdot; q_1 \vdash v : \tau$ and $\_ \takes k ~@~ k' \takes \exty{\aty{\tau}{q_2}}{\Delta'}$ where $q = q_1 + q_2$.\\ Induct on $\cdot;q \vdash \appS{\dcontS{k'}}{v} : \exty B \Delta$. Only 2 cases are possible:
            \begin{itemize}
              \item \textsc{T-linapp}: By the premises, we have $\cdot; 0 \vdash \dcontS{k'} : \linfunT{\aty{\tau}{q_2}}{B}{\Delta}$ and $\cdot; q_1 \vdash v : \tau$ where $q = q_1 + q_2$. Invert on the first premise with {\sc T-Dcont} to get that $\exty{B}{\Delta} \takes k' \takes \exty{\aty{\tau}{q_2}}{\Delta'}$. Then by lemma \ref{lem:stack-append} between the typings of $k$ and $k'$, we have that $\_ \takes k ~@~ k' \takes \exty{\aty{\tau}{q_2}}{\Delta'}$.

              \item \textsc{T-WeakPot}: Analogous to other cases.
            \end{itemize}
            By {\sc ST-Val}, we get the right side as desired.
    \end{itemize}
  \end{proof}

  \begin{lemma}[Canonical Forms]
    If $\emptyctx;q\proves v:\tau$, and
    \begin{itemize}
      \item $\tau=\liT{A}$, then either $v=\nilS$ or $v=\consS{v_1}{v_2}$.
      \item $\tau = \prodT{\tau_1}{\tau_2}$, then $v = \pairS{v_1}{v_2}$
      \item $\tau = \sumT{A_1}{A_2}$, then either $v = \inlS{v_1}$ or $v = \inrS{v_2}$
      \item $\tau=\funset$, then $v=\funS{f}{x}{e'}$, or $\linfunS{x}{e}$, or $\dcontS{k}$.
    \end{itemize}
  \end{lemma}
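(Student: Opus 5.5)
We prove all clauses of the canonical forms lemma at once, by induction on the derivation of $\emptyctx;q\proves v:\tau$, with a case analysis on the last value-typing rule used. The value typing rules (\textsc{T-Var}, \textsc{T-Fun}, \textsc{T-Unit}, \textsc{T-Pair}, \textsc{T-Inl}, \textsc{T-Inr}, \textsc{T-Nil}, \textsc{T-cons}, \textsc{T-LinFun}, \textsc{T-Dcont}) are syntax-directed. Each rule fixes the outermost constructor of its value and the outermost type former of its conclusion. Moreover, the structural rules \textsc{T-ContFun}, \textsc{T-WeakFun} and \textsc{T-WeakPot} belong to the computation judgment only. Hence the last rule of a value derivation is always one of the syntax-directed value rules, and no inner induction over structural rules is needed, unlike in the preservation proof.

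We proceed rule by rule. \textsc{T-Var} cannot apply, because its context $x:\tau$ is nonempty while ours is $\emptyctx$. For each remaining rule we read off the type former of the conclusion and match it against the hypothesis on $\tau$.
\begin{itemize}
\item If $\tau=\liT{A}$, the only rules concluding a list type are \textsc{T-Nil} and \textsc{T-cons}, so $v=\nilS$ or $v=\consS{v_1}{v_2}$.
\item If $\tau=\prodT{\tau_1}{\tau_2}$, only \textsc{T-Pair} applies, so $v=\pairS{v_1}{v_2}$.
\item If $\tau=\sumT{A_1}{A_2}$, only \textsc{T-Inl} and \textsc{T-Inr} apply, so $v=\inlS{v_1}$ or $v=\inrS{v_2}$.
\item If $\tau$ is a function type, \textsc{T-Fun} concludes $\funset$ and yields $v=\funS{f}{x}{e'}$. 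The rules \textsc{T-LinFun} and \textsc{T-Dcont} conclude a linear arrow $\linfunT{A}{B}{\Delta}$ and yield $\linfunS{x}{e}$ and $\dcontS{k}$ respectively.
\end{itemize}
The types $\unitT$ and $\voidT$ require no clause.

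The one subtle point is the function clause. As literally stated, it lists the linear forms under $\tau=\funset$, but a set of structural arrow types is syntactically distinct from $\linfunT{A}{B}{\Delta}$. I read this clause as covering the function-like types $\funset$ and $\linfunT{A}{B}{\Delta}$ together, and the proof states it that way. In the more precise form, $\tau=\funset$ gives $v=\funS{f}{x}{e'}$, and a linear arrow type gives $v=\linfunS{x}{e}$ or $v=\dcontS{k}$. This is the form that the progress proof needs for the \textsc{T-App} and \textsc{T-Linapp} cases, where applying \textsc{D-fun}, \textsc{D-linfun} or \textsc{D-Dcont} requires knowing the shape of the function value.

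We also need to check that closedness really excludes variables. The premises of rules such as \textsc{T-Pair} split the context as $\Gamma_1,\Gamma_2$, and both parts must then be empty. So the inductive hypothesis never needs to handle open subterms, and the argument reduces to pure inversion. We expect no real obstacle beyond this bookkeeping.
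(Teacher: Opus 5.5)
Your proof is correct, and it is the standard argument. The paper states this lemma without proof, and the intended justification is exactly inversion on the value typing rules. Those rules are syntax-directed because the structural rules exist only at the computation level, and \textsc{T-Var} is ruled out by the empty context. Your observation about the function clause is also accurate: $\funset$ forces $\funS{f}{x}{e'}$, and a linear arrow forces $\linfunS{x}{e}$ or $\dcontS{k}$. That sharper split implies the literal statement, and it is what the \textsc{T-App} and \textsc{T-Linapp} cases of progress actually need.
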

  \begin{corollary}
    \label{lemma:canon-corollary}
    If $\emptyctx;q\proves e:B\odot \Delta$, and
    \begin{itemize}
      \item $e=\caseS{v}{e_0}{x}{y}{e_1}$, then either $v=\nilS$ or $v=\consS{v_1}{v_2}$.
      \item $e=\casepairS{v}{x_1}{x_2}{e'}$, then $v = \pairS{v_1}{v_2}$
      \item $e=\casesumS{v}{x_1}{e_1}{x_2}{e_2}$, then either $v = \inlS{v_1}$ or $v = \inrS{v_2}$
      \item $e=\appS{v_1}{v_2}$, then $v_1=\funS{f}{x}{e'}$, or $\linfunS{x}{e'}$, or $\dcontS{k}$.
    \end{itemize}
  \end{corollary}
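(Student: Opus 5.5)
The corollary follows from the Canonical Forms lemma once we pull out, from a typing derivation of each elimination form, a closed typing of its scrutinee value. The one point needing care is that the derivation of $\emptyctx;q\proves e:B\odot\Delta$ need not end in the syntax-directed rule: it may end in structural rules. So I would first prove an inversion statement by induction on the derivation. If $\emptyctx;q\proves e:B\odot\Delta$ and $e$ is one of the four elimination forms, then the principal value $v$ (resp.\ $v_1$) has a closed typing $\emptyctx;q'\proves v:\tau$, where $\tau$ is a list type, a product type, a sum type, or a function-like type, respectively.

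The induction runs as follows.
\begin{itemize}
\item If the last rule is the syntax-directed one (\textsc{T-Caselist}, \textsc{T-Casepair}, \textsc{T-Casesum}, \textsc{T-App} or \textsc{T-Linapp}), the desired typing is a premise. The context there is $\Gamma_1$, and $\Gamma_1,\Gamma_2=\emptyctx$ forces $\Gamma_1=\emptyctx$.
\item For \textsc{T-WeakPot}, apply the IH to the premise, which has the same empty context.
\item \textsc{T-ContFun} and \textsc{T-WeakFun} cannot conclude a judgment with an empty context, since their conclusions always contain a variable $x:\funset$. These cases are vacuous.
\end{itemize}

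With the inversion in hand, apply the Canonical Forms lemma to the extracted typing. For lists this gives $\nilS$ or $\consS{v_1}{v_2}$; for products, a pair; for sums, an injection. For application there are two subcases. Under \textsc{T-App} the type is $\funset$, and the fourth clause of Canonical Forms yields $\funS{f}{x}{e'}$, $\linfunS{x}{e'}$ or $\dcontS{k}$. Under \textsc{T-Linapp} the type is a linear arrow $\linfunT{A}{B}{\Delta}$, and only \textsc{T-LinFun} and \textsc{T-Dcont} conclude with such a type for a closed value, since \textsc{T-Var} needs a nonempty context. Either way we land in the allowed list. Reading $\appS{v_1}{v_2}$ in the statement as also covering $\linappS{v_1}{v_2}$ is harmless, because the conclusion is a disjunction that includes all three forms.

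The main obstacle is the bookkeeping of the structural rules, not the final case split. The substitution notation in \textsc{T-ContFun} could in principle produce any $e$, so the inversion lemma must be stated for an arbitrary empty-context judgment and argued on the form of the conclusion's context, not on the shape of $e$. The function case also needs care to make sure the linear and delimited-continuation values are accounted for via their own typing rules. Beyond that, everything is direct inversion.
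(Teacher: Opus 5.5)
Your proposal is correct and follows the paper's argument. The paper gives no separate proof of this corollary, but its Progress proof uses exactly ``type inversion and the canonical forms lemma,'' and your induction makes that inversion precise: you peel off \textsc{T-WeakPot}, dismiss \textsc{T-ContFun}/\textsc{T-WeakFun} because an empty context rules them out, and then apply Canonical Forms. Your extra treatment of \textsc{T-Linapp} goes beyond what the statement strictly needs, but it is harmless and handles the paper's own conflation of $\abt{app}$ and $\abt{linapp}$ cleanly.
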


  \begin{theorem}[Progress]
    If $q\proves s$, then either $s$ final or $s;q\steps s';q'$.
  \end{theorem}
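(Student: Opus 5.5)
We prove Progress by case analysis on the last rule of the state typing derivation $q \proves s$, in each case exhibiting either finality or an applicable transition. There are three cases: \textsc{ST-Val}, \textsc{ST-Eff} and \textsc{ST-Exp}. In the two propagation-style cases the argument goes by the shape of the stack $k$; in the evaluation case it goes by inversion on the computation typing, and this is where the real work lies.

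\textbf{Case \textsc{ST-Val}}, $s = k \return v$. If $k=\emptystk$, the state is final by \textsc{D-final}. If $k = k_1; x.e$, then \textsc{D-seq} steps to $k_1 \eval [v/x]e$. If $k = k_1;\newhandlerS{\_}{\ell}{x}{c}{e_\ell}{\Delta}{y}{e_r}$, then \textsc{D-normal} applies.

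\textbf{Case \textsc{ST-Eff}}, $s = k \exnret (\ell', v, k')$. This is handled exactly as the case already presented in the main text. If $k = \emptystk$, the state is final. A binder frame steps by \textsc{D-Capture}. A handler frame steps by \textsc{D-handle}; its side condition $\ell' \in \Delta$ follows from the premises $\effectT{\ell'}{\aty{\tau}{p}}{C} \in \Delta'$ and $\_ \takes k \takes \exty{B}{\Delta'}$, since inverting \textsc{K-Handler} forces $\Delta' = \Delta$.

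\textbf{Case \textsc{ST-Exp}}, $s = k \eval e$ with $\cdot; q \proves e : \exty{B}{\Delta}$. We proceed by induction on this typing derivation.
\begin{itemize}
\item The structural rule \textsc{T-WeakPot} reduces to the induction hypothesis, since the expression and stack are unchanged.
\item \textsc{T-ContFun} and \textsc{T-WeakFun} cannot end a derivation with the empty context.
\item \textsc{T-Val} steps by \textsc{D-ret}.
\item \textsc{T-Let} steps by \textsc{D-let}.
\item \textsc{T-Do} steps by \textsc{D-do}.
\item \textsc{T-Handle} steps by \textsc{D-try}.
\item For case analyses on values (\textsc{T-Caselist}, \textsc{T-Casepair}, \textsc{T-Casesum}) and applications (\textsc{T-App}), Corollary~\ref{lemma:canon-corollary} gives the canonical shape of the closed scrutinee or function. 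We then fire \textsc{D-nil}/\textsc{D-cons}, \textsc{D-pair}, \textsc{D-inl}/\textsc{D-inr}, or \textsc{D-fun} respectively.
\item For \textsc{T-Linapp}, the value $v_1$ has a linear function type and is closed. It must be $\linfunS{x}{e}$ or $\dcontS{k'}$, since a variable is impossible and no other value rule yields a $\multimap$ type. The step is then \textsc{D-linfun} or \textsc{D-Dcont}. The canonical forms lemma should be read as covering linear arrow types for this purpose.
\item \textsc{T-Casevoid} is vacuous. A closed value of type $\voidT$ would need a typing derivation, and no value rule concludes $\voidT$.
\end{itemize}

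The main obstacle is the tick case. \textsc{D-tick} has the premise $p \geq q$, so the available resource must cover the tick. Suppose $e = \tickS r$. We invert the typing derivation, inducting through \textsc{T-WeakPot}, to show that the state potential $q$ satisfies $q \geq r$ whenever $r \ge 0$. For \textsc{T-Tick$^+$} we have $q = r + p'$ with $p' \ge 0$. For \textsc{T-WeakPot} the claim follows from the induction hypothesis by transitivity. For negative ticks, \textsc{T-Tick$^-$}, the premise $q \geq r$ holds trivially because $r < 0 \le q$. This is precisely the point where soundness of the bound is used: the typing guarantees that \textsc{D-tick} is enabled, so the machine never gets stuck for lack of resources.
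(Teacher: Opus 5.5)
Your proof is correct and takes essentially the same route as the paper: case analysis on the state typing, a split on the top stack frame for \textsc{ST-Val} and \textsc{ST-Eff}, and for \textsc{ST-Exp} canonical forms plus an inversion through \textsc{T-WeakPot} showing $q \ge r$ in the tick case. You induct on the typing derivation where the paper inducts on the structure of $e$, which is immaterial, and you are slightly more careful than the paper in several places: you derive $\ell' \in \Delta$ by inverting \textsc{K-Handler}, you note that \textsc{T-Casevoid} is vacuous (the paper gives no step rule for it), you treat \textsc{T-Tick$^-$} and \textsc{T-Linapp} explicitly, and you rely only tacitly on the fact that in the \textsc{T-WeakPot} case a step available at potential $q' \le q$ is also available at $q$, which holds because only \textsc{D-tick} inspects the resource.
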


  \begin{proof}
    Induct on $q\proves s$.
    \begin{itemize}
      \item \textsc{ST-val}: $q\proves k\return v$.\\
            Induct on structure of k.
            \begin{itemize}
              \item $k=\emptystk$: $k\return v$ is final.
              \item $k=k';x.e$: By \textsc{D-seq}, $k';x.e\return v;q\steps
                      k'\eval[v/x]e;q$.
              \item $k=k';\newhandlerS{\_}{\ell}{x}{c}{e_\ell}{\Delta}{y}{e_r};$: By \textsc{D-normal},
                    $k';\newhandlerS{\_}{\ell}{x}{c}{e_\ell}{\Delta}{y}{e_r}\return
                      v;q\steps k'\eval [v/y]e_r;q$.
            \end{itemize}

      \item {\sc ST-Eff}: $q \vdash k\exnret (\ell', v, k')$\\
            Induct on structure of k.
            \begin{itemize}
              \item $k=\emptystk$: $k \exnret (\ell', v, k')$ is final

              \item $k=k_1;x.e$: By \textsc{D-Capture}, $k_1;x.e\exnret (\ell', v, k');q \steps k_1 \exnret (\ell', v, x.e ~@~ k');q$

              \item $k=k_1;\newhandlerS{\_}{\ell}{x}{c}{e_\ell}{\Delta}{y}{e_r}$: Our type system
                    guarantees that $\ell' \in \Delta$. Then by \textsc{D-handle}, $k_1;\newhandlerS{\_}{\ell}{x}{c}{e_\ell}{\Delta}{y}{e_r}\exnret (\ell', v, k');q\\
                      \steps k_1 \eval [v,
                        \dcontS{\newhandlerS{\_}{\ell}{x}{c}{e_\ell}{\Delta}{y}{e_r} ~@~ k'}/x, c]e_{\ell'};q$
            \end{itemize}

      \item \textsc{ST-exp}: $q\proves k\eval e$.\\
            By the premises, $\emptyctx;q\proves e:B\odot \Delta$ and $\_ \takes k\takes B\odot \Delta$. Induct on the structure of $e$.
            \begin{itemize}
              \item $e=\caseS{v}{e_0}{x}{y}{e_1}$\\
                    By type inversion and the canonical forms lemma, either $v=\nilS$ or $v=\consS{v_1}{v_2}$.
                    \begin{itemize}
                      \item $v = \nilS$: by \textsc{D-nil}, $k\eval \caseS{\nilS}{e_0}{x}{y}{e_1};q\steps k\eval e_0;q$
                      \item $v=\consS{v_1}{v_2}$: by \textsc{D-cons}, $k\eval \caseS{\consS{v_1}{v_2}}{e_0}{x}{y}{e_1};q\steps k\eval
                              [v_1,v_2/x,y]e_1;q$
                    \end{itemize}

              \item $e$ is some other non-$\tickS{r}$ expression. These cases follow similarly to $e=\caseS{v}{e_0}{x}{y}{e_1}$. For each, first apply the canonical forms lemma/corollary, followed by one of the rules: \textsc{D-ret},
                    \textsc{D-fun},\textsc{D-let}, \textsc{D-seq}, \textsc{D-pair}, \textsc{D-inl}, \textsc{D-inr},
                    \textsc{D-do}, \textsc{D-try}, \textsc{D-linfun}, \textsc{D-Dcont}. This will end up showing that $k\eval e;q\steps s';q$ for some $s'$

              \item $e=\tickS{r}$.\\
                    Induct on $\emptyctx;q\proves \tickS{r}:B\odot \Delta$ to show $q\geq r$. Only 2 cases are possible:
                    \begin{itemize}
                      \item \textsc{T-tick}: $q=r$.


                      \item \textsc{T-WeakPot}: $q\geq r$ by the IH.
                    \end{itemize}
                    $q\geq r$ then by \textsc{D-tick}, $k\eval\tickS{r};q\steps k\return
                      \trivS;q-r$.
            \end{itemize}

    \end{itemize}
  \end{proof}
\end{appendices}